\begin{document}
\title{Multiuser MISO Beamforming for Simultaneous Wireless Information and Power Transfer}

\author{Jie Xu, Liang Liu, and Rui Zhang
\thanks{This paper was presented
in part at the IEEE International Conference on Acoustics, Speech, and Signal
Processing (ICASSP), Vancouver, Canada, May 26-31, 2013 \cite{XuLiuZhangConf}.}
\thanks{J. Xu and L. Liu are with the Department of
Electrical and Computer Engineering, National University of
Singapore (email: elexjie@nus.edu.sg, liu\_liang@nus.edu.sg).}
\thanks{R. Zhang is with
the Department of Electrical and Computer Engineering, National
University of Singapore (e-mail: elezhang@nus.edu.sg). He is also
with the Institute for Infocomm Research, A*STAR, Singapore.}}
\maketitle

\begin{abstract}\label{sec:abstract}
Simultaneous wireless information and power transfer (SWIPT) is anticipated to {\color{black}{have abundant applications in future wireless networks}} by providing wireless data and energy access at the same time. In this paper, we study a multiuser multiple-input single-output (MISO) broadcast SWIPT system, where a multi-antenna access point (AP) sends wireless information and energy simultaneously via spatial multiplexing to multiple single-antenna receivers each of which implements information decoding (ID) or energy harvesting (EH). We aim to maximize the weighted sum-power transferred to all EH receivers subject to a given set of minimum signal-to-interference-and-noise ratio (SINR) constraints at different ID receivers. In particular, we consider two types of ID receivers (referred to as Type I and Type II, respectively) without or with the capability of cancelling the interference from ({\emph{a priori}} known) energy signals. For each type of ID receivers, we formulate the joint information and energy transmit beamforming design as a non-convex quadratically constrained quadratic program (QCQP). First, we obtain the globally optimal solutions for our formulated QCQPs by applying an optimization technique so-called semidefinite relaxation (SDR). It is shown via SDR that {\color{black}under the condition of independently distributed user channels,} no dedicated energy beam is used for the case of Type I ID receivers to achieve the optimal solution; while for the case of Type II ID receivers, employing no more than one energy beam is optimal. Next, in order to provide further insight to the optimal design, we establish a new form of the celebrated uplink-downlink duality for our studied downlink beamforming problems, and thereby develop alternative algorithms to obtain the same optimal solutions as by SDR. Finally, numerical results are provided to evaluate the performance of proposed optimal beamforming designs for MISO SWIPT systems, as compared to other heuristically designed schemes.
\end{abstract}

\begin{keywords}
Simultaneous wireless information and  power transfer (SWIPT), energy harvesting, energy beamforming, semidefinite relaxation (SDR), uplink-downlink duality.
\end{keywords}

\IEEEpeerreviewmaketitle
\setlength{\baselineskip}{1\baselineskip}
\newtheorem{definition}{\underline{Definition}}[section]
\newtheorem{fact}{Fact}
\newtheorem{assumption}{Assumption}
\newtheorem{theorem}{\underline{Theorem}}[section]
\newtheorem{lemma}{\underline{Lemma}}[section]
\newtheorem{corollary}{\underline{Corollary}}[section]
\newtheorem{proposition}{\underline{Proposition}}[section]
\newtheorem{example}{\underline{Example}}[section]
\newtheorem{remark}{\underline{Remark}}[section]
\newtheorem{algorithm}{\underline{Algorithm}}[section]
\newcommand{\mv}[1]{\mbox{\boldmath{$ #1 $}}}

\section{Introduction}\label{sec:introduction}
Energy harvesting from the environment is a promising solution to provide cost-effective and perpetual power supplies for wireless networks. Besides other well known environmental sources such as wind and solar power, ambient radio signals is a viable new source for energy harvesting. On the other hand, radio signals have been widely used for wireless information transmission. As a result, a unified study on simultaneous wireless information and power transfer (SWIPT) has recently drawn significant attention, which is not only theoretically intricate but also practically valuable for enabling both the wireless data and wireless energy access to mobile terminals at the same time.

There have been a handful of prior studies on SWIPT in the literature (see e.g. \cite{Varshney2008,GroverSahai2010,LiuZhangChua2012,ZhouZhangHo2012,ZhangHo2012,XiangTao2012,ChaliseZhangAmin2012,JuZhang2013}). In \cite{Varshney2008}, SWIPT in a point-to-point single-antenna additive white Gaussian noise (AWGN) channel was first studied from an information-theoretic standpoint. This work was then extended to frequency-selective AWGN channels in \cite{GroverSahai2010}, where a non-trivial tradeoff between information rate and harvested energy was shown by varying power allocation over frequency. The authors in \cite{LiuZhangChua2012} studied SWIPT for fading AWGN channels subject to time-varying co-channel interference, and proposed a new principle termed ``opportunistic energy harvesting'' where the receiver switches between harvesting energy and decoding information based on the wireless channel condition and interference power level. In \cite{ZhouZhangHo2012}, various practical receiver architectures for SWIPT were investigated, where a new integrated information and energy receiver design was proposed. Moreover, motivated by the great success of multi-antenna techniques in wireless communication, SWIPT for multiple-input multiple-output (MIMO)  channels has been investigated in \cite{ZhangHo2012,XiangTao2012,ChaliseZhangAmin2012,JuZhang2013}. In \cite{ZhangHo2012}, Zhang and Ho first investigated SWIPT for the MIMO broadcast channel (BC) with a multi-antenna transmitter sending information and energy simultaneously to one pair of energy receiver and information receiver, each with single or multiple antennas. Under two practical setups where information and energy receivers are either separated or co-located, the optimal precoder designs were developed to achieve various information and energy transmission tradeoffs. The study in \cite{ZhangHo2012} was also extended to the cases with imperfect channel state information (CSI) at the transmitter in \cite{XiangTao2012} and  MIMO relay broadcast channels in \cite{ChaliseZhangAmin2012}. In \cite{JuZhang2013}, a transmitter design based on random beamforming was proposed for a multiple-input single-output (MISO) SWIPT system with artificial channel fading generated for improving the performance of opportunistic information decoding (ID) versus energy harvesting (EH) over quasi-static channels, when the CSI was not available at the transmitter.

\begin{figure}
\centering
 \epsfxsize=1\linewidth
    \includegraphics[width=8cm]{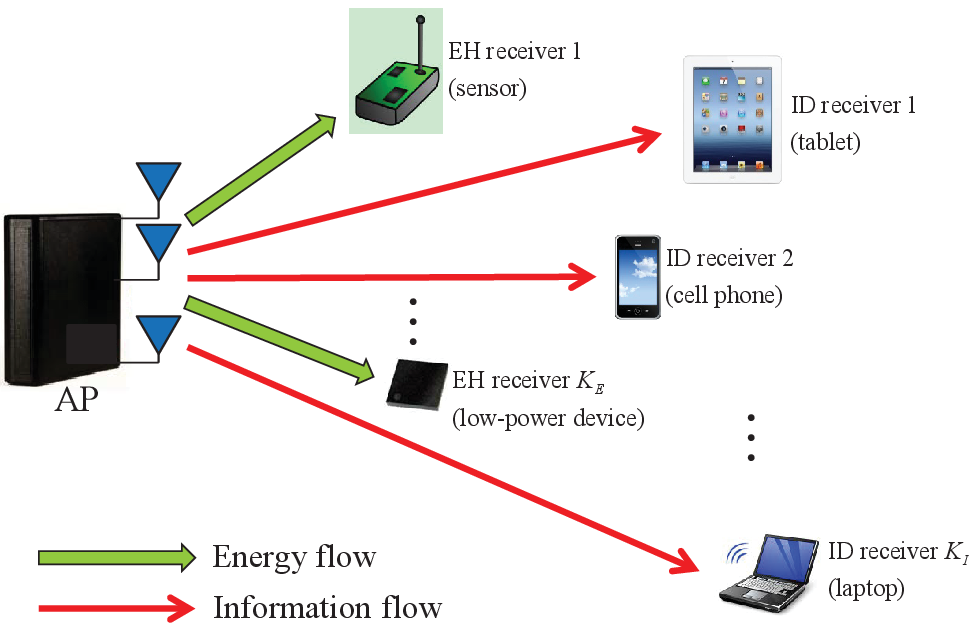}
\caption{A MISO broadcast system for simultaneous wireless information and power transfer (SWIPT), where EH receivers are close to the AP for effective energy reception.} \label{fig:1}
\end{figure}

Despite of the above theoretical advance, in order to implement SWIPT systems in practice, many challenging issues still remain to be addressed. For example, conventional wireless information and energy receivers were separately designed to operate with very different power requirements (e.g., an EH receiver for a low-power sensor requires a received power of $0.1$ mW or $-10$ dBm for real-time operation, while ID receivers such as cellular and Wi-Fi mobile receivers often operate with a received power less than $-50$ dBm \cite{ZhangHo2012}). As a result, existing EH circuits for radio signals are not yet able to be used for ID directly and vice versa. This thus motivates our work in this paper to investigate a practical design of the MISO broadcast system for SWIPT as shown in Fig. \ref{fig:1}, where a multi-antenna access point (AP) transmits simultaneously to multiple single-antenna receivers each of which implements EH or ID, but {\it not both} at the same time. In particular, we consider a receiver location based transmission scheme, where the EH receivers (e.g., sensors and other low-power devices) are deployed sufficiently close to the AP, while the ID receivers (e.g., tablet, cell phone and laptop) can be located more distant from the AP. Notice that the proposed transmission scheme resolves the mismatched power issue for EH and ID receivers as mentioned above, and thus makes the SWIPT system realizable with existing EH and ID receivers. Also note that the location based transmission should be designed in practice by taking into account the potential mobility of receivers to ensure certain fairness in energy and information delivery over time. Under this setup, we aim to jointly design the beamforming weights and power allocation at the transmitter to optimally balance the performance tradeoffs among different information/energy receivers. Specifically, we study the joint information and energy transmit beamforming design to maximize the weighted sum-power transferred to all EH receivers subject to a given set of minimum signal-to-interference-and-noise ratio (SINR) constraints at different ID receivers. In particular, we consider two types of ID receivers, namely {\it Type I} and {\it Type II} receivers, which {\it do not possess} and {\it possess} the capability of cancelling the interference from simultaneously transmitted energy signals ({\color{black}whose waveforms} are assumed to be {\it a priori} known at the transmitter and all Type II ID receivers), respectively. For each type of ID receivers, the design problem is formulated as a quadratically constrained quadratic program (QCQP), which is non-convex and thus difficult to be solved optimally by standard convex optimization techniques \cite{LuoMaSo2010}.

First, we obtain the optimal solutions to the formulated non-convex QCQPs for two types of ID receivers by applying an optimization technique so-called semidefinite relaxation (SDR), {\color{black}{and show that under the condition of independently distributed user channels, the SDRs are tight for the formulated non-convex QCQPs.}} It is revealed that for the case of Type I ID receivers, no dedicated energy beam is used to achieve the optimal solution, while for the case of Type II ID receivers, employing no more than one energy beam is optimal. It is worth noting that SDR has been widely applied in the MISO broadcast channel (see e.g. \cite{LuoMaSo2010,Gershman2010} and the references therein) to obtain efficient (and even optimal under certain conditions) beamforming solutions for various information transmission problems; however, the existing results are not directly applicable to our newly formulated problems with the joint energy and information beamforming optimization.

Next, in order to gain further insight to the optimal joint energy and information beamforming design, we reformulate the QCQP problem for each ID receiver type to an equivalent transmit power minimization problem for the MISO-BC with information transmission only {\color{black}{by leveraging the fact that the SDRs are tight for both QCQPs, based upon which we establish a new form of the celebrated ``uplink-downlink'' duality result}}. By applying the new duality, we develop alternative algorithms based on iterative uplink and downlink transmit optimization to obtain the same optimal downlink beamforming solutions as by SDR. It is worth noting that the uplink-downlink duality has been extensively investigated in the literature to solve non-convex transmit beamforming optimization problems in MISO/MIMO BCs for e.g. SINR balancing in \cite{SchubertBoche2004}, transmit power minimization in \cite{WieselEldar2006,YuLan2007}, and capacity region computation in  \cite{ZhangZhangLiang2012}. Especially, for transmit power minimization in MISO-BC with given SINR constraints for information transfer only, it was shown in \cite{YuLan2007} that the downlink beamforming  problem can be transformed into its dual multiple-access-channel (MAC) problem with an equivalent noise at the receiver characterized by a certain positive semidefinite covariance matrix, which is then solved by applying a fixed-point iteration \cite{Yates1995}. In this paper, this particular type of uplink-downlink duality is extended to the more challenging case when the equivalent noise covariance matrix in the dual MAC is not necessarily positive semidefinite,\footnote{Notice that the covariance matrix of any practical noise cannot be non-positive semidefinite; however, this does not contradict the non-positive semidefinite noise in our case since it is just a mathematical equivalence, and thus needs not be physically realizable.} as a result of the new consideration of joint information and energy transmission, {\color{black}{which renders a non-convex beamforming problem that maximizes a convex (quadratic) objective function. To the authors' best knowledge, the uplink-downlink duality for this new setup has not been studied before.}}

The remainder of this paper is organized as follows. Section \ref{sec:system} introduces the system model and problem formulations. Sections \ref{sec:SDP} and \ref{sec:BC-MAC} present the optimal solutions for the formulated problems based on the approaches of SDR and uplink-downlink duality, respectively. Section \ref{sec:simulation} provides numerical examples to validate our results and compare the performances. Finally, Section \ref{sec:conclusions} concludes the paper.

{\it Notations:} Boldface letters refer to vectors (lower  case) or matrices (upper case). For a square matrix $\mv{S}$, ${\mathtt{tr}}(\mv{S})$ and $\mv{S}^{-1}$ denote  its trace  and inverse, respectively, while $\mv{S}\succeq \mv{0}$, $\mv{S}\preceq \mv{0}$, $\mv{S}\prec \mv{0}$ and $\mv{S}\nsucceq \mv{0}$ mean that $\mv{S}$ is positive semidefinite, negative semidefinite, negative definite and non-positive semidefinite, respectively. For an arbitrary-size matrix $\mv{M}$, ${\mathtt{rank}}(\mv{M})$, $\mv{M}^\dagger$, $\mv{M}^H$, and $\mv{M}^T$ denote the rank, pseudoinverse, conjugate transpose and transpose of $\mv{M}$, respectively, and $\mv{M}_{ik}$ denotes the element in the $i$th row and $k$th column of $\mv{M}$. $\mv{I}$ and $\mv{0}$ denote an identity
matrix and an all-zero matrix, respectively, with appropriate
dimensions. The distribution of a circularly symmetric complex Gaussian (CSCG) random vector with mean vector $\mv{x}$ and covariance matrix $\mv{\Sigma}$ is denoted by $\mathcal{CN}(\mv{x,\Sigma})$; and $\sim$ stands for ``distributed as''. $\mathbb{C}^{x\times y}$ denotes the space of $x\times y$ complex matrices. $\mathbb{R}$ denotes the set of real numbers. ${\mathbb{E}}(\cdot)$ denotes the statistical expectation. $\|\mv{x}\|$ denotes the Euclidean norm of a complex vector $\mv{x}$, and $|z|$ denotes the magnitude of a complex number $z$.  $\rho(\mv{B})$ denotes the spectral radius of a matrix $\mv{B}$, which is defined as the maximum absolute value of the eigenvalues of $\mv{B}$. For two real vectors $\mv{x}$ and
$\mv{y}$, $\mv{x}\ge \mv{y}$ means that $\mv{x}$ is greater than or equal to $\mv{y}$ in a
component-wise manner.

\section{System Model and Problem Formulation}\label{sec:system}

We consider a multiuser MISO downlink system for SWIPT over one single frequency band as shown in Fig. \ref{fig:1}. It is assumed that there are $K_I$ ID receivers and $K_{E}$  EH receivers, denoted by the sets $\mathcal{K_I}=\{1,\ldots,K_I\}$ and $\mathcal{K_E} = \{1,\ldots,K_E\}$, respectively. Also assume that the AP is equipped with $M$ antennas, $M > 1$, and each receiver is equipped with one single antenna. In this paper, we consider linear precoding at the transmitter for SWIPT and each ID/EH receiver is assigned with one dedicated information/energy transmission beam without loss of generality. Hence, the transmitted signal from the AP is given by
\begin{align}
\mv{x} = \sum\limits_{i\in {\mathcal{K_I}}}{\mv w}_i s_i^{\rm{ID}} +\sum\limits_{j\in {\mathcal{K_E}}}{\mv v}_j s_j^{\rm{EH}}, \label{equa:jnl:1}
\end{align}
where ${\mv w}_i\in {\mathbb C}^{M\times 1}$ and ${\mv v}_j\in {\mathbb C}^{M\times 1}$ are the beamforming vectors for ID receiver $i$ and EH receiver $j$, while $s_i^{\rm{ID}}$ and $s_j^{\rm{EH}}$ are the information-bearing signal for ID receiver $i$ and energy-carrying signal  for EH receiver $j$, respectively.  For information signals, Gaussian inputs are assumed, i.e., $s_i^{\rm{ID}}$'s are independent and identically distributed (i.i.d.) CSCG random variables with zero mean and unit variance denoted by $s_i^{\rm{ID}} \sim \mathcal{CN}(0,1), \forall i\in \mathcal{K_{I}}.$ For energy signals, since $s_j^{\rm{EH}}$ carries no information, it can be any arbitrary random signal provided that its power spectral density satisfies certain regulations on microwave radiation. Without loss of generality,  we assume that  $s_j^{\rm{EH}}$'s are independent white sequences from an arbitrary distribution with  $\mathbb{E}\left(|s_j^{\rm{EH}}|^2\right)=1,\forall j\in \mathcal{K_{E}}$. Suppose that the AP has a transmit sum-power constraint $P$; from (\ref{equa:jnl:1}) we thus have
$\mathbb{E}(\mv{x}^H\mv{x}) = \sum\limits_{i\in {\mathcal{K_I}}}\|{\mv w}_i \|^2 +\sum\limits_{j\in {\mathcal{K_E}}}\|{\mv v}_j\|^2 \le P$.

We assume a quasi-static fading environment and denote ${\mv h}_i \in {\mathbb C}^{1\times M}$ and ${\mv g}_j \in {\mathbb C}^{1\times M}$ as the channel vectors from the AP to ID receiver $i$ and EH receiver $j$, respectively, {\color{black}where $\|{\mv h}_i\|^2 = \sigma_{{h,i}}^2$ and $\|{\mv g}_j\|^2 = \sigma_{{g,j}}^2$ with $\sigma_{{g,j}}^2 \gg \sigma_{{h,i}}^2, \forall i\in\mathcal{K_I},j\in\mathcal{K_E}$ (to be consistent with our proposed distance-based information/energy transmission scheme; see Fig. \ref{fig:1}). We also make the following assumptions throughout the paper on the channel independence of different users, which are valid for practical wireless channels in e.g. rich-scattering environments.}
\begin{assumption}[independently distributed user channels]\label{assumption1}
{\color{black}The channel vector ${\mv h}_i$'s and ${\mv g}_j$'s are independently drawn from a set of continuous distribution function $f_{h_i}({\mv h}_i)$'s and $f_{g_j}({\mv g}_j)$'s, respectively, $i\in\mathcal{K_I}, j\in\mathcal{K_E}$. Furthermore, we assume that for any $d\times M$ matrix $\mv{F}$ with $0 < d \le M$, in which the $d$ row vectors constitute any subset of channel vectors from ${\mv h}_i$'s and ${\mv g}_j$'s, it holds with probability one that: i) ${\mathtt{rank}}\left(\mv{F}\right) = d$; and ii) the $d$ (ordered) non-zero singular values of $\mv{F}$, denoted by ${\tau_1}, \cdots,\tau_d$, are strictly decreasing, i.e., ${\tau_1} > \cdots >\tau_d >0$.}
\end{assumption}It is further assumed that the AP  knows perfectly the instantaneous values of $\mv{h}_i$'s and ${\mv g}_j$'s, and each receiver knows its own instantaneous channel.{\footnote{This requires each receiver to perform channel estimation followed by channel feedback to the transmitter, which consumes additional energy. {\color{black}In practice, there exists a design tradeoff at the EH receiver: more accurate channel estimation and feedback may lead to higher harvested energy due to the transmit beamforming gain, but also induce higher energy consumption that can even offset the harvested energy gain  (for detailed discussions on this issue, please refer to \cite{XuZhangICASSP2014}).} For simplicity, we assume in this paper that such energy consumption at EH receivers is negligible compared to their harvested energy.}} The discrete-time baseband signal at the $i$th ID receiver is thus given by
\begin{align}
{y}_i^{\rm{ID}} = {\mv h}_i\mv{x} + z_i,\ \forall i\in \mathcal{K_{I}}, \label{equa:jnl:2}
\end{align}
where $z_i\sim \mathcal{CN}(0,\sigma_i^2)$ is the i.i.d. Gaussian noise at the $i$th ID receiver. With linear transmit precoding, each ID receiver is interfered with by all other non-intended information beams and energy beams. Since energy beams carry no information but instead pseudorandom signals {\color{black}{whose waveforms}} can be assumed to be known at both the AP and each ID receiver prior to data transmission, their resulting interference can be cancelled at each ID receiver if this additional operation is implemented. We thus consider two types of ID receivers, namely Type I and Type II ID receivers, which do not possess and possess the capability of cancelling the interference due to energy signals, respectively. {\color{black}{Furthermore, we assume that the interference precancellation at type II ID receivers is perfect. This assumption is practically valid since each ID receiver knows its own instantaneous channel, and the received energy signals at each ID receiver have the similar dynamic range as the information signals by propagating through the same wireless channel.}} Therefore, for the $i$th ID receiver of Type I or Type II, the corresponding SINR is accordingly expressed as
\begin{align}
\mathtt{SINR}^{(\mathrm{I})}_i&=\frac{|{\mv h}_i{\mv w}_i|^2}{\sum\limits_{k\neq i,k\in\mathcal{K_I}}|{\mv h}_i{\mv w}_k|^2 + \sum\limits_{j\in\mathcal{K_E}}|{\mv h}_i{\mv v}_j|^2 +\sigma_i^2}, \forall i\in \mathcal{K_{I}},  \label{equa:jnl:3}
\end{align}
\begin{align}
\mathtt{SINR}^{(\mathrm{II})}_i=\frac{|{\mv h}_i{\mv w}_i|^2}{\sum\limits_{k\neq i,k\in\mathcal{K_I}}|{\mv h}_i{\mv w}_k|^2  +\sigma_i^2},\ \forall i\in \mathcal{K_{I}}. \label{equa:jnl:4}
\end{align}

On the other hand, for wireless energy transfer, due to the broadcast property of wireless channels, the energy carried by all information and energy beams, i.e., both ${\mv w}_i$'s and ${\mv v}_j$'s, can be harvested at each EH receiver.
As a result, the harvested power for the $j$th EH receiver, denoted by $Q_j$, is proportional to the  total power received \cite{ZhangHo2012}, i.e.,
\begin{align}
Q_j= \zeta\left(\sum\limits_{k\in\mathcal{K_I}}|{\mv g}_j{\mv w}_k|^2  + \sum \limits_{k\in\mathcal{K_E}}|{\mv g}_j{\mv v}_k|^2\right),\ \forall j\in \mathcal{K_{E}},\label{equa:jnl:5}
\end{align}
where $0 < \zeta \le 1$ denotes the energy harvesting efficiency.

Our aim is to maximize the weighted sum-power transferred to all EH receivers subject to individual SINR constraints at different ID receivers, given by $\gamma_i,  i\in\mathcal{K_I}$. Denote $\alpha_j$ as the given energy weight for EH receiver $j$, $\alpha_j\ge 0$, where larger weight value of $\alpha_j$ indicates higher priority of transferring energy to EH receiver $j$ as compared to other EH receivers. Define ${\mv G} = \zeta\sum \limits_{j\in\mathcal{K_E}}\alpha_j {\mv g}_j^H{\mv g}_j$. Then from (\ref{equa:jnl:5}) the  weighted sum-power harvested by all EH receivers can be expressed as
$
\sum \limits_{j\in\mathcal{K_E}} \alpha_j Q_j =  \sum\limits_{i\in\mathcal{K_I}}{\mv w}_i^H{\mv G}{\mv w}_i  + \sum\limits_{j\in\mathcal{K_E}}{\mv v}_j^H{\mv G}{\mv v}_j$. The design problems by assuming that all ID receivers are of either Type I or Type II are thus formulated accordingly as follows.
\begin{align*}
{\mathtt{(P1)}}:\mathop\mathtt{max}_{\{{\mv w}_i\},\{{\mv v}_{j}\}} ~& \sum\limits_{i\in\mathcal{K_I}}{\mv w}_i^H{\mv G}{\mv w}_i  + \sum\limits_{j\in\mathcal{K_E}}{\mv v}_j^H{\mv G}{\mv v}_j  \\
{\mathtt{s.t.}}~~~& \mathtt{SINR}^{(\mathrm{I})}_i \ge \gamma_i,\ \forall i\in\mathcal{K_I} \\
& \sum \limits_{i\in\mathcal{K_I}}\|{{\mv w}_i}\|^2 + \sum \limits_{j\in\mathcal{K_E}}\|{\mv v}_j\|^2 \le P.
\end{align*}
\begin{align*}
{\mathtt{(P2)}}:\mathop\mathtt{max}_{\{{\mv w}_i\},\{{\mv v}_{j}\}} ~&  \sum\limits_{i\in\mathcal{K_I}}{\mv w}_i^H{\mv G}{\mv w}_i  + \sum\limits_{j\in\mathcal{K_E}}{\mv v}_j^H{\mv G}{\mv v}_j \\
{\mathtt{s.t.}}~~~& \mathtt{SINR}^{(\mathrm{II})}_i \ge \gamma_i,\ \forall i\in\mathcal{K_I}\\
&\sum \limits_{i\in\mathcal{K_I}}\|{{\mv w}_i}\|^2 + \sum \limits_{j\in\mathcal{K_E}}\|{\mv v}_j\|^2 \le P.
\end{align*}
Notice that the only difference between ${\mathtt{(P1)}}$  and ${\mathtt{(P2)}}$ lies in the achievable SINR expression for each ID receiver $i\in\mathcal{K_I}$.
Both problems ${\mathtt{(P1)}}$  and ${\mathtt{(P2)}}$ can be shown to maximize a convex quadratic function with $\mv{G}$ being positive semidefinite, i.e., $\mv{G}\succeq \mv{0}$, subject to various quadratic constraints; thus they are both non-convex QCQPs \cite{BoydVandenberghe2004}, for which  the globally optimal solutions are difficult to be obtained efficiently in general.

Prior to solving these two problems, we first have a check on their feasibility, i.e., whether a given set of SINR constraints for ID receivers can be met under the given transmit sum-power constraint $P$. It can be observed from $\mathtt{(P1)}$ and $\mathtt{(P2)}$ that both problems are feasible if and only if their feasibility is guaranteed by ignoring all the EH receivers, i.e.,  setting $\alpha_j=0$ and ${\mv v}_j=\mv{0},\forall j \in\mathcal{K_E}$. It then follows that $\mathtt{SINR}^{(\mathrm{I})}_i = \mathtt{SINR}^{(\mathrm{II})}_i, \forall i\in\mathcal{K_I}$. For convenience, we denote $\mathtt{SINR}^{(\mathrm{I})}_i = \mathtt{SINR}^{(\mathrm{II})}_i \triangleq \mathtt{SINR}_i, \forall i\in\mathcal{K_I}$ in this case.  Thus, the feasibility of both $\mathtt{(P1)}$ and $\mathtt{(P2)}$ can be verified by solving the
following problem:
\begin{align}
\mathop\mathtt{find} ~& {\{{\mv w}_i\}}   \nonumber\\
{\mathtt{s.t.}}~&  \mathtt{SINR}_i \ge \gamma_i,\ \forall i\in\mathcal{K_I} \nonumber\\
~& \sum\limits_{i\in\mathcal{K_I}}\|{{\mv w}_i}\|^2  \le P.\label{equa:jnl:6}
\end{align}
Problem (\ref{equa:jnl:6}) can be solved by {\color{black}standard convex optimization techniques such as the} interior point method via transforming it into a second-order cone program (SOCP) \cite{WieselEldar2006} or by an uplink-downlink duality based fixed-point iteration algorithm \cite{SchubertBoche2004}.

Next, we consider the other extreme case with no ID receivers, i.e., $\mathcal{K_I}=\phi$, where by setting $\mv{w}_i=0,\gamma_i=0,\forall i\in\mathcal{K_I}$, both $\mathtt{(P1)}$ and $\mathtt{(P2)}$ are  reduced to
\begin{align}
\mathop\mathtt{max}_{\{{\mv v}_{j}\}} ~&  \sum\limits_{j\in\mathcal{K_E}}{\mv v}_j^H{\mv G}{\mv v}_j \nonumber \\
{\mathtt{s.t.}}~& \sum \limits_{j\in\mathcal{K_E}}\|{\mv v}_j\|^2 \le P.\label{equa:jnl:7}
\end{align}
Let $\xi_E$ and ${\mv{v}_E}$ be the dominant  eigenvalue and its corresponding eigenvector of ${\mv G}$, respectively. Then it can be easily shown that the optimal value of (\ref{equa:jnl:7}) is $\xi_E P$, which is attained by
setting ${\mv{v}}_j = \sqrt{q_j}{\mv{v}_E},\ \forall j\in\mathcal{K_E},$ for any set of $q_j \ge 0, \forall j\in\mathcal{K_E}$ satisfying $\sum\limits_{j\in\mathcal{K_E}}q_j=P$.
Accordingly, all energy beams are aligned to the same direction as $\mv{v}_E$. Thus, without loss of optimality, we can set $\mv{v}_j = \sqrt{P}\mv{v}_E$ for any $j\in\mathcal{K_E}$ and $\mv{v}_k= \mv{0}, \ \forall k\in\mathcal{K_E}, k\neq j$. For convenience, we refer to the beamformer in the form of $\sqrt{P}\mv{v}_E$ as the optimal energy beamformer (OeBF).

Finally, we consider another special case with all $\gamma_i$'s being sufficiently small (but still non-zero in general), namely the ``OeBF-feasible'' case, in which aligning all information beams to the OeBF is feasible for both ${\mathtt{(P1)}}$ and ${\mathtt{(P2)}}$. In other words, there exists a solution to the feasibility problem (\ref{equa:jnl:6}) given by ${\mv{w}}_i = \sqrt{p_i}{\mv{v}_E},\ \forall i\in\mathcal{K_I}$ with $p_i \ge 0, \forall i\in\mathcal{K_I}$ satisfying $\sum\limits_{i\in\mathcal{K_I}}p_i\le P$, i.e., the following problem has a feasible solution given by $\{p_i\}$.
\begin{align}
\mathop\mathtt{find} ~& \{p_i\}   \nonumber \\
{\mathtt{s.t.}}~& \frac{p_i|{\mv h}_i{\mv v}_E|^2}{\sum\limits_{k\neq i,k\in\mathcal{K_{I}}}p_k|{\mv h}_i{\mv v}_E|^2 + \sigma_i^2} \ge \gamma_i, \forall i\in \mathcal{K_I}\nonumber\\
& \sum_{i\in\mathcal{K_I}}p_i\le P.\label{equation:EOeBFcheck}
\end{align}
In this case, it is easy to verify that the optimal values of both problems $\mathtt{(P1)}$ and $\mathtt{(P2)}$ are $\xi_E P$, which is the same as that of problem (\ref{equa:jnl:7}) and can be attained by ${\mv{w}}_i = \sqrt{\frac{P}{\sum\limits_{k\in\mathcal{K_I}}p_k}p_i}{\mv{v}_E},\ \forall i\in\mathcal{K_I}$ satisfying $\sum\limits_{i\in\mathcal{K_I}}\|{\mv{w}}_i\|^2= P$, and ${\mv{v}}_j = {\mv{0}}, \forall j\in\mathcal{K_E}$, i.e., no dedicated energy beam is needed to achieve the maximum weighted sum-power for EH receivers.  To check whether the OeBF-feasible case occurs or not, we only need to solve the feasibility problem in (\ref{equation:EOeBFcheck}) which is a simple linear program (LP). Therefore, in the rest of this paper, we will mainly focus on the unaddressed non-trivial case so far when $\mathtt{(P1)}$ and $\mathtt{(P2)}$ are both feasible but aligning all information beams to the OeBF is infeasible for both problems, unless otherwise specified.

\section{Optimal Solution via Semidefinite Relaxation}\label{sec:SDP}

In this section, we study the two non-convex QCQPs in $\mathtt{(P1)}$ and $\mathtt{(P2)}$, and derive their optimal solutions via SDR. For non-convex QCQPs, it is known that SDR is an efficient approach to obtain good approximate solutions in general \cite{LuoMaSo2010}. In the following, by applying SDR and exploiting the specific problem structures, the globally optimal solutions for both $\mathtt{(P1)}$ and $\mathtt{(P2)}$ are obtained efficiently.

First, consider problem $\mathtt{(P1)}$ for the case of Type I ID receivers. Define the following matrices: ${\mv W}_i={\mv w}_i{\mv w}_i^H ,\forall i \in \mathcal{K_I}$
and ${\mv W}_E=\sum\limits_{j \in \mathcal{K_E}}\mv v_j\mv v_j^H$. Then, it follows that ${\mathtt{rank}}({\mv W}_i)\le 1,\forall i \in \mathcal{K_I}$
and $\mathtt{rank}({\mv W}_E)\le \mathtt{min}(M,K_E)$. By ignoring the above rank constraints on ${\mv W}_i$'s and ${\mv W}_E$, the SDR of $\mathtt{(P1)}$ is given by
\begin{align*}\mathtt{(SDR1)}:&\\
\mathop{\mathtt{max}}_{\{\mv{W}_i\},\mv{W}_E}
& \sum\limits_{i\in\mathcal{K_I}}\mathtt{tr}(\mv{G}\mv{W}_i)+\mathtt{tr}(\mv{G}\mv{W}_E) \\
\mathtt {s.t.}~~&  \frac{\mathtt{tr}(\mv{h}_i^H\mv{h}_i\mv{W}_i)}{\gamma_i}-\sum\limits_{k\neq i,k\in\mathcal{K_I}}\mathtt{tr}(\mv{h}_i^H\mv{h}_i\mv{W}_k)\\
&~~~~~~~~~~ -\mathtt{tr}(\mv{h}_i^H\mv{h}_i\mv{W}_E)-\sigma_i^2 \geq 0,  \forall i \in \mathcal{K_I}\\ &
\sum\limits_{i\in\mathcal{K_I}}\mathtt{tr}(\mv{W}_i)+\mathtt{tr}(\mv{W}_E)\leq P\\
~& {{\mv W}_i}\succeq {\mv 0}, \forall i\in \mathcal{K_I}, ~~{{\mv W}_E}\succeq {\mv 0}.
\end{align*}
Let the optimal solution of $\mathtt{(SDR1)}$ be $\mv{W}_i^{\star}, \forall i\in\mathcal{K_I}$ and $\mv{W}_E^{\star}$. Then we have the following proposition.
\begin{proposition}\label{proposition:3.1}
{\color{black}Under the condition of independently distributed user channels given in Assumption \ref{assumption1},} the optimal solution of $\mathtt{(SDR1)}$ for the case of Type I ID receivers satisfies: $\mathtt{rank}(\mv{W}_i^{\star}) = 1, \ \forall i\in\mathcal{K_I}$, and $\mv{W}_E^{\star}=\mv{0}$ {\color{black}{with probability one.}}
\end{proposition}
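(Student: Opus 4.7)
The plan is to work directly with the KKT conditions of the convex SDP $\mathtt{(SDR1)}$ and to invoke the channel genericity guaranteed by Assumption \ref{assumption1}. I associate multipliers $\lambda_i^\star \geq 0$ with the SINR constraints, $\mu^\star \geq 0$ with the sum-power constraint, and PSD dual matrices $\mv{A}_i^\star, \mv{A}_E^\star$ with the primal PSD constraints. Setting $\mv{B}^\star \triangleq \mu^\star\mv{I} - \mv{G} + \sum_{i\in\mathcal{K_I}} \lambda_i^\star\mv{h}_i^H\mv{h}_i$, stationarity of the Lagrangian with respect to $\mv{W}_i$ and $\mv{W}_E$ yields $\mv{A}_E^\star = \mv{B}^\star$ and $\mv{A}_i^\star = \mv{B}^\star - \lambda_i^\star(1+1/\gamma_i)\mv{h}_i^H\mv{h}_i$, together with the complementary-slackness relations $\mv{B}^\star\mv{W}_E^\star = \mv{0}$ and $\mv{A}_i^\star\mv{W}_i^\star = \mv{0}$. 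The entire proposition would then reduce to showing $\mv{B}^\star \succ \mv{0}$ almost surely: combining $\mv{B}^\star\succ\mv{0}$ with $\mv{B}^\star\mv{W}_E^\star = \mv{0}$ forces $\mv{W}_E^\star = \mv{0}$, while $\mv{A}_i^\star$, being a rank-one downward perturbation of the full-rank $\mv{B}^\star$, has rank at least $M-1$, so $\mathtt{rank}(\mv{W}_i^\star)\le 1$; the SINR constraint with $\gamma_i>0$ rules out $\mv{W}_i^\star=\mv{0}$, hence the rank is exactly one.

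The main technical step is therefore to establish $\mv{B}^\star\succ\mv{0}$, which I would carry out by contradiction. Suppose $\mv{B}^\star\mv{u}=\mv{0}$ for some nonzero vector $\mv{u}$. Two degenerate situations are first ruled out: (i) $\mu^\star>0$, because if any power budget were left unused, a uniform upscaling of all beams would strictly raise the (quadratic) harvested-power objective and simultaneously relax every SINR constraint; (ii) not all $\lambda_i^\star$ can vanish, since otherwise dual feasibility $\mv{B}^\star\succeq\mv{0}$ forces $\mu^\star\ge\xi_E$, making the dual optimum $\mu^\star P \ge \xi_E P$ and, by strong duality, identifying the primal optimum with $\xi_E P$, which is precisely the OeBF-feasible regime excluded by assumption. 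Thus $\mathcal{S}\triangleq\{i:\lambda_i^\star>0\}\neq\emptyset$. Evaluating the quadratic form $\mv{u}^H\mv{A}_i^\star\mv{u}\ge 0$ gives $\lambda_i^\star|\mv{h}_i\mv{u}|^2=0$ for every $i$, forcing $\mv{h}_i\mv{u}=0$ for all $i\in\mathcal{S}$; substituting this back into $\mv{B}^\star\mv{u}=\mv{0}$ collapses it to $\mv{G}\mv{u}=\mu^\star\mv{u}$, so $\mv{u}$ must be an eigenvector of $\mv{G}$ associated with the positive eigenvalue $\mu^\star$.

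The remaining hurdle, and the place where I expect the genuine use of Assumption \ref{assumption1} to be decisive, is to rule this out probabilistically. Since $\mv{G}=\zeta\sum_j \alpha_j\mv{g}_j^H\mv{g}_j$ is the Gram matrix of a weighted stack of the $\mv{g}_j$'s, its nonzero eigenvalues coincide with the squared nonzero singular values of that stack and, by Assumption \ref{assumption1}, are almost surely distinct; hence each positive eigenspace of $\mv{G}$ is a.s.\ one-dimensional and depends measurably on $\{\mv{g}_j\}_{j\in\mathcal{K_E}}$ alone. For any fixed nonempty $\mathcal{S}\subseteq\mathcal{K_I}$ and any fixed index of a positive eigenvalue, the event that the corresponding unit eigenvector $\mv{u}$ of $\mv{G}$ satisfies $\mv{h}_i\mv{u}=0$ for every $i\in\mathcal{S}$ is a finite intersection of nontrivial linear conditions on the $\mv{h}_i$'s, which are continuously distributed and independent of $\{\mv{g}_j\}$; this event therefore has probability zero. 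A union bound over the finitely many choices of $\mathcal{S}$ and eigenvalue indices then yields $\mv{B}^\star\succ\mv{0}$ almost surely, closing the argument.
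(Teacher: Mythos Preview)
Your proof is correct and follows essentially the same route as the paper: both set up the KKT conditions of $\mathtt{(SDR1)}$, rule out the all-zero-multiplier case as the excluded OeBF-feasible regime, and then invoke Assumption~\ref{assumption1} to force the dual slack matrix $\mv{B}^\star$ (the paper's $-\mv{C}_1^\star$) to be nonsingular, whence $\mv{W}_E^\star=\mv{0}$ and $\mathtt{rank}(\mv{W}_i^\star)=1$ follow immediately. One minor point to tighten: your scaling argument shows only that the power constraint is active, not directly that $\mu^\star>0$; the cleanest justification is via strong duality, since the primal optimum is strictly positive while $\mu^\star=0$ would make the dual value $-\sum_i\lambda_i^\star\sigma_i^2\le 0$.
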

\begin{proof}
See Appendix \ref{appendix:1}.
\end{proof}

From Proposition \ref{proposition:3.1}, it follows that the optimal solution of $\mathtt{(SDR1)}$ satisfies the desired rank constraints, and thus the globally optimal solution of $\mathtt{(P1)}$ can always be obtained by solving $\mathtt{(SDR1)}$. Note that $\mathtt{(SDR1)}$ is a semidefinite program (SDP), which can be efficiently solved by existing software, e.g., $\mathtt{CVX}$ \cite{cvx}. Furthermore, it is observed that the optimal solution satisfies that $\mv{W}_E^\star = \mv{0}$ for $\mathtt{(SDR1)}$ or equivalently $\mv{v}_j = \mv{0}, \ \forall j\in\mathcal{K_E}$ for $\mathtt{(P1)}$, which implies that no dedicated energy beam is needed for achieving the maximum weighted sum harvested power in $\mathtt{(P1)}$. This can be intuitively explained as follows. Since Type I ID receivers cannot cancel the interference from energy beams (if any), employing energy beams will increase the interference power and as a result degrade the SINR at ID receivers. Thus, the optimal transmission strategy is to adjust the weights and power allocation of information beams only to maximize the weighted sum-power transferred to EH receivers.

Next, consider problem $\mathtt{(P2)}$ for the case of Type II ID receivers. Similar to $\mathtt{(P1)}$, the SDR of $\mathtt{(P2)}$ can be expressed as
\begin{align*}
{\mathtt{(SDR2)}}:~\\
\mathop\mathtt{max}_{\{{\mv W}_i\},{\mv W}_E} & \sum\limits_{i\in\mathcal{K_I}}\mathtt{tr}({\mv G}{\mv W}_i) + \mathtt{tr}({\mv G}{\mv W}_E) \\
{\mathtt{s.t.}}~~~~& \frac{\mathtt{tr}({\mv h}_i^H{\mv h}_i{\mv W}_i)}{\gamma_i} - \sum\limits_{k\neq i,k\in\mathcal{K_I}}\mathtt{tr}({\mv h}_i^H{\mv h}_i{\mv W}_k)\\&~~~~~~~~~~~~~~~~~- \sigma_i^2 \geq 0, \forall i \in \mathcal{K_I}\\
~& \sum \limits_{i\in\mathcal{K_I}}\mathtt{tr}({{\mv W}_i}) + \mathtt{tr}({{\mv W}_E}) \le P\\
~& {{\mv W}_i}\succeq {\mv 0},  \forall i \in \mathcal{K_I}, ~~{{\mv W}_E}\succeq {\mv 0}.
\end{align*}
Let the optimal solution of ${\mathtt{(SDR2)}}$ be $\mv{W}_i^*, \forall i\in\mathcal{K_I}$ and $\mv{W}_E^*$. We then have the following proposition.

\begin{proposition}\label{proposition:3.2}
{\color{black}Under the condition of independently distributed user channels given in Assumption \ref{assumption1},} the optimal solution of ${\mathtt{(SDR2)}}$ for the case of Type II ID receivers satisfies:
$\mathtt{rank}({\mv W}_i^*) = 1, \forall i \in \mathcal{K_I}$, $\mathtt{rank}(\mv{W}_E^{*})\le 1$ {\color{black}{with probability one}}; furthermore, it holds that ${\mv W}_{E}^* = q^*\mv{v}_E\mv{v}_E^H$ with $0\le q^* \le P$.
\end{proposition}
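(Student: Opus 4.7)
Since $\mathtt{(SDR2)}$ is a convex SDP for which strong duality is readily verified, the plan is to impose the KKT conditions on an optimal primal-dual pair and to read off the structure of $\mv{W}_i^\ast$ and $\mv{W}_E^\ast$ from complementary slackness, with Assumption~\ref{assumption1} used to control the dimensions of the null-spaces of the dual PSD multipliers. The argument will mirror that for Proposition~\ref{proposition:3.1}, but with a cleaner expression for the dual multiplier associated to $\mv{W}_E$, since in $\mathtt{(SDR2)}$ the energy covariance $\mv{W}_E$ no longer appears in any SINR constraint (Type~II receivers cancel the interference from energy signals).

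Concretely, I would let $\mu \ge 0$ be the dual variable for the sum-power constraint, $\lambda_i \ge 0$ the dual variable for the $i$th SINR constraint, and $\mv{Y}_i \succeq \mv{0}$, $\mv{Y}_E \succeq \mv{0}$ the dual multipliers associated with the primal PSD constraints. Writing down the Lagrangian and equating the gradient with respect to each primal matrix to zero yields
\[
\mv{Y}_E \;=\; \mu\mv{I} - \mv{G}, \qquad
\mv{Y}_i \;=\; \mv{A} \;-\; \lambda_i\!\left(1 + \tfrac{1}{\gamma_i}\right)\mv{h}_i^H \mv{h}_i,
\]
where $\mv{A} \triangleq \mu\mv{I} - \mv{G} + \sum_{l \in \mathcal{K_I}}\lambda_l\mv{h}_l^H\mv{h}_l$, together with the complementary-slackness relations $\mv{Y}_E\mv{W}_E^\ast = \mv{0}$ and $\mv{Y}_i\mv{W}_i^\ast = \mv{0}$.

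The claim on $\mv{W}_E^\ast$ then follows at once from $\mv{Y}_E$. Since $\mv{Y}_E \succeq \mv{0}$ one has $\mu \ge \xi_E$, and I would split into two cases: either (a) $\mu > \xi_E$, in which case $\mv{Y}_E \succ \mv{0}$ forces $\mv{W}_E^\ast = \mv{0}$; or (b) $\mu = \xi_E$, in which case $\mathtt{null}(\mv{Y}_E) = \mathtt{span}(\mv{v}_E)$, because under Assumption~\ref{assumption1} the matrix $\mv{G} = \zeta\sum_j \alpha_j\mv{g}_j^H\mv{g}_j$ has a simple dominant eigenvalue with probability one, and hence $\mv{W}_E^\ast = q^\ast\mv{v}_E\mv{v}_E^H$ for some $q^\ast \ge 0$. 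In both cases $\mathtt{rank}(\mv{W}_E^\ast) \le 1$ and $\mv{W}_E^\ast = q^\ast\mv{v}_E\mv{v}_E^H$, with $q^\ast \le P$ inherited from the sum-power constraint.

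For $\mv{W}_i^\ast$, the target is $\mathtt{rank}(\mv{Y}_i) = M-1$; the desired rank-one property of $\mv{W}_i^\ast$ then follows from $\mv{Y}_i\mv{W}_i^\ast = \mv{0}$ together with $\mv{W}_i^\ast \ne \mv{0}$ (otherwise the $i$th SINR would be zero, violating $\gamma_i > 0$). Because $\mv{Y}_i$ is $\mv{A}$ minus a single rank-one positive-semidefinite term, a standard perturbation count gives $\mathtt{rank}(\mv{Y}_i) \in \{M-1, M\}$ as soon as $\mv{A} \succ \mv{0}$, and the full-rank alternative is ruled out by $\mv{Y}_i\mv{W}_i^\ast = \mv{0}$ with $\mv{W}_i^\ast \ne \mv{0}$. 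The hard part will therefore be establishing $\mv{A} \succ \mv{0}$ in the degenerate case $\mu = \xi_E$, where $\mu\mv{I}-\mv{G}$ has a non-trivial kernel spanned by $\mv{v}_E$. I would handle this by (i) excluding the OeBF-feasible case already settled in Section~\ref{sec:system}, which forces at least one $\lambda_l > 0$ at optimum---otherwise the null-space containment argument used for $\mv{W}_E$ would apply verbatim to every $\mv{W}_i^\ast$, collapsing the whole solution onto $\mv{v}_E\mv{v}_E^H$ and reproducing exactly the OeBF-feasible solution; and (ii) invoking Assumption~\ref{assumption1} to guarantee $|\mv{h}_l\mv{v}_E| > 0$ with probability one, so that $\sum_l \lambda_l\mv{h}_l^H\mv{h}_l$ contributes strictly positively along $\mv{v}_E$ and lifts $\mv{A}$ to be positive definite. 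Once this is settled, $\mathtt{rank}(\mv{Y}_i) = M-1$ yields $\mathtt{rank}(\mv{W}_i^\ast) = 1$ with probability one, completing the proof.
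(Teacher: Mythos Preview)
Your argument is correct and follows the same KKT/complementary-slackness route as the paper's proof: both derive $\mv{Y}_E=\mu\mv{I}-\mv{G}$ (the paper's $-\mv{C}_2^*$) to pin down $\mv{W}_E^*$, exclude the OeBF-feasible case to force at least one strictly positive multiplier, and then use Assumption~\ref{assumption1} to show the relevant dual matrix has a one-dimensional null space.

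The one organizational difference worth noting is how the rank of $\mv{Y}_i$ is established. You write $\mv{Y}_i=\mv{A}-\lambda_i(1+\tfrac{1}{\gamma_i})\mv{h}_i^H\mv{h}_i$ with $\mv{A}=\mu\mv{I}-\mv{G}+\sum_{l}\lambda_l\mv{h}_l^H\mv{h}_l$ summed over \emph{all} $l$, and argue $\mv{A}\succ\mv{0}$ from a single $\lambda_l>0$ together with $|\mv{h}_l\mv{v}_E|>0$. The paper instead uses the decomposition $\mv{Y}_i=\bigl(\mu\mv{I}-\mv{G}+\sum_{k\neq i}\lambda_k^*\mv{h}_k^H\mv{h}_k\bigr)-\tfrac{\lambda_i^*}{\gamma_i}\mv{h}_i^H\mv{h}_i$, which forces an intermediate step proving \emph{every} $\lambda_i^*>0$ (by contradiction) before the rank count can go through. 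Your grouping is more economical: it avoids that detour and reaches $\mathtt{rank}(\mv{Y}_i)=M-1$ with a weaker hypothesis on the multipliers. The paper's version, on the other hand, yields the strict positivity of all $\lambda_i^*$ as a by-product, which is of independent interest later in the uplink-downlink duality analysis.
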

\begin{proof}
See Appendix \ref{appendix:2}.
\end{proof}

Based on Proposition {\ref{proposition:3.2}}, we can obtain the globally optimal solution of $\mathtt{(P2)}$ by solving $\mathtt{(SDR2)}$ via $\mathtt{CVX}$. Meanwhile, since ${\mv W}_{E}^* = q^*\mv{v}_E\mv{v}_E^H$, all energy beams should be aligned to $\mv{v}_E$, the same direction as the OeBF. Similar to problem (\ref{equa:jnl:7}), in this case, we can choose to send only one energy beam to minimize the complexity of beamforming implementation at the transmitter as well as the energy signal interference cancellation at all ID receivers by setting
$\mv{v}_j = \sqrt{q^*}\mv{v}_E$ for any $j\in\mathcal{K_E}$ and $\mv{v}_k = \mv{0},  \forall k\in\mathcal{K_E}, k\neq j$.

By comparing the optimal solutions for $\mathtt{(P1)}$ and $\mathtt{(P2)}$, we can see that their main difference lies in {\color{black}whether energy beamforming is employed or not}. Note that the optimal value of $\mathtt{(P2)}$ is in general an upper bound on that of $\mathtt{(P1)}$ since any feasible solution of $\mathtt{(P1)}$ is also feasible for $\mathtt{(P2)}$, but not vice versa. If $q^* = 0$ in Proposition \ref{proposition:3.2},  then the upper bound is tight; however, if $q^* > 0$, then a higher weighted sum harvested power is achievable for EH receivers with Type II ID receivers. Therefore, the benefit of using Type II ID receivers can be realized by employing no more than one energy beam and at the cost of implementing an additional interference cancellation (with {\it a priori} known energy signals) at ID receivers. Nevertheless, it is worth pointing out an interesting case with one single ID receiver, for  which energy beamforming is always not needed, as stated in the following proposition.
\begin{proposition}\label{proposition:KI1}
For the case of Type II ID receivers, if $K_I=1$, then the optimal solution of ${\mathtt{(SDR2)}}$ satisfies that $\mv{W}_E^* = 0$.
\end{proposition}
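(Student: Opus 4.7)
My plan is to derive $\mv{W}_E^{\star} = \mv{0}$ directly from the KKT conditions of the convex SDP $\mathtt{(SDR2)}$ specialized to $K_I = 1$; Slater's condition is easily verified, so these conditions are both necessary and sufficient for optimality. Writing $\lambda \ge 0$ for the single SINR multiplier, $\mu \ge 0$ for the sum-power multiplier, and $\mv{A}_1, \mv{A}_E \succeq \mv{0}$ for the multipliers on $\mv{W}_1 \succeq \mv{0}$ and $\mv{W}_E \succeq \mv{0}$, stationarity yields
\begin{align*}
\mv{A}_E = \mu\mv{I} - \mv{G}, \qquad \mv{A}_1 = \mu\mv{I} - \mv{G} - \frac{\lambda}{\gamma_1}\,\mv{h}_1^H\mv{h}_1,
\end{align*}
together with the complementary-slackness identities $\mv{A}_E\mv{W}_E^{\star} = \mv{0}$ and $\mv{A}_1\mv{W}_1^{\star} = \mv{0}$. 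From $\mv{A}_E \succeq \mv{0}$ I immediately read off $\mu \ge \xi_E$; if the inequality is strict, then $\mv{A}_E \succ \mv{0}$ and $\mv{W}_E^{\star} = \mv{0}$ follows at once.

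The substantive work, which I view as the main obstacle, is to rule out the critical branch $\mu = \xi_E$ under the standing non-OeBF-feasible hypothesis. In this branch the null space of $\mv{A}_E$ equals the $\xi_E$-eigenspace of $\mv{G}$, which under Assumption \ref{assumption1} is spanned by $\mv{v}_E$ with probability one. Invoking the rank-one structure $\mv{W}_1^{\star} = \mv{w}_1^{\star}\mv{w}_1^{\star H}$ guaranteed by Proposition \ref{proposition:3.2}, complementary slackness $\mv{A}_1\mv{w}_1^{\star} = \mv{0}$ rearranges to
\begin{align*}
(\xi_E\mv{I} - \mv{G})\,\mv{w}_1^{\star} = \frac{\lambda}{\gamma_1}\,(\mv{h}_1\mv{w}_1^{\star})\,\mv{h}_1^H.
\end{align*}
Left-multiplying this identity by $\mv{v}_E^H$ annihilates the left-hand side (since $\mv{v}_E$ is an eigenvector of $\mv{G}$ at $\xi_E$) and collapses it to the scalar identity $\lambda\,(\mv{h}_1\mv{w}_1^{\star})\,(\mv{h}_1\mv{v}_E)^{*} = 0$.

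I would then dispatch the three vanishing factors one at a time. The factor $\mv{h}_1\mv{w}_1^{\star}$ cannot vanish because $|\mv{h}_1\mv{w}_1^{\star}|^2 \ge \gamma_1\sigma_1^2 > 0$ by the SINR constraint. The factor $\mv{h}_1\mv{v}_E$ vanishes with probability zero, since $\mv{h}_1$ is continuously distributed and independent of the $\mv{g}_j$'s on which $\mv{v}_E$ alone depends, by Assumption \ref{assumption1}. Finally, $\lambda = 0$ would reduce $\mv{A}_1$ to $\xi_E\mv{I} - \mv{G}$, forcing $\mv{w}_1^{\star} = c\,\mv{v}_E$ for some scalar $c$ with $|c|^2 \le P$; the SINR feasibility $|c|^2\,|\mv{h}_1\mv{v}_E|^2 \ge \gamma_1\sigma_1^2$ is then precisely the OeBF-feasibility condition encoded in problem (\ref{equation:EOeBFcheck}), contradicting the running non-OeBF-feasible assumption. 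This third sub-case is where OeBF-infeasibility is genuinely used, and I expect it to require the most careful bookkeeping. With all three possibilities eliminated almost surely, $\mu > \xi_E$ holds and $\mv{W}_E^{\star} = \mv{0}$ follows.
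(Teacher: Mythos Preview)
Your argument is correct and follows the same KKT-based strategy as the paper: split on whether $\mu>\xi_E$ or $\mu=\xi_E$, and in the critical branch $\mu=\xi_E$ show that one is forced back into the OeBF-feasible situation, which is excluded by hypothesis. The paper's route to eliminating $\mu=\xi_E$ is a bit more economical than yours: instead of invoking Proposition~\ref{proposition:3.2} for the rank-one structure of $\mv{W}_1^\star$ and then using complementary slackness $\mv{A}_1\mv{w}_1^\star=\mv{0}$, it simply tests the dual feasibility condition $\mv{A}_1\succeq\mv{0}$ (your sign convention) on the vector $\mv{v}_E$, obtaining $-\tfrac{\lambda}{\gamma_1}|\mv{h}_1\mv{v}_E|^2\ge 0$ and hence $\lambda=0$ in one stroke, with no need to eliminate three separate factors or to appeal to the rank of $\mv{W}_1^\star$. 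Both arguments then finish identically by observing that $\lambda=0$ at $\mu=\xi_E$ reproduces the OeBF-feasible case.
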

\begin{proof}
See Appendix \ref{appendix:8}.
\end{proof}
{{\color{black}{
\begin{remark}
It is worth pointing out that in some special channel conditions that do not satisfy Assumption \ref{assumption1} (e.g., in the case of line-of-sight (LOS) user channels some of which happen to be linearly dependent), the tightness of SDRs for $\mathtt{(P1)}$ and $\mathtt{(P2)}$ can still be guaranteed by applying the results in \cite{HuangPalomar2010A}. Consider a separable SDP in the following form:
\begin{align*}
\mathtt{(SSDP)}:\mathop \mathtt{min} _{{\mv X}_1,\ldots,{\mv X}_L}& \sum\limits_{l=1}^L \mathtt{tr}({\mv B}_l {\mv X}_l)\\
\mathtt{s.t.}~& \sum\limits_{l=1}^L \mathtt{tr}({\mv A}_{ml} {\mv X}_l) \unrhd_m b_m, m=1,\ldots,M\\
~& {\mv X}_l \succeq 0,l=1,\ldots,L,
\end{align*}
where ${\mv B}_l$, ${\mv A}_{ml},l=1,\ldots,L,m=1,\ldots,M$ are Hermitian matrices (not necessarily positive semidefinite), $b_m \in {\mathbb R},$ $\unrhd_m \in \{\leq,\geq,=\},m=1,\ldots,M$, and ${\mv X}_l,l=1,\ldots,L$, are Hermitian matrices. Suppose that $\mathtt{(SSDP)}$ is feasible and bounded, and the optimal value is attained. Then $\mathtt{(SSDP)}$ always has an optimal solution $({\mv X}_1^\star,\ldots,{\mv X}_L^\star)$ such that $\sum\limits_{l=1}^L \left({\mathtt{rank}}({\mv X}_l^\star)\right)^2 \leq M$ \cite{HuangPalomar2010A}. By applying this result in our context, it can be verified that there always exists an optimal solution for $\mathtt{(SDR1)}$ satisfying  $\sum\limits_{i\in\mathcal{K_I}}\left( {\mathtt{rank}}({\mv W}_i^\star)\right)^2 + \left({\mathtt{rank}}({\mv W}_E^\star)\right)^2 \leq K_I+1$, and one for $\mathtt{(SDR2)}$  satisfying  $\sum\limits_{i\in\mathcal{K_I}} \left({\mathtt{rank}}({\mv W}_i^*)\right)^2 + \left({\mathtt{rank}}({\mv W}_E^*)\right)^2 \leq K_I+1$. Meanwhile, it can be shown from the SINR constraints that ${\mv W}_i^\star \neq {\mv 0}$ and ${\mv W}_i^* \neq {\mv 0}$, i.e., ${\mathtt{rank}}({\mv W}_i^\star) \ge 1$ and ${\mathtt{rank}}({\mv W}_i^*) \ge 1, \forall i \in \mathcal{K_I}$. Thus, it follows immediately that an optimal solution satisfying $ {\mathtt{rank}}({\mv W}_i^\star) = 1,\ \forall i \in \mathcal{K_I}$, and ${\mathtt{rank}}({\mv W}_E^\star) \le 1$ should exist for $\mathtt{(P1)}$, while one satisfying $ {\mathtt{rank}}({\mv W}_i^*) = 1,\ \forall i \in \mathcal{K_I}$, and ${\mathtt{rank}}({\mv W}_E^*) \le 1$ exists for $\mathtt{(P2)}$. In other words, the SDRs of both $\mathtt{(P1)}$ and $\mathtt{(P2)}$ are still tight even without Assumption 1. Note that the tightness of SDRs in the absence of Assumption \ref{assumption1} can similarly be inferred from \cite[Lemma 1.6]{Bjornson}. However, in general, rank-reduction techniques need to be applied to the higher-rank solutions of SDRs to obtain the rank-one solutions \cite{HuangPalomar2010A}.


It is interesting to compare our work with \cite{HuangPalomar2010A} in more details. Different from \cite{HuangPalomar2010A}, which only shows the existence of rank-one solutions for our problems under general channel conditions, in Propositions \ref{proposition:3.1} and \ref{proposition:3.2} we provide more specific results for the case of independently distributed user channels (cf. Assumption \ref{assumption1}) by applying {\it new proof techniques} (see Appendices \ref{appendix:1} and \ref{appendix:2}). In particular, under Assumption \ref{assumption1}, our results differ from that in \cite{HuangPalomar2010A} in the following two main aspects. First, we show that for energy beamforming, ${\mv W}_E^\star=0$ holds for $\mathtt{(SDR1)}$ and ${\mv W}_E^*=q^*\mv{v}_E\mv{v}_E^H$ holds for $\mathtt{(SDR2)}$ (rather than ${\mathtt{rank}}({\mv W}_E^\star)\le 1$ and ${\mathtt{rank}}({\mv W}_E^*)\le 1$ as inferred from \cite{HuangPalomar2010A}), which provides more insight to the optimal structure of energy beamforming solutions. Second, we show that the optimal information beamforming solutions of $\mathtt{(SDR1)}$ and $\mathtt{(SDR2)}$ are of rank-one with probability one (rather than the existence of rank-one solutions only from \cite{HuangPalomar2010A}); hence, no rank-reduction techniques as in \cite{HuangPalomar2010A} need to be applied in our case.\end{remark}}}}

\section{Alternative Solution Based On Uplink-Downlink Duality}\label{sec:BC-MAC}

In the previous section, we have obtained the globally optimal solutions for our formulated QCQP problems by applying the technique of SDR. In order to provide further insight to the optimal solution structure, in this section we propose an alternative  approach for solving the non-convex problems $\mathtt{(P1)}$ and $\mathtt{(P2)}$ by applying the principle of uplink-downlink duality. It is worth noting that the fundamental reason that some apparently non-convex downlink beamforming problems (e.g., the transmit power minimization problem in \cite{YuLan2007}) can be solved globally optimally via uplink-downlink duality is that they can be recast as certain convex forms (e.g., see the SOCP reformulation for the problem in \cite{YuLan2007}), and thus strong duality holds for these problems. However, for the two non-convex QCQPs in $\mathtt{(P1)}$ and $\mathtt{(P2)}$, we cannot explicitly recast them as convex problems. Nevertheless, our result in Section \ref{sec:SDP} that the SDRs of $\mathtt{(P1)}$ and $\mathtt{(P2)}$ are both tight implies that strong duality may also hold for them. This thus motivates our investigation of a new form of uplink-downlink duality for solving these two problems, as will be shown next. 

\subsection{Algorithm for $\mathtt{(P1)}$ via Uplink-Downlink Duality}

Consider $\mathtt{(P1)}$ for the case with Type I ID receivers at first. According to Proposition \ref{proposition:3.1} {\color{black}and under Assumption \ref{assumption1}}, dedicated energy beamforming is not needed to achieve the optimal solution of $\mathtt{(P1)}$; thus, we can set $\mv{v}_j=\mv{0},\forall j\in\mathcal{K_E}$, and accordingly, $\mathtt{SINR}^{(\mathrm{I})}_i = \mathtt{SINR}^{(\mathrm{II})}_i\triangleq \mathtt{SINR}_i, \forall i\in\mathcal{K_I} $, similar to (\ref{equa:jnl:6}). Hence, $\mathtt{(P1)}$ is reformulated as the following problem.
\begin{align*}
{\mathtt{(P1.1)}}:\mathop\mathtt{max}\limits_{\{{\mv w}_i\}} ~&  \sum\limits_{i\in\mathcal{K_I}}{\mv w}_i^H{\mv G}{\mv w}_i   \\
{\mathtt{s.t.}}~& \mathtt{SINR}_i  \ge \gamma_i, \forall i \in \mathcal{K_I} \\
& \sum \limits_{i\in\mathcal{K_I}}\|{{\mv w}_i}\|^2\le P.
\end{align*}
\begin{figure*}[!t]
\centering
 \epsfxsize=1\linewidth
    \includegraphics[width=15cm]{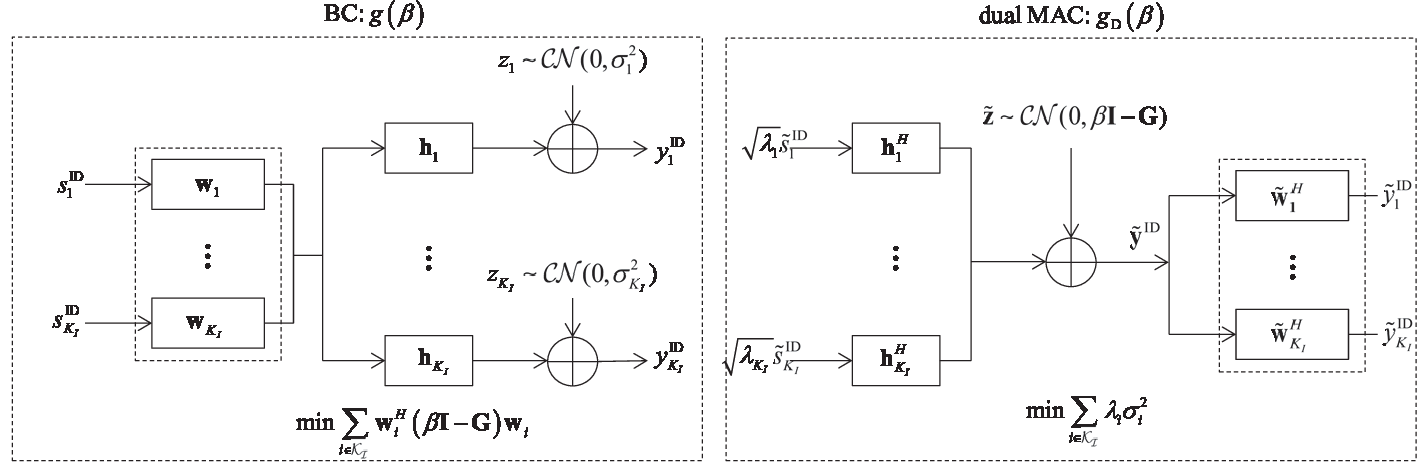}
\caption{Uplink-downlink duality for MISO-BC and SIMO-MAC.} \label{figure:bcmac}
\end{figure*}
By denoting $\beta\ge 0$ as the dual variable associated with the transmit sum-power constraint in ${\mathtt{(P1.1)}}$,{\footnote{\color{black}{Here, we consider the partial Lagrangian formulation of ${\mathtt{(P1.1)}}$ by introducing the dual variable associated with the sum-power constraint only, which is for the convenience of discussing the uplink-downlink duality in different cases    (see Sections \ref{sec:positive} and \ref{sec:negative}). Alternatively, one can derive the same results by directly considering the full Lagrangian formulation of ${\mathtt{(P1.1)}}$ via introducing dual variables for all constraints, which is commonly adopted in existing literature on the uplink-downlink duality related works (see e.g. \cite{YuLan2007,WieselEldar2006}).}}} we can express the dual function of ${\mathtt{(P1.1)}}$ as
\begin{align}
f_1(\beta) \triangleq \mathop\mathtt{max}_{\{{\mv w}_i\}} ~& \sum\limits_{i\in\mathcal{K_I}}{\mv w}_i^H{\mv G}{\mv w}_i - \beta\left(\sum\limits_{i\in\mathcal{K_I}}\|{{\mv w}_i}\|^2 - P\right)  \nonumber \\
{\mathtt{s.t.}}~&  \mathtt{SINR}_i\ge \gamma_i, \forall i \in \mathcal{K_I}.\label{equa:jnl:8}
\end{align}
Accordingly, the dual problem of ${\mathtt{(P1.1)}}$ is defined as
\begin{align*}
{\mathtt{(P1.2)}}:~\mathop \mathtt{min}\limits_{\beta \ge 0}~f_1(\beta).
\end{align*}

Since ${\mathtt{(P1.1)}}$ is known to be non-convex, weak duality holds for ${\mathtt{(P1.1)}}$ and ${\mathtt{(P1.2)}}$ in general, i.e., the optimal value of ${\mathtt{(P1.2)}}$ serves as an upper bound on that of ${\mathtt{(P1.1)}}$. Nevertheless, motivated by the fact that the SDR of ${\mathtt{(P1.1)}}$ is tight, the following proposition establishes that strong duality indeed also holds for ${\mathtt{(P1.1)}}$ and ${\mathtt{(P1.2)}}$.
\begin{proposition}\label{proposition:4.1}
The optimal value of ${\mathtt{(P1.1)}}$ is the same as that of ${\mathtt{(P1.2)}}$.
\end{proposition}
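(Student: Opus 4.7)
The plan is to bootstrap from the SDR-tightness result of Proposition \ref{proposition:3.1} together with the convexity of $\mathtt{(SDR1)}$. Denote the optimal values of $\mathtt{(P1.1)}$ and $\mathtt{(P1.2)}$ by $p_1^\star$ and $d_1^\star$, respectively; weak duality delivers $d_1^\star\ge p_1^\star$ for free, so the only task is the reverse inequality.

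First, I would carry out the same partial dualization on the SDR that was done on $\mathtt{(P1.1)}$. After restricting $\mv{W}_E=\mv{0}$ (which is without loss of optimality by Proposition \ref{proposition:3.1}), dualize the sum-power constraint of $\mathtt{(SDR1)}$ with multiplier $\beta\ge 0$ to obtain the dual function
\[
g_1(\beta)=\mathop{\mathtt{max}}_{\{\mv{W}_i\succeq \mv{0}\}} \sum_{i\in\mathcal{K_I}} \mathtt{tr}(\mv{G}\mv{W}_i)-\beta\Bigl(\sum_{i\in\mathcal{K_I}} \mathtt{tr}(\mv{W}_i)-P\Bigr)
\]
subject to the linearized SINR constraints. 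Because the SDR is convex and, in the non-trivial regime identified in Section \ref{sec:system}, strictly feasible, Slater's condition holds and SDP strong duality yields $\min_{\beta\ge 0} g_1(\beta)=p_{\rm sdr}^\star$, where $p_{\rm sdr}^\star$ is the optimal value of the SDR. By Proposition \ref{proposition:3.1}, $p_{\rm sdr}^\star=p_1^\star$, so $\min_{\beta\ge 0} g_1(\beta)=p_1^\star$.

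Next I would establish a pointwise comparison between $g_1(\beta)$ and $f_1(\beta)$. For each fixed $\beta\ge 0$, the inner maximization defining $g_1(\beta)$ is precisely the SDR of the inner maximization defining $f_1(\beta)$ (the rank-one constraint $\mv{W}_i=\mv{w}_i\mv{w}_i^H$ has been relaxed), hence $g_1(\beta)\ge f_1(\beta)$. Taking the infimum over $\beta\ge 0$ gives $d_1^\star=\min_{\beta\ge 0} f_1(\beta)\le \min_{\beta\ge 0} g_1(\beta)=p_1^\star$. Chaining this with weak duality produces $p_1^\star\le d_1^\star\le p_1^\star$, which forces equality throughout and hence the proposition.

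The main step that warrants care is the invocation of SDP strong duality for $\mathtt{(SDR1)}$: one must verify Slater's condition, which should follow from the standing assumption that $\mathtt{(P1)}$ is feasible and sits strictly inside the OeBF-infeasible regime singled out in Section \ref{sec:system}, so that a feasible SDR solution with a strictly inactive sum-power constraint exists. A minor bookkeeping point is that $f_1(\beta)$ and $g_1(\beta)$ may equal $+\infty$ for small $\beta$ (whenever $\mv{G}-\beta\mv{I}$ admits a direction allowed by the SINR cone), but this is immaterial: the outer minimization over $\beta$ never selects such values, and at any optimal multiplier both inner maxima are finite, so the pointwise inequality $g_1(\beta)\ge f_1(\beta)$ is substantive exactly where it is used.
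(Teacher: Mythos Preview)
Your proposal is correct and follows essentially the same route as the paper: both arguments invoke weak duality for one inequality, then use the tightness of the SDR (Proposition~\ref{proposition:3.1}) together with SDP strong duality for the partially dualized relaxation to obtain the reverse inequality via the pointwise comparison $g_1(\beta)\ge f_1(\beta)$. The paper's $f_{\mathtt{SDR},1}(\beta)$ is exactly your $g_1(\beta)$, and your additional remarks on Slater's condition and the handling of $+\infty$ values are sound refinements that the paper leaves implicit.
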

\begin{proof}
See Appendix \ref{appendix:new:D:July17}.
\end{proof}

It follows from Proposition \ref{proposition:4.1} that problems ${\mathtt{(P1.1)}}$ and ${\mathtt{(P1.2)}}$ are equivalent. Thus, we can solve problem ${\mathtt{(P1.1)}}$ by solving problem ${\mathtt{(P1.2)}}$. Specifically, we first solve problem (\ref{equa:jnl:8}) for obtaining $f_1(\beta)$ with any given $\beta\ge 0$, and then search over $\beta \ge 0$ to find the optimal $\beta^\star$ for minimizing $f_1(\beta)$, as will be shown in the following two steps, respectively.

\subsubsection{Obtain $f_1(\beta)$ for given $\beta \ge 0$}\label{sec:all:beta}

Consider problem (\ref{equa:jnl:8}) for obtaining $f_1(\beta)$ with given $\beta\ge 0$. Problem (\ref{equa:jnl:8}) is equivalent to the following problem (by discarding the irrelevant constant term $\beta P$):
\begin{align}
g(\beta) = \mathop\mathtt{min}_{\{{\mv w}_i\}} ~& \sum\limits_{i\in\mathcal{K_I}}{\mv w}_i^H\left(\beta\mv{I}-{\mv G}\right){\mv w}_i   \nonumber \\
{\mathtt{s.t.}}~& \mathtt{SINR}_i  \ge \gamma_i, \forall i\in \mathcal{K_I},\label{equation:13}
\end{align}
where $f_1(\beta)=\beta P-g(\beta)$. We thus focus on solving problem (\ref{equation:13}) to obtain $g(\beta)$ as follows.

Problem (\ref{equation:13}) can be solved by considering a MISO-BC with solely information transmission as shown in the left sub-figure of Fig. \ref{figure:bcmac}, by minimizing the weighted sum transmit power, $\sum\limits_{i\in\mathcal{K_I}}{\mv w}_i^H\left(\beta\mv{I}-{\mv G}\right){\mv w}_i$, subject to a set of individual SINR constraints $\{\gamma_i\}$. For the MISO-BC, its dual single-input multiple-output (SIMO) MAC is shown in the right sub-figure of Fig. \ref{figure:bcmac} by conjugating and transposing the channel vectors, where $K_I$ single-antenna transmitters send independent information to one common receiver with $M$ antennas. At transmitter $i, i\in\mathcal{K_I}$, let $\lambda_i$ be its transmit power, $\tilde{s}^{\rm{ID}}_i$ be a CSCG random variable representing its transmitted information signal, and $\mv{h}_i^{H}$ be its channel vector to the receiver. Then the received signal in the dual SIMO-MAC is expressed as
\begin{align}\label{eqn:simo_mac}
\tilde{\mv{y}}^{\rm{ID}} = \sum\limits_{i\in\mathcal{K_I}}\mv{h}_i^{H}\sqrt{\lambda_i}\tilde{s}^{\rm{ID}}_i + {\tilde{\mv{z}}},
\end{align}
where ${\tilde{\mv{z}}} \sim \mathcal{CN}(\mv{0},\beta\mv{I}-\mv{G})$ denotes the noise vector at the receiver. After applying receive beamforming vector $\tilde{\mv{w}}_i$'s, the SINRs of different users in the dual SIMO-MAC are then given by
\begin{align}\label{eqn:sinr:uplink}
&{\mathtt{SINR}}_i^{\mathrm{MAC}}(\{\tilde{\mv{w}}_i,\lambda_i\}) =\nonumber\\
&\frac{\lambda_i\tilde{\mv{w}}_i^H\mv{h}_i^H\mv{h}_i\tilde{\mv{w}}_i}{\tilde{\mv{w}}_i^H\left(\sum\limits_{k\neq i,k\in\mathcal{K_I}}\lambda_k\mv{h}_k^H\mv{h}_k+\beta\mv{I}-\mv{G}\right)\tilde{\mv{w}}_i},\forall i \in \mathcal{K_I}.
\end{align}
The design objective for the dual SIMO-MAC is to minimize the weighted sum transmit power $\sum\limits_{i\in\mathcal{K_I}}\lambda_i\sigma_i^2$ by jointly optimizing the power allocation $\{\lambda_i\}$ and receive beamforming vectors $\{\tilde{\mv{w}}_i\}$ subject to the same set of SINR constraints $\{\gamma_i\}$ as in the original MISO-BC given by (\ref{equation:13}). We thus formulate the dual uplink problem as
\begin{align}
g_{\mathrm{D}}(\beta)=\mathop{\mathtt{min}}\limits_{\{\tilde{\mv{w}}_i,\lambda_i\}}
& ~\sum\limits_{i\in\mathcal{K_I}}\lambda_i\sigma_i^2\nonumber \\
\mathtt {s.t.} & ~ {\mathtt{SINR}}_i^{\mathrm{MAC}}(\{\tilde{\mv{w}}_i,\lambda_i\}) \geq \gamma_i, \ \forall i \in \mathcal{K_I}\nonumber\\
&~\lambda_i \ge 0,\ \forall i \in \mathcal{K_I}.\label{equa:gd}
\end{align}

Next, we solve the downlink problem (\ref{equation:13}) for any given $\beta\ge 0$ by solving the uplink problem (\ref{equa:gd}) via exploiting the uplink-downlink duality. At first, it is worth noting that if $\beta \ge \xi_E$ (recall that $\xi_E$ is the largest eigenvalue of $\mv{G}$), then $\beta\mv{I}-{\mv G} \succeq \mv{0}$ holds, in which case the downlink weighted sum-power minimization problem in (\ref{equation:13}) can be recast as an equivalent SOCP, or solved based on the dual uplink problem in (\ref{equa:gd}) by the existing algorithm in \cite{YuLan2007}. However, if $\beta < \xi_E$, then it follows that $\beta\mv{I}-{\mv G} \nsucceq \mv{0}$.{\footnote{If $\beta\mv{I}-\mv{G} \nsucceq \mv{0}$, then the receiver noise ${\tilde{\mv{z}}}$ in (\ref{eqn:simo_mac}) for the dual SIMO-MAC cannot be realizable, since the covariance matrix of any physical signal should be positive semidefinite. Thus, in this case, the receiver noise with covariance $\beta\mv{I}-\mv{G} \nsucceq \mv{0}$ is just a mathematical equivalence, and thus needs not be practically realizable.}} In this case, solving problem (\ref{equation:13}) is more involved due to the following reasons. First, the objective function in (\ref{equation:13}) becomes non-convex and as a result, problem (\ref{equation:13}) cannot be recast as a convex (SOCP) problem as in \cite{YuLan2007}. Second, the optimal value of problem (\ref{equation:13}), i.e., $g(\beta)$, may become unbounded from below, e.g.,
when $\beta\mv{I}-{\mv G} \prec \mv{0}$. In the following, we solve problem (\ref{equation:13}) for the two cases of $\beta\ge\xi_E$ and $0\le\beta<\xi_E$, respectively. We first review the uplink-downlink duality based algorithm in \cite{YuLan2007} for solving (\ref{equation:13}) in the case of $\beta \ge \xi_E$, and then extend this algorithm to solve problem (\ref{equation:13}) for the more challenging case of $0 \le \beta < \xi_E$. For convenience, we denote the optimal solutions of problems (\ref{equation:13}) and (\ref{equa:gd}) for any given $\beta\ge 0$ as $\{{{\mv{w}}}_i^{\star}\}$ and $\{\tilde{{\mv{w}}}_i^{\star},\lambda_i^{\star}\}$, respectively.

\paragraph{Solve Problem (\ref{equation:13}) for the case of $\beta \ge \xi_E$ (or $\beta\mv{I}-\mv{G} \succeq \mv{0}$)}\label{sec:positive}

In this case, problems (\ref{equation:13}) and (\ref{equa:gd}) can be shown to be equivalent as in \cite{YuLan2007}. Thus, we can solve the downlink problem (\ref{equation:13}) by first solving the uplink problem (\ref{equa:gd}) and then mapping its solution to that of problem (\ref{equation:13}).

First, consider the uplink problem (\ref{equa:gd}). Since it can be shown that the optimal solution of (\ref{equa:gd}) is always achieved when all the SINR constraints are met with equality \cite{YuLan2007}, it then follows that the optimal uplink transmit power $\{\lambda^{\star}_i\}$ must be a fixed point solution satisfying the following equations \cite{FarrokhiTassiulasLiu98}:
\begin{align}
&\lambda_i^{\star}=\mathtt{m}_i\left(\{\lambda_i^{\star}\}\right)\triangleq\nonumber\\&\mathop\mathtt{min}\limits_{\|\tilde{{\mv{w}}}_i\|=1}\gamma_i\left(\frac{\tilde{\mv{w}}_i^{H}\left(\sum\limits_{k \neq i,k\in\mathcal{K_I}}\lambda_k^{\star}\mv{h}_k^H\mv{h}_k+\beta\mv{I}-\mv{G}\right)\tilde{\mv{w}}_i}{\tilde{\mv{w}}_i^{H}\mv{h}_i^H\mv{h}_i\tilde{\mv{w}}_i}\right), \nonumber\\&~~~~~~~~~~~~~~~~~~~~~~~~~~~~~~~~~~~~~~~~~~~~~~~~~~~~ \forall i \in \mathcal{K_I}.\label{eqn:iterative:2}
\end{align}
As a result, by iterating $\lambda_i^{(n)}=\mathtt{m}_i\left(\{\lambda_i^{(n-1)}\}\right), \forall i \in \mathcal{K_I},$ with $n>0$ being the iteration index, the optimal $\{\lambda^\star_i\}$ for (\ref{equa:gd}) can be obtained. With $\{\lambda^\star_i\}$ at hand, the optimal receive beamforming vector $\{\tilde{\mv{w}}_i^\star\}$ can then be obtained accordingly from (\ref{eqn:iterative:2}) based on the minimum-mean-squared-error (MMSE) principle as
\begin{align}
 \tilde{{\mv{w}}}_i^{\star}~&=~\frac{\left(\sum\limits_{k \neq i,k\in\mathcal{K_I}}\lambda_k^{\star}\mv{h}_k^H\mv{h}_k + \beta\mv{I}-\mv{G}\right)^{\dagger}\mv{h}_i^H}{\left\|\left(\sum\limits_{k \neq i,k\in\mathcal{K_I}}\lambda_k^{\star}\mv{h}_k^H\mv{h}_k + \beta\mv{I}-\mv{G}\right)^{\dagger}\mv{h}_i^H\right\|}, \ \forall i \in \mathcal{K_I}. \label{eqn:iterative:1}
\end{align}

After obtaining the optimal solution of $\{\tilde{\mv{w}}_i^\star,\lambda_i^\star\}$ for the uplink problem (\ref{equa:gd}), we then map the solution to $\{{{\mv{w}}}_i^{\star}\}$ for the downlink problem (\ref{equation:13}). As shown in \cite{YuLan2007}, $\{{{\mv{w}}}_i^{\star}\}$ and $\{\tilde{\mv{w}}_i^\star\}$ are identical up to a certain scaling factor. Using this argument together with the fact that the optimal solution of (\ref{equation:13}) is also attained with all the SINR constraints being tight \cite{YuLan2007} similarly to problem (\ref{equa:gd}), it follows that $\{\mv{w}_i^\star\}$ can be obtained as $\mv{w}_i^\star = \sqrt{p_i^\star} \tilde{\mv{w}}_i^\star, \forall i \in \mathcal{K_I}$, where $\mv{p}^\star = [p_1^\star,\ldots,p_{K_I}^\star]^T$ is given by
\begin{align}
\mv{p}^\star = \bigg(\mv{I}-{\mv{D}}\left(\{\tilde{\mv w}_i^\star,\gamma_i\}\right)\bigg)^{-1}\mv{u}^{\mathrm{BC}}\left(\{\tilde{\mv w}_i^\star, \gamma_i\}\right),\label{solution}
\end{align}
where ${\mv{D}}_{ik}\left(\{\tilde{\mv{w}}_i,\gamma_i\}\right) =\left\{\begin{array}{ll} 0, &
i=k \\ \frac{ \gamma_i|{\mv h}_i\tilde{\mv w}_k|^2}{|{\mv h}_i\tilde{\mv w}_i|^2}, & i\neq k \end{array} \right.$ and $\mv{u}^{\mathrm{BC}}\left(\{\tilde{\mv{w}}_i,\gamma_i\}\right) = \left[\frac{\gamma_1\sigma_1^2}{|{\mv h}_1\tilde{\mv w}_1|^2},\ldots,\frac{\gamma_{K_I}\sigma_{K_I}^2}{|{\mv h}_{K_I}\tilde{\mv w}_{K_I}|^2}\right]^T$.

In summary, Algorithm 1 for solving problem (\ref{equation:13}) for the case of $\beta\ge\xi_E$ is given in Table I.

\begin{table}[htp]
\begin{center}
\caption{Algorithm for Solving Problem (\ref{equation:13}) with Given $\beta\ge\xi_E$}  \vspace{0.1cm}
 \hrule
\vspace{0.1cm} \textbf{Algorithm 1}  \vspace{0.1cm}
\hrule \vspace{0.3cm}
\begin{itemize}
\item[a)] Initialize: $n=0$, and set $\lambda_i^{(0)} \ge 0,\forall i\in\mathcal{K_I}$.
\item[b)] {\bf Repeat:}
    \begin{itemize}
    \item[1)] $n \gets n+1$;
    \item[2)] Update the uplink transmit power as $\lambda_i^{(n)}=\mathtt{m}_i\left(\{\lambda_i^{(n-1)}\}\right), \forall i \in \mathcal{K_I},$ with $\mathtt{m}_i\left(\cdot\right)$ given in (\ref{eqn:iterative:2}).
    \end{itemize}
\item[c)] {\bf Until} $|\lambda_i^{(n)} - \lambda_i^{(n-1)}| \le \epsilon,\forall i \in \mathcal{K_I}$, where $\epsilon$ is the required accuracy.
\item[d)] Set $\lambda_i^{\star}=\lambda_i^{(n)},\forall i \in \mathcal{K_I},$ and compute the uplink receive beamforming vectors $\{\tilde{{\mv{w}}}_i^{\star}\}$ by (\ref{eqn:iterative:1}).
\item[e)] Compute the downlink beamforming vectors as $\mv{w}_i^\star = \sqrt{p_i^\star} \tilde{\mv{w}}_i^\star, \forall i \in \mathcal{K_I}$, where $\{p_i^{\star}\}$ is given by (\ref{solution}).
\end{itemize}
\vspace{0.1cm} \hrule \vspace{0.1cm}\label{algorithm:new0}
\end{center}
\end{table}

\paragraph{Solve Problem (\ref{equation:13}) for the case of $0\le \beta < \xi_E$ (or $\beta\mv{I}-\mv{G} \nsucceq \mv{0}$)}\label{sec:negative}
Given $0\le \beta < \xi_E$, we study further problem (\ref{equation:13}) by considering the two cases where the optimal value of problem (\ref{equation:13}) is bounded from below (i.e., $g(\beta)>-\infty$) and unbounded from below (i.e., $g(\beta)=-\infty$), respectively.

First, we solve problem (\ref{equation:13}) by considering the case of $g(\beta)>-\infty$. In this case, {\color{black}a new form of} uplink-downlink duality is established via the following proposition.

\begin{proposition}\label{theorem:macbc}
If $\beta\mv{I}-\mv{G} \nsucceq \mv{0}$ and $g(\beta)>-\infty$, then the optimal value of (\ref{equation:13}) is equal to that of (\ref{equa:gd}).
\end{proposition}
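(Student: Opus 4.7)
The plan is to pass through the semidefinite relaxation (SDR) of the downlink problem (\ref{equation:13}) and use Lagrange duality to extend the classical uplink--downlink duality of \cite{YuLan2007} to the indefinite ``noise'' regime $\beta\mv{I}-\mv{G}\nsucceq\mv{0}$.

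First, I would lift $\mv{W}_i=\mv{w}_i\mv{w}_i^H$ and drop the rank constraint to form the SDR of (\ref{equation:13}). Attaching nonnegative multipliers $\lambda_i$ to the SINR inequalities and grouping the Lagrangian by $\mv{W}_i$ yields as the Lagrangian dual the semidefinite program
\begin{align*}
\max_{\lambda_i\ge 0}~\sum_{i\in\mathcal{K_I}}\lambda_i\sigma_i^2\quad\mathtt{s.t.}\quad \sum_{k\neq i,\,k\in\mathcal{K_I}}\lambda_k\mv{h}_k^H\mv{h}_k+\beta\mv{I}-\mv{G}\succeq \frac{\lambda_i}{\gamma_i}\mv{h}_i^H\mv{h}_i,~\forall i\in\mathcal{K_I}.
\end{align*}
Since the SDR is a convex SDP, the feasibility of (\ref{equa:jnl:6}) supplies a Slater point, and the hypothesis $g(\beta)>-\infty$ guarantees that the SDR primal optimum is finite and attained; therefore strong duality holds between the SDR and its Lagrangian dual.

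Second, I would establish tightness of the SDR by adapting the KKT-based argument behind Proposition \ref{proposition:3.1}. The stationarity condition yields $\mv{Y}_i^\star=\sum_{k\neq i}\lambda_k^\star\mv{h}_k^H\mv{h}_k+\beta\mv{I}-\mv{G}-(\lambda_i^\star/\gamma_i)\mv{h}_i^H\mv{h}_i\succeq\mv{0}$ together with complementary slackness $\mv{Y}_i^\star\mv{W}_i^\star=\mv{0}$, so it suffices to show $\mathtt{rank}(\mv{Y}_i^\star)=M-1$ under Assumption \ref{assumption1}; this pins every $\mathtt{rank}(\mv{W}_i^\star)$ to one and forces the SDR primal optimum to coincide with $g(\beta)$.

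Third, I would identify the above Lagrangian dual with the uplink problem (\ref{equa:gd}). The KKT conditions of the dual SDP make every matrix $\mv{Y}_i^\star$ rank-deficient, which is precisely the fixed-point equation (\ref{eqn:iterative:2}) at $\{\lambda_i^\star\}$; the MMSE receive beamformer (\ref{eqn:iterative:1}) then attains MAC-SINR equal to $\gamma_i$ with the same $\{\lambda_i^\star\}$, certifying this point as optimal for (\ref{equa:gd}). Uniqueness of the Yates fixed point (inherited from Assumption \ref{assumption1} via standard monotone-interference arguments) implies that the two problems share the same optimal value $\sum_i\lambda_i^\star\sigma_i^2$, and chaining with the first two steps gives $g(\beta)=g_D(\beta)$, as desired.

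The hardest part will be Step 2, i.e.\ SDR tightness in the indefinite regime. In Proposition \ref{proposition:3.1} the positive semidefiniteness of $\mv{G}$ enters the rank-counting argument, whereas here the objective matrix $\beta\mv{I}-\mv{G}$ changes sign, so one must exploit the dual PSD condition $\mv{Y}_i^\star\succeq\mv{0}$ together with $g(\beta)>-\infty$ (which rules out directions along which $\beta\mv{I}-\mv{G}$ would make the Lagrangian unbounded from below) before Assumption \ref{assumption1} can be invoked on the resulting linear structure. A subsidiary subtlety is reconciling the PSD-feasibility set of the Lagrangian dual with the MAC-SINR-feasibility set of (\ref{equa:gd}): the two regions meet only at the boundary where the PSD matrix becomes singular, and this coincidence is exactly what the Yates fixed-point uniqueness provides.
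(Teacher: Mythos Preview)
Your Steps 1 and 2 are essentially the paper's route: pass to the SDR of (\ref{equation:13}), invoke strong duality with the SDP dual (\ref{equa:51}), and use a KKT rank-counting argument (as in Proposition \ref{proposition:3.1}) to certify SDR tightness so that (\ref{equation:13}), its SDR, and (\ref{equa:51}) share the same optimal value $g(\beta)$.

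The gap is in Step 3. You conclude that the dual SDP and the uplink problem (\ref{equa:gd}) have the same optimum by appealing to ``uniqueness of the Yates fixed point via standard monotone-interference arguments.'' But the Yates framework requires the interference mapping $\mathtt{m}_i(\cdot)$ to be a \emph{standard} interference function (positive, monotone, scalable). In the indefinite regime $\beta\mv{I}-\mv{G}\nsucceq\mv{0}$ the effective noise term $\tilde{\mv{w}}_i^H(\beta\mv{I}-\mv{G})\tilde{\mv{w}}_i$ can be negative, so positivity (and scalability) of $\mathtt{m}_i$ need not hold; Assumption \ref{assumption1} does not rescue this. The paper flags exactly this obstruction (Remark \ref{remakr:a:1} and the remark following Proposition \ref{theorem:converge}) and explicitly avoids Yates-type arguments here.

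Instead, the paper bridges (\ref{equa:51}) and (\ref{equa:gd}) differently: via Lemma \ref{lemma:a1} it rewrites the PSD constraints of (\ref{equa:51}) as $\max_{\tilde{\mv w}_i}\mathtt{SINR}_i^{\mathrm{MAC}}\le\gamma_i$ (problem (\ref{equa:90})), observes that this feasible set is convex because each constraint is ``linear minus (pointwise minimum of linear) $\le 0$'' (cf.\ (\ref{eqn:proof:convex})), and hence any local optimum is global. Since both (\ref{equa:90}) and (\ref{equa:gd}) achieve their optima with all SINR constraints tight, the two values coincide. If you replace your Yates-uniqueness step with this convexity argument, your outline goes through.
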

\begin{proof}
See Appendix \ref{appendix:D:April10}.
\end{proof}

Note that the fundamental reason that Proposition \ref{theorem:macbc} holds is {\color{black}due to the strong duality of problem (\ref{equation:13}) even when $\beta\mv{I}-\mv{G} \nsucceq \mv{0}$, which is a direct consequence of the result that the SDR of problem (\ref{equation:13}) is tight. The use of SDR in establishing the uplink-downlink duality is a new contribution of this paper, which is different from the conventional case of $\beta {\mv I}-\mv G\succeq \mv{0}$, where the uplink-downlink duality has been shown by reformulating (\ref{equation:13}) as an equivalent SOCP \cite{YuLan2007}.} From Proposition \ref{theorem:macbc}, it follows that the downlink problem (\ref{equation:13}) and the uplink problem (\ref{equa:gd}) are still equivalent in this case. Thus, we can solve problem (\ref{equation:13}) by solving problem (\ref{equa:gd}). For problem (\ref{equa:gd}), we obtain the following properties.

\begin{proposition}\label{proposition:July19:2}
If $\beta\mv{I}-\mv{G} \nsucceq \mv{0}$ and $g(\beta)>-\infty$, the optimal solution of problem (\ref{equa:gd}) satisfies that:
\begin{enumerate}
  \item All the SINR constraints are met with equality;
  \item It is true that $\sum\limits_{k \neq i,k\in\mathcal{K_I}}\lambda_k^\star\mv{h}_k^H\mv{h}_k + \beta\mv{I}-\mv{G}\succeq\frac{\lambda_i^\star\mv{h}_i^H\mv{h}_i}{\gamma_i}\succeq \mv{0},\ \forall i \in \mathcal{K_I}$.
\end{enumerate}
\end{proposition}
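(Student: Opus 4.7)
The claim has two parts. I would prove (1) by an elementary perturbation argument, and (2) by using Lagrangian duality of the SDR of the downlink problem (\ref{equation:13}), leveraging that SDR's tightness and Proposition \ref{theorem:macbc}.

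\textbf{Part (1).} Suppose at the optimum $(\{\tilde{\mv w}_i^\star,\lambda_i^\star\})$ one has $\mathtt{SINR}_{i_0}^{\mathrm{MAC}}>\gamma_{i_0}$ for some $i_0$. Perturb by $\lambda_{i_0}^\star\to\lambda_{i_0}^\star-\epsilon$, keeping everything else fixed. The noise-plus-interference matrix $\mv M_{i_0}:=\sum_{k\neq i_0}\lambda_k^\star\mv h_k^H\mv h_k+\beta\mv I-\mv G$ does not depend on $\lambda_{i_0}$, so $\mathtt{SINR}_{i_0}^{\mathrm{MAC}}$ is linear in $\lambda_{i_0}$ and still exceeds $\gamma_{i_0}$ for small $\epsilon>0$. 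For each $k\neq i_0$, $\mv M_k$ loses the PSD term $\epsilon\mv h_{i_0}^H\mv h_{i_0}$, so $\tilde{\mv w}_k^{\star H}\mv M_k\tilde{\mv w}_k^\star$ weakly decreases and $\mathtt{SINR}_k^{\mathrm{MAC}}$ weakly increases. The perturbed point is thus feasible yet strictly reduces $\sum_i\lambda_i\sigma_i^2$ by $\epsilon\sigma_{i_0}^2$, contradicting optimality.

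\textbf{Part (2).} The right inequality is immediate since $\lambda_i^\star\geq 0$ and $\mv h_i^H\mv h_i\succeq\mv 0$. For the left, I would dualize the SDR of (\ref{equation:13}) with multipliers $\lambda_i\geq 0$ on the SINR constraints. Collecting the $\mv W_j$-coefficients in the Lagrangian shows that the inner minimization over $\mv W_j\succeq\mv 0$ is bounded below only when $(\beta\mv I-\mv G)+\sum_{k\neq j}\lambda_k\mv h_k^H\mv h_k-\lambda_j\mv h_j^H\mv h_j/\gamma_j\succeq\mv 0$, and in that regime the dual value equals $\sum_i\lambda_i\sigma_i^2$. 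Since the SDR is a convex SDP with $g(\beta)>-\infty$ and is tight (Section \ref{sec:SDP}), Slater-type regularity yields strong duality: some dual optimum $\{\hat\lambda_i\}$ satisfies the matrix inequalities with value $g(\beta)$, which by Proposition \ref{theorem:macbc} coincides with $g_D(\beta)$. To transfer the inequality to the uplink optimum $\{\lambda_i^\star\}$ itself, I would argue by contradiction: if the inequality failed at $\{\lambda_i^\star\}$ for some $i$, then there exists $x$ with $\lambda_i^\star|\mv h_ix|^2>\gamma_i x^H\mv M_i^\star x$; taking $\tilde{\mv w}_i'=x$ when $x^H\mv M_i^\star x>0$, or otherwise $\tilde{\mv w}_i'=\tilde{\mv w}_i^\star+tx$ with $t$ chosen just below the root of $(\tilde{\mv w}_i^\star+tx)^H\mv M_i^\star(\tilde{\mv w}_i^\star+tx)=0$, yields a beamformer with $\mathtt{SINR}_i^{\mathrm{MAC}}(\tilde{\mv w}_i',\lambda^\star)>\gamma_i$; reducing $\lambda_i^\star$ as in Part (1) then contradicts optimality.

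\textbf{Main obstacle.} The delicate step is the case $\mv M_i^\star\nsucceq\mv 0$ in the final contradiction, where the ``SINR ratio'' is not automatically meaningful physically. One must ensure the perturbation $\tilde{\mv w}_i^\star+tx$ produces a valid beamformer (positive denominator with nonzero numerator) arbitrarily close to the critical $t_0$. Degenerate alignments of the form $\mv h_i(\tilde{\mv w}_i^\star+t_0x)=\mv 0$ must be ruled out, which I would handle by perturbing $x$ within the negative-eigenspace of $\mv M_i^\star-\lambda_i^\star\mv h_i^H\mv h_i/\gamma_i$ so that the critical direction is generic. The finiteness hypothesis $g(\beta)>-\infty$ is essential both for the SDR duality to produce a finite optimum and for the uplink optimum to be well-defined via Proposition \ref{theorem:macbc}.
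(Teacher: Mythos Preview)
Your Part (1) is correct and is the same perturbation argument the paper uses (it is actually proved inside the proof of Proposition~\ref{theorem:macbc} and simply invoked here).

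Your Part (2) is correct in spirit but takes a genuinely different route. The paper's proof is a one-liner once Proposition~\ref{theorem:macbc} is in hand: that proof already identifies the optimal uplink powers $\{\lambda_i^\star\}$ of (\ref{equa:gd}) with the optimal (hence feasible) solution of the SDR-dual problem (\ref{equa:51}), whose feasibility constraints are precisely the matrix inequalities in the claim. You instead prove the matrix inequality directly by a feasibility-improvement contradiction at $\{\lambda_i^\star\}$. This is more self-contained but more laborious. Note that your SDR duality setup producing $\{\hat\lambda_i\}$ is never used in your contradiction step; had you pushed it one step further---via the Schur-type equivalence (Lemma~\ref{lemma:a1}) between the matrix constraint and $\max_{\tilde{\mv w}}\mathtt{SINR}_i^{\mathrm{MAC}}\le\gamma_i$---you would recover the paper's short argument that $\hat\lambda_i=\lambda_i^\star$.

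Your contradiction argument is also over-engineered, and the ``main obstacle'' you flag disappears with a simpler choice of perturbation. Instead of sending $t$ toward a root of $f(t):=(\tilde{\mv w}_i^\star+tx)^H\mv M_i^\star(\tilde{\mv w}_i^\star+tx)$ (which also misses the sub-case $x^H\mv M_i^\star x=0$ with $f$ constant, where no root exists), take a \emph{small} real $t$. Set $\mv Q:=\lambda_i^\star\mv h_i^H\mv h_i-\gamma_i\mv M_i^\star$ and $h(t):=(\tilde{\mv w}_i^\star+tx)^H\mv Q(\tilde{\mv w}_i^\star+tx)$. By Part (1), $h(0)=0$, and the leading coefficient is $x^H\mv Q x>0$, so $h(t)=t\bigl(x^H\mv Q x\cdot t+2\mathrm{Re}(\tilde{\mv w}_i^{\star H}\mv Q x)\bigr)>0$ for all small $t$ of one sign. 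Meanwhile $f(0)>0$ (since $\mathtt{SINR}_i^{\mathrm{MAC}}=\gamma_i>0$ at the optimum forces both numerator and denominator positive), so $f(t)>0$ by continuity. Hence $\mathtt{SINR}_i^{\mathrm{MAC}}(\tilde{\mv w}_i^\star+tx,\lambda^\star)>\gamma_i$ and your $\lambda_i$-reduction applies, with no root-finding, no case split on the sign of $x^H\mv M_i^\star x$, and no genericity perturbation.
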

\begin{proof}
See Appendix \ref{appendix:4}.
\end{proof}

From the first part of Proposition \ref{proposition:July19:2}, it is inferred that the optimal solution of problem (\ref{equa:gd}) must also be a fixed point solution of the equations given in (\ref{eqn:iterative:2}). As a result, the fixed point iteration by $\lambda_i^{(n)}=\mathtt{m}_i\left(\{\lambda_i^{(n-1)}\}\right), \forall i \in \mathcal{K_I}$ given in Algorithm 1 is still applicable for solving the uplink problem (\ref{equa:gd}) in this case. It is worth noting that for the fixed point iteration in this case, at each iteration $n$ we need to ensure that $\sum\limits_{k \neq i,k\in\mathcal{K_I}}\lambda_k^{(n-1)}\mv{h}_k^H\mv{h}_k + \beta\mv{I}-\mv{G}\succeq \mv{0}, \forall i \in \mathcal{K_I}$, since otherwise we will have $\lambda_i^{(n)}=\mathtt{m}_i\left(\{\lambda_i^{(n-1)}\}\right)<0$ if $\sum\limits_{k \neq i,k\in\mathcal{K_I}}\lambda_k^{(n-1)}\mv{h}_k^H\mv{h}_k + \beta\mv{I}-\mv{G}\nsucceq \mv{0}$ for any $i \in \mathcal{K_I}$, which results in an infeasible solution for problem (\ref{equa:gd}). The above requirement can be met by carefully selecting the initial point $\{\lambda_i^{(0)}\}$.

Specifically, we choose $\{\lambda_i^{(0)}\}$ as one feasible solution of problem (\ref{equa:gd}) under the given $\beta<\xi_E$, i.e., $\{\lambda_i^{(0)}\}$ satisfies that $\lambda_i^{(0)}\ge 0$ and ${\mathtt{SINR}}_i^{\mathrm{MAC}}(\{\tilde{\mv{w}}_i,\lambda_i^{(0)}\}) \geq \gamma_i, \ \forall i \in \mathcal{K_I}$ with $\{\tilde{\mv{w}}_i\}$ being any given set of receive beamforming vectors.{\footnote{{\color{black}{Such feasible solution of $\{\tilde{\mv{w}}_i\}$ and $\{\lambda_i^{(0)}\}$ can be constructed as follows. First, set $\{\tilde{\mv{w}}_i\}$ as the normalized vectors of any feasible downlink transmit beamforming vectors for problem (\ref{equa:jnl:6}). Next, under such $\{\tilde{\mv{w}}_i\}$, we can find one set of feasible $\{\lambda_i^{(0)}\}$ by simply solving a linear feasibility problem with the following linear constraints: $\lambda_i^{(0)}\ge 0$ and ${\mathtt{SINR}}_i^{\mathrm{MAC}}(\{\tilde{\mv{w}}_i,\lambda_i^{(0)}\}) \geq \gamma_i, \ \forall i \in \mathcal{K_I}$.}}}} Given such an initial point, the fixed point iteration of $\lambda_i^{(n)}=\mathtt{m}_i\left(\{\lambda_i^{(n-1)}\}\right), \forall i \in \mathcal{K_I}$ will then satisfy the following two properties. First, it yields an element-wise monotonically decreasing sequence of  $\{\lambda_i^{(n)}\}$, i.e., $\lambda_i^{(n)} \le \lambda_i^{(n-1)},  \forall i \in \mathcal{K_I}$. This can be shown based on the fact that $\lambda_i^{(0)}\ge \mathtt{m}_i\left(\{\lambda_i^{(0)}\}\right), \forall i \in \mathcal{K_I}$, given that $\{\lambda_i^{(0)}\}$ is feasible for problem (\ref{equa:gd}). Second, the resulting $\{\lambda_i^{(n)}\}$ is lower bounded by $\{\lambda_i^{\star}\}$, i.e., $\lambda_i^{(n)} \ge \lambda_i^{\star},  \forall i \in \mathcal{K_I}$. This is due to $\lambda_i^{(0)} \ge \lambda_i^\star, \forall i\in\mathcal{K_I}$ together with the fact that given $\lambda_i^{(n-1)} \ge \lambda_i^\star, \forall i\in\mathcal{K_I}$, $\lambda_i^{\star} = \mathtt{m}\left(\{\lambda_i^{\star}\}\right)\le\mathtt{m}\left(\{\lambda_i^{(n-1)}\}\right) = \lambda_i^{(n)},\forall i \in \mathcal{K_I}$, must be true.

By combing the fact that $\lambda_i^{(n)} \ge \lambda_i^{\star},  \forall i \in \mathcal{K_I}$, together with the second part of Proposition \ref{proposition:July19:2}, it then follows that $\sum\limits_{k \neq i,k\in\mathcal{K_I}}\lambda_k^{(n-1)}\mv{h}_k^H\mv{h}_k + \beta\mv{I}-\mv{G}\succeq \mv{0}, \forall i \in \mathcal{K_I}, \forall n$. As a result, the fixed point iteration with the above proposed initial point will converge to a feasible solution for problem (\ref{equa:gd}). Furthermore, in the following proposition, we show that this converged feasible solution is indeed optimal.

\begin{proposition}\label{theorem:converge}
If $\beta\mv{I}-\mv{G} \nsucceq \mv{0}$ and $g(\beta)>-\infty$, then the fixed point iteration converges to the optimal solution $\{\lambda_i^{\star}\}$ for problem (\ref{equa:gd}).
\end{proposition}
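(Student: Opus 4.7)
The plan is a three-step argument: monotone convergence of the iterates, identification of the limit as a fixed point of (\ref{eqn:iterative:2}), and uniqueness of this fixed point on the set of power vectors componentwise lower-bounded by $\{\lambda_i^{\star}\}$.

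First, since the discussion preceding the proposition has already established that the scalar sequence $\{\lambda_i^{(n)}\}$ is monotone non-increasing in $n$ and bounded below by $\lambda_i^{\star}\ge 0$ for every $i\in\mathcal{K_I}$, the standard monotone convergence theorem yields a componentwise limit $\lambda_i^{(\infty)}\ge\lambda_i^{\star}$. Second, I would pass to the limit in the recursion $\lambda_i^{(n)}=\mathtt{m}_i(\{\lambda_k^{(n-1)}\})$. Continuity of $\mathtt{m}_i(\cdot)$ at the limit point is guaranteed because $\lambda_k^{(n)}\ge\lambda_k^{\star}$ combined with Proposition \ref{proposition:July19:2}(2) forces $\sum_{k\neq i}\lambda_k^{(n)}\mv{h}_k^H\mv{h}_k+\beta\mv{I}-\mv{G}\succeq(\lambda_i^{\star}/\gamma_i)\mv{h}_i^H\mv{h}_i\succeq\mv{0}$ both along the iteration and at the limit, so the generalized Rayleigh-quotient minimization defining $\mathtt{m}_i(\cdot)$ is well-posed and depends continuously on $\{\lambda_k\}$ in a neighborhood of $\lambda^{(\infty)}$. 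Sending $n\to\infty$ then produces $\lambda_i^{(\infty)}=\mathtt{m}_i(\{\lambda_k^{(\infty)}\})$, so $\{\lambda_i^{(\infty)}\}$ paired with the MMSE receivers constructed as in (\ref{eqn:iterative:1}) is a feasible solution of (\ref{equa:gd}) with every SINR constraint tight.

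The third and decisive step is to show $\lambda_i^{(\infty)}=\lambda_i^{\star}$ for every $i$. My plan is to rule out $\lambda^{(\infty)}\neq\lambda^{\star}$ by a uniqueness-of-fixed-point argument. Writing $\mu_i:=\lambda_i^{(\infty)}-\lambda_i^{\star}\ge 0$ and comparing the two fixed-point identities $\lambda_i^{(\infty)}=\mathtt{m}_i(\{\lambda_k^{(\infty)}\})$ and $\lambda_i^{\star}=\mathtt{m}_i(\{\lambda_k^{\star}\})$, while using the MMSE minimizer $\tilde{\mv{w}}_i^{\star}$ associated with $\lambda^{\star}$ as a feasible test vector in the minimization at $\lambda^{(\infty)}$, yields the coordinate-wise comparison $\mu_i\le\sum_{k\neq i}M_{ik}\mu_k$ for every $i\in\mathcal{K_I}$, with $M_{ik}=\gamma_i|\mv{h}_k\tilde{\mv{w}}_i^{\star}|^2/|\mv{h}_i\tilde{\mv{w}}_i^{\star}|^2\ge 0$ and $M_{ii}=0$. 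Provided $\rho(M)<1$, the matrix $\mv{I}-M$ is a nonsingular $M$-matrix with entrywise nonnegative inverse, which forces $\mu\le\mv{0}$, and combined with $\mu\ge\mv{0}$ gives $\mu=\mv{0}$, i.e., $\lambda^{(\infty)}=\lambda^{\star}$.

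The hard part is establishing $\rho(M)<1$. In the classical case where $\beta\mv{I}-\mv{G}\succeq\mv{0}$ this would follow immediately from Yates' standard-interference-function framework (positivity, monotonicity, scalability); in the present indefinite setting the scalability axiom is not automatic, so my plan is to derive $\rho(M)<1$ directly from the feasibility of $\{\lambda_i^{\star}\}$ together with the semidefiniteness bound in Proposition \ref{proposition:July19:2}(2), which is precisely the property that keeps the inner Rayleigh-quotient minimizations well-posed and strictly positive in a neighborhood of $\lambda^{\star}$. This is the technical crux and where the bulk of the effort would go.
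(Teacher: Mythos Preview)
Your first two steps---monotone convergence of $\{\lambda_i^{(n)}\}$ to some limit $\lambda^{(\infty)}\ge\lambda^{\star}$, and identification of $\lambda^{(\infty)}$ as a fixed point of (\ref{eqn:iterative:2}) via continuity of $\mathtt{m}_i(\cdot)$---are sound and essentially match the paper's argument.

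The divergence is in your third step. You attempt to show $\lambda^{(\infty)}=\lambda^{\star}$ by a spectral-radius uniqueness argument: compare the two fixed-point identities, derive $\mu\le M\mu$ with $\mu=\lambda^{(\infty)}-\lambda^{\star}\ge\mv{0}$, and conclude $\mu=\mv{0}$ provided $\rho(M)<1$. You correctly flag $\rho(M)<1$ as ``the technical crux'' but do not prove it; you only sketch that it should follow from Proposition~\ref{proposition:July19:2}(2). Since $\beta\mv{I}-\mv{G}\nsucceq\mv{0}$ destroys the scalability axiom of the standard-interference-function framework, this is a genuine obstacle, and your proposal leaves it open.

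The paper sidesteps this difficulty entirely. Instead of proving uniqueness of the fixed point directly, it recalls from the proof of Proposition~\ref{theorem:macbc} (specifically the reformulation in (\ref{eqn:proof:convex})) that problem (\ref{equa:gd}) is equivalent to problem (\ref{equa:90}), which is a \emph{convex} optimization problem in $\{\lambda_i\}$. Any fixed point of (\ref{eqn:iterative:2}) is a stationary point of this convex problem, and for a convex problem every stationary point is globally optimal. Hence $\lambda^{(\infty)}$ is optimal, i.e., $\lambda^{(\infty)}=\lambda^{\star}$. This one-line convexity argument replaces your entire $\rho(M)<1$ analysis and requires no new estimate; the heavy lifting was already done in establishing (\ref{eqn:proof:convex}). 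Your route may ultimately be salvageable, but it is strictly harder than necessary given what has already been proved in the paper.
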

\begin{proof}
See Appendix \ref{appendix:7}.
\end{proof}

With the optimal $\{\lambda_i^{\star}\}$ at hand, $\{\tilde{\mv{w}}_i^{\star}\}$ can be obtained from (\ref{eqn:iterative:1}). Thus, we have solved the uplink problem (\ref{equa:gd}) in this case.

We then map $\{\tilde{\mv{w}}_i^{\star},\lambda^\star_i\}$ for the uplink problem (\ref{equa:gd}) to $\{{\mv{w}}_i^{\star}\}$ for the downlink problem (\ref{equation:13}). Similar to the case of $\beta\mv{I}-\mv{G} \succeq \mv{0}$,  $\{{\mv{w}}_i^{\star}\}$ can be obtained as $\mv{w}_i^\star = \sqrt{p_i^\star} \tilde{\mv{w}}_i^\star, \forall i \in \mathcal{K_I}$, with $\mv{p}^\star = [p_1^\star,\ldots,p_{K_I}^\star]^T$ given by (\ref{solution}). Therefore, we have solved the downlink problem (\ref{equation:13}) when $g(\beta)$ is bounded from below.

Next, we consider the case of $g(\beta)=-\infty$. In this case, problems (\ref{equation:13}) and (\ref{equa:gd}) are no more equivalent, since it is evident that the optimal value of problem (\ref{equa:gd}) should be no smaller than zero, i.e., $g_{\rm D}(\beta) \ge 0$, and thus $g(\beta) < g_{\rm D}(\beta)$ must be true. However, we can still apply the fixed point iteration of $\lambda_i^{(n)}=\mathtt{m}_i\left(\{\lambda_i^{(n-1)}\}\right), \forall i \in \mathcal{K_I}$ together with an initial feasible point $\{\lambda_i^{(0)}\}$ to solve problem (\ref{equa:gd}), provided that we check the unboundedness by 	
examining the positive semi-definiteness of the matrix $\sum\limits_{k \neq i,k\in\mathcal{K_I}}\lambda_k^{(n)}\mv{h}_k^H\mv{h}_k + \beta\mv{I}-\mv{G}, \forall i\in\mathcal{K_I}$. More specifically, we have the following proposition.
\begin{proposition}\label{proposition:4.5}
If $\beta\mv{I}-\mv{G} \nsucceq \mv{0}$ and $g(\beta)=-\infty$, then the fixed point iteration always converges to a solution with $\sum\limits_{k \neq i,k\in\mathcal{K_I}}\lambda_k^{(n)}\mv{h}_k^H\mv{h}_k + \beta\mv{I}-\mv{G}\nsucceq \mv{0}$ for some $i\in\mathcal{K_I}$ for problem (\ref{equa:gd}).
\end{proposition}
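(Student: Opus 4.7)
The plan is to argue by contradiction: assume that, under $\beta{\mv I}-{\mv G} \nsucceq {\mv 0}$ and $g(\beta)=-\infty$, the fixed-point iteration nevertheless proceeds with $\sum_{k\neq i,k\in\mathcal{K_I}}\lambda_k^{(n)}{\mv h}_k^H{\mv h}_k + \beta{\mv I}-{\mv G}\succeq {\mv 0}$ for all $i\in\mathcal{K_I}$ at every iteration index $n$. Under this standing PSD assumption, the map $\mathtt{m}_i$ in (\ref{eqn:iterative:2}) is well-posed and returns nonnegative values, so $\{\lambda_i^{(n)}\}$ is well-defined and nonnegative throughout. The aim is to show that such a sequence must in fact force $g(\beta)$ to be bounded below, delivering the needed contradiction.

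First, I would invoke the monotonicity argument established in the paragraph preceding Proposition \ref{theorem:converge}: starting from a feasible $\{\lambda_i^{(0)}\}$ for (\ref{equa:gd}), the sequence $\{\lambda_i^{(n)}\}$ is componentwise nonincreasing and bounded below by zero, so by continuity of $\mathtt{m}_i$ it converges to a limit $\{\lambda_i^{(\infty)}\}$ that is a fixed point of the iteration and makes every SINR constraint in (\ref{equa:gd}) hold with equality. The PSD hypothesis passes to the limit, giving ${\mv A}_i\triangleq\sum_{k\neq i,k\in\mathcal{K_I}}\lambda_k^{(\infty)}{\mv h}_k^H{\mv h}_k + \beta{\mv I}-{\mv G}\succeq {\mv 0}$ for all $i\in\mathcal{K_I}$. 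I would then upgrade this to the stronger matrix inequality ${\mv A}_i\succeq (\lambda_i^{(\infty)}/\gamma_i){\mv h}_i^H{\mv h}_i$ by observing that the fixed-point equation $\lambda_i^{(\infty)}=\mathtt{m}_i(\{\lambda_i^{(\infty)}\})$ identifies $\lambda_i^{(\infty)}/\gamma_i$ as the smallest generalized eigenvalue of the pencil $({\mv A}_i,{\mv h}_i^H{\mv h}_i)$; the generalized Rayleigh quotient characterization then yields $\tilde{\mv u}^H {\mv A}_i \tilde{\mv u}\ge (\lambda_i^{(\infty)}/\gamma_i)\tilde{\mv u}^H{\mv h}_i^H{\mv h}_i\tilde{\mv u}$ for every $\tilde{\mv u}$, mirroring exactly the second part of Proposition \ref{proposition:July19:2}.

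Next, I would close the argument through Lagrangian weak duality for the downlink problem (\ref{equation:13}). Attaching nonnegative multipliers $\mu_i$ to the SINR constraints and regrouping, the Lagrangian equals $\sum_{i\in\mathcal{K_I}}{\mv w}_i^H\bigl(\beta{\mv I}-{\mv G}+\sum_{k\neq i,k\in\mathcal{K_I}}\mu_k{\mv h}_k^H{\mv h}_k-(\mu_i/\gamma_i){\mv h}_i^H{\mv h}_i\bigr){\mv w}_i+\sum_{i\in\mathcal{K_I}}\mu_i\sigma_i^2$. Plugging in $\mu_i=\lambda_i^{(\infty)}$, the upgraded matrix inequality renders every quadratic-form coefficient positive semidefinite, so the Lagrangian is bounded below by $\sum_{i\in\mathcal{K_I}}\lambda_i^{(\infty)}\sigma_i^2\ge 0$, attained at ${\mv w}_i={\mv 0}$. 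Weak duality then gives $g(\beta)\ge \sum_{i\in\mathcal{K_I}}\lambda_i^{(\infty)}\sigma_i^2\ge 0$, contradicting $g(\beta)=-\infty$. Hence the PSD condition must fail at some iteration for some $i\in\mathcal{K_I}$, which is the claim.

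The hard part will be making the convergence step fully rigorous: one must verify that the PSD property is preserved through the limit (a standard closedness argument) and exclude degenerate limiting fixed points in which ${\mv h}_i$ sits in the null space of ${\mv A}_i^{(\infty)}$, as this would make the generalized Rayleigh quotient characterization break down and would also potentially undermine continuity of $\mathtt{m}_i$. Under Assumption \ref{assumption1} on independently distributed user channels, such degeneracies occur with probability zero, so the argument proceeds almost surely, consistent in spirit with the probabilistic conclusions of Propositions \ref{proposition:3.1} and \ref{proposition:3.2}.
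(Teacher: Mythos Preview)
Your proposal is correct and follows essentially the same logical route as the paper's proof: both establish the contrapositive that the existence of a nonnegative fixed point of the iteration forces $g(\beta)$ to be bounded below, and then conclude via the monotone-decreasing property of the sequence. The paper packages the first step as a separate lemma (Lemma~A.2), proving it by substituting the MMSE filter into the fixed-point equations, invoking the matrix-inequality characterization $\mv{A}\succeq \mv{b}\mv{b}^H \Leftrightarrow \mv{b}^H\mv{A}^\dagger\mv{b}\le 1$ (Lemma~A.1), and thereby exhibiting the fixed point as a feasible solution of the SDR dual problem~(\ref{equa:51}); strong duality between~(\ref{equa:theorem1}) and~(\ref{equa:51}) together with SDR tightness then gives $g(\beta)>-\infty$.

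Your argument reaches the same conclusion slightly more directly: the generalized Rayleigh-quotient step is exactly the content of Lemma~A.1, and your Lagrangian weak-duality calculation bypasses the SDR machinery entirely, needing only that the dual variables $\mu_i=\lambda_i^{(\infty)}$ render each coefficient matrix positive semidefinite. This is a mild simplification, since weak duality suffices for the contradiction and the tightness of the SDR is not actually required here. You are also more explicit than the paper about why the monotone sequence must converge to a fixed point under the standing PSD hypothesis and about the degeneracy caveat; the paper's final sentence handles this step rather tersely.
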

\begin{proof}
See Appendix \ref{appendix:6:add}.
\end{proof}
Proposition \ref{proposition:4.5} thus provides an efficient way to check the unboundedness of $g(\beta)$ when $g(\beta)=-\infty$.

It is interesting to make a comparison between the two cases of $g(\beta)>-\infty$ and $g(\beta)=-\infty$ in solving the uplink problem (\ref{equa:gd}) by the fixed point iteration. With an initial feasible point for both cases, an element-wise monotonically decreasing sequence of  $\{\lambda_i^{(n)}\}$ is obtained by the fixed point iteration. However, $\{\lambda_i^{(n)}\}$ is lower bounded by $\{\lambda_i^{\star}\}$ in the former case, while it is unbounded from below in the latter case. Therefore, the same fixed point iteration will lead to different converged solutions for the two cases.

To summarize, an algorithm for solving problem (\ref{equation:13}) with given $0\le\beta<\xi_E$ is provided in Table II as Algorithm 2. Note that Algorithm 2 differs from Algorithm 1 in two main aspects: First, in step a), the initial point $\{\lambda_i^{(0)}\}$ should be set as a feasible solution for problem (\ref{equa:gd}); and second, step b-2) is added to check the unboundedness for $g(\beta)$.

\begin{table}[htp]
\begin{center}
\caption{ Algorithm for Solving Problem (\ref{equation:13}) with given $0\le\beta<\xi_E$}  \vspace{0.1cm}
 \hrule
\vspace{0.1cm} \textbf{Algorithm 2}  \vspace{0.1cm}
\hrule \vspace{0.1cm}
\begin{itemize}
\item[a)] Initialize: $n=0$, and set $\lambda_i^{(0)} \ge 0,\forall i\in\mathcal{K_I},$ as a feasible solution of problem (\ref{equa:gd}).
\item[b)] {\bf Repeat:}
    \begin{itemize}
    \item[1)] $n \gets n+1$;
    \item[2)] Check whether there exists an $i\in\mathcal{K_I}$ such that $\sum\limits_{k \neq i,k\in\mathcal{K_I}}\lambda_k^{(n-1)}\mv{h}_k^H\mv{h}_k + \beta\mv{I}-\mv{G}\nsucceq \mv{0}$. If yes, set $g(\beta)=-\infty$, and exit the algorithm. Otherwise, continue;
    \item[3)] Update the uplink transmit power as $\lambda_i^{(n)}=\mathtt{m}_i\left(\{\lambda_i^{(n-1)}\}\right), \forall i \in \mathcal{K_I},$ with $\mathtt{m}_i\left(\cdot\right)$ given in (\ref{eqn:iterative:2}).
    \end{itemize}
\item[c)] {\bf Until} $|\lambda_i^{(n)} - \lambda_i^{(n-1)}| \le \epsilon,\forall i \in \mathcal{K_I}$, where $\epsilon$ is the required accuracy.
\item[d)] Set $\lambda_i^{\star}=\lambda_i^{(n)},\forall i \in \mathcal{K_I},$ and compute the uplink receive beamforming vectors $\{\tilde{{\mv{w}}}_i^{\star}\}$ by (\ref{eqn:iterative:1}).
\item[e)] Compute the downlink beamforming vectors as $\mv{w}_i^\star = \sqrt{p_i^\star} \tilde{\mv{w}}_i^\star, \forall i \in \mathcal{K_I}$, where $\{p_i^{\star}\}$ is given by (\ref{solution}).
\end{itemize}
\vspace{0.1cm} \hrule \vspace{0.1cm}\label{algorithm:2}
\end{center}
\end{table}

\subsubsection{Minimize $f_1(\beta)$ over $\beta\ge 0$}\label{sec:search}

By combing the solutions to problem (\ref{equation:13}) for the two cases of $\beta \ge \xi_E$ and $0 \le \beta < \xi_E$, we obtain $g(\beta)$ and thus $f_1(\beta)=\beta P-g(\beta),\forall \beta \ge 0$. We are then ready to solve $\mathtt{(P1.1)}$ by finding the optimal $\beta^\star \ge 0$ to minimize $f_1(\beta)$. It is easy to show that $f_1(\beta)$ is a convex function, for which the subgradient at given $\beta\ge 0$ is $\upsilon(\beta)\triangleq P-\sum\limits_{i\in\mathcal{K_I}}{\mv w}_i^{\star H} {\mv w}^{\star}_i$ if $f_1(\beta)<\infty$ in the case of $g(\beta)> -\infty$. On the other hand, if $f_1(\beta)=\infty$ in the case of $g(\beta)=-\infty$, then it is evident that $\beta^\star > \beta$. By applying the above two results, we can thus use the simple bisection method to obtain the optimal $\beta^\star$ to minimize $f_1(\beta)$. As a result, the optimal beamforming solution $\{{\mv w}^{\star}_i\}$ in (\ref{equation:13}) corresponding to $\beta^\star$ becomes the optimal solution of $\mathtt{(P1.1)}$.

\subsection{Algorithm for $\mathtt{(P2)}$ via Uplink-Downlink Duality}

Next, consider $\mathtt{(P2)}$ for the case with Type II ID receivers. As shown in Proposition \ref{proposition:3.2}, employing only one energy beam aligning to the OeBF is optimal for $\mathtt{(P2)}$. Hence, we can replace ${\mv v}_{1}, \ldots, {\mv v}_{K_E}$ by one common energy beam $\mv w_E = \sqrt{q}\mv{v}_E, q\ge 0$ without loss of optimality for $\mathtt{(P2)}$. By noting that $\mv w_E^H{\mv G}{\mv w}_E = q\xi_E$ and $\|{\mv w}_E\|^2 = q$, we accordingly reformulate $\mathtt{(P2)}$ as
\begin{align*}
{\mathtt{(P2.1)}}:\mathop\mathtt{max}_{\{{\mv w}_i\},q\ge 0} ~&  \sum\limits_{i\in\mathcal{K_I}}{\mv w}_i^H{\mv G}{\mv w}_i  + q\xi_E\\
{\mathtt{s.t.}}~& \mathtt{SINR}_i \ge \gamma_i, \forall i \in \mathcal{K_I} \\
& \sum \limits_{i\in\mathcal{K_I}}\|{{\mv w}_i}\|^2 + q \le P.
\end{align*}
Similar to ${\mathtt{(P1.1)}}$, we introduce the dual function of ${\mathtt{(P2.1)}}$ as

\begin{align}
f_2(\beta) \triangleq &\nonumber \\\mathop\mathtt{max}_{\{{\mv w}_i\},q\ge 0} &\sum\limits_{i\in\mathcal{K_I}}{\mv w}_i^H{\mv G}{\mv w}_i  +  q\xi_E - \beta\left(\sum\limits_{i\in\mathcal{K_I}}\|{{\mv w}_i}\|^2  + q - P\right) \nonumber \\
{\mathtt{s.t.}}~& \mathtt{SINR}_i \ge \gamma_i, \forall i \in \mathcal{K_I},\label{equa:jnl:9}
\end{align}
where $\beta\ge0$ is the dual variable associated with the transmit sum-power constraint in ${\mathtt{(P2.1)}}$. Accordingly, the dual problem can be defined as
\begin{align*}
\mathtt{(P2.2)}:\mathop\mathtt{min}_{\beta \ge 0}f_2(\beta).
\end{align*}
We then have the following proposition.
\begin{proposition}\label{proposition:4.4}
Strong duality holds between ${\mathtt{(P2.1)}}$ and ${\mathtt{(P2.2)}}$.
\end{proposition}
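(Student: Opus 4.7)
The plan is to route the strong-duality claim for $\mathtt{(P2.1)}$ through a tight semidefinite relaxation, in the same spirit as Proposition \ref{proposition:4.1}. Let $\mathtt{(SDR2.1)}$ denote the SDR of $\mathtt{(P2.1)}$ obtained by lifting $\mv{W}_i\triangleq \mv{w}_i\mv{w}_i^H\succeq \mv{0}$ and dropping the rank-one constraints while retaining $q\ge 0$ as a scalar. Note that $\mathtt{(SDR2.1)}$ is precisely $\mathtt{(SDR2)}$ restricted to $\mv{W}_E=q\mv{v}_E\mv{v}_E^H$, which by Proposition \ref{proposition:3.2} is without loss of optimality for $\mathtt{(SDR2)}$. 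Combined with the tightness of $\mathtt{(SDR2)}$ already established there, this gives $v^\star_{\mathtt{(P2.1)}} = v^\star_{\mathtt{(SDR2.1)}}$.

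Next, $\mathtt{(SDR2.1)}$ is a convex SDP, and under the standing feasibility assumption one can construct a Slater point by perturbing any feasible solution with a small positive-definite term and keeping slack in the sum-power and SINR constraints. Consequently, strong duality holds for any partial Lagrangian dualization of $\mathtt{(SDR2.1)}$. Dualizing only the sum-power constraint with multiplier $\beta\ge 0$ yields a dual function $\tilde f_2(\beta)$, and SDP strong duality delivers $v^\star_{\mathtt{(SDR2.1)}}=\min_{\beta\ge 0}\tilde f_2(\beta)$.

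I would then compare $\tilde f_2(\beta)$ with $f_2(\beta)$ pointwise in $\beta$. Any feasible point $(\{\mv{w}_i\},q)$ of the inner maximization defining $f_2(\beta)$ lifts to a feasible $(\{\mv{w}_i\mv{w}_i^H\},q)$ for the $\tilde f_2(\beta)$ subproblem with identical objective, so $\tilde f_2(\beta)\ge f_2(\beta)$ for every $\beta\ge 0$. Weak duality for $\mathtt{(P2.1)}$ gives $v^\star_{\mathtt{(P2.1)}}\le \min_{\beta\ge 0} f_2(\beta)$. Chaining the inequalities yields
\begin{align*}
v^\star_{\mathtt{(P2.1)}} \;\le\; \min_{\beta\ge 0} f_2(\beta) \;\le\; \min_{\beta\ge 0}\tilde f_2(\beta) \;=\; v^\star_{\mathtt{(SDR2.1)}} \;=\; v^\star_{\mathtt{(P2.1)}},
\end{align*}
forcing equality throughout, which is exactly the strong duality between $\mathtt{(P2.1)}$ and $\mathtt{(P2.2)}$ being claimed.

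The one delicate point I expect is handling the regime $\beta<\xi_E$: for such $\beta$, the coefficient $\xi_E-\beta$ of $q$ is positive and the inner maximization drives $q\to\infty$, so both $f_2(\beta)$ and $\tilde f_2(\beta)$ equal $+\infty$. This is harmless for the sandwich argument since both sides agree at $+\infty$, and it simply implies the optimal dual multiplier $\beta^\star$ must lie in $[\xi_E,\infty)$. Apart from that caveat, all the heavy lifting is done by Proposition \ref{proposition:3.2}, so no new KKT or rank-reduction analysis is needed beyond what has already been carried out for $\mathtt{(SDR2)}$.
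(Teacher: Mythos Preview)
Your proposal is correct and follows essentially the same route as the paper, which defers the proof to that of Proposition~\ref{proposition:4.1}: lift $\mathtt{(P2.1)}$ to its SDR, use Proposition~\ref{proposition:3.2} to get $v_{\mathtt{(P2.1)}}=v_{\mathtt{(SDR2.1)}}$, apply convex (Slater) strong duality to the SDR's partial Lagrangian, and then sandwich using $\tilde f_2(\beta)\ge f_2(\beta)$ together with weak duality. Your treatment of the regime $\beta<\xi_E$ is also exactly what the paper records as Proposition~\ref{proposition:4.9}.
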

\begin{proof}
The proof is similar to that of Proposition \ref{proposition:4.1} and thus is omitted for brevity.
\end{proof}

Given the above proposition, we can solve ${\mathtt{(P2.1)}}$ by solving ${\mathtt{(P2.2)}}$, i.e., first solving problem (\ref{equa:jnl:9}) to obtain $f_2(\beta)$ for given $\beta\ge 0$ and then searching the optimal $\beta \ge 0$, denoted by $\beta^*$, to minimize $f_2(\beta)$.

First, consider problem (\ref{equa:jnl:9}) for given $\beta\ge 0$. We then have the following proposition.
\begin{proposition}\label{proposition:4.9}
In order for $f_2(\beta)$ to be bounded from above, it must hold that $\beta \ge \xi_E$.
\end{proposition}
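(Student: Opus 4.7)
The plan is to exploit the fact that in the dual function $f_2(\beta)$ defined in (\ref{equa:jnl:9}), the energy power variable $q$ is decoupled from $\{\mv{w}_i\}$ in the constraints: the SINR constraints involve only the information beamforming vectors $\{\mv{w}_i\}$, while $q$ is subject only to the nonnegativity constraint $q \ge 0$. Thus $q$ can be optimized independently, and the argument reduces to inspecting the coefficient of $q$ in the objective.

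First, I would collect the $q$-dependent terms in the objective of (\ref{equa:jnl:9}):
\begin{align*}
q\xi_E - \beta q = (\xi_E - \beta)\,q.
\end{align*}
Since $q$ ranges freely over $[0,\infty)$ with no upper bound, the supremum of this linear expression in $q$ is $+\infty$ whenever $\xi_E - \beta > 0$, and is $0$ (attained at $q = 0$) whenever $\xi_E - \beta \le 0$. Therefore, if $\beta < \xi_E$, by letting $q \to \infty$ while fixing any feasible $\{\mv{w}_i\}$ (which exists since $\mathtt{(P2)}$ is assumed feasible), the objective of (\ref{equa:jnl:9}) tends to $+\infty$, so $f_2(\beta) = +\infty$ and hence is unbounded from above.

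By contrapositive, boundedness of $f_2(\beta)$ forces $\beta \ge \xi_E$, which establishes the proposition. The argument is essentially a one-line observation, and there is no real obstacle; the main thing to make explicit is that $q$ appears unconstrained beyond $q \ge 0$ in (\ref{equa:jnl:9}) (the sum-power constraint having been moved into the Lagrangian), so the coefficient test for boundedness of the linear program over $q$ applies directly. I would also note, as a complementary remark, that when $\beta \ge \xi_E$ the optimal $q$ is $0$, and the remaining maximization over $\{\mv{w}_i\}$ coincides with the analogous problem (\ref{equa:jnl:8}) for $\mathtt{(P1.1)}$ already analyzed in Section \ref{sec:BC-MAC}, so $f_2(\beta)$ is indeed bounded in that regime (consistent with the claim).
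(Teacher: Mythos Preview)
Your proposal is correct and follows essentially the same argument as the paper: both observe that when $\beta < \xi_E$, the $q$-dependent term $(\xi_E - \beta)q$ in the objective of (\ref{equa:jnl:9}) is unbounded above as $q \to \infty$, since $q \ge 0$ is the only constraint on $q$ after the sum-power constraint has been dualized. The paper's proof is just a more terse statement of exactly this observation.
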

\begin{proof}
Suppose that $\beta < \xi_E$. In this case, it is easy to verify that the objective value of problem  (\ref{equa:jnl:9}) goes to infinity as $q \to \infty$, i.e.,  $f_2(\beta)$ is unbounded from above. Therefore, $\beta < \xi_E$ cannot hold in order for $f_2(\beta)$ to be bounded from above. This proposition is thus proved.
\end{proof}

Proposition \ref{proposition:4.9} specifies that $f_2(\beta)=\infty$ if $\beta<\xi_E$, and thus $\beta^*\ge \xi_E$ must hold for ${\mathtt{(P2.2)}}$. Notice that this result is different from the case of ${\mathtt{(P1.2)}}$  where $f_1(\beta)$ can be bounded from above even when $\beta<\xi_E$ and thus $\beta^{\star}<\xi_E$ may hold for ${\mathtt{(P1.2)}}$.

Given $\beta\ge\xi_E$, problem (\ref{equa:jnl:9}) is solved as follows. First, we express $f_2(\beta) = \beta P + h(\beta)-g(\beta)$ with $g(\beta)$ defined in (\ref{equation:13}), and $h(\beta)$ given by
\begin{align}
h(\beta) = \mathop\mathtt{max}_{q\ge 0} q(\xi_E-\beta).\label{equation:13:4.1}
\end{align}
Accordingly, problem  (\ref{equa:jnl:9}) can be decomposed into two subproblems (by discarding the irrelevant term $\beta P$), which are problem (\ref{equation:13}) for obtaining $g(\beta)$ and problem (\ref{equation:13:4.1}) for obtaining $h(\beta)$, respectively.

For problem (\ref{equation:13}) with $\beta \ge \xi_E$, Algorithm 1 in Table I directly applies to obtain the optimal solution of $\{{\mv w}_i^{\star}\}$. For problem (\ref{equation:13:4.1}) with $\beta \ge \xi_E$, it is easily verified that one solution is given by $q^{\star} = 0.$\footnote{Note that if $\beta=\xi_E$, then the optimal solution of $q$ is non-unique and can take any non-negative value in problem (\ref{equation:13:4.1}). For convenience, we let $q^\star = 0$ in this case.} With both $g(\beta)$ and $h(\beta)=0$ at hand, we can obtain $f_2(\beta)$ for the case of $\beta \ge \xi_E$. Then, we solve problem $\mathtt{(P2.2)}$ by finding the optimal $\beta^*$ to minimize $f_2(\beta)$. Since $f_2(\beta)$ is a convex function, we can apply the bisection method to minimize it over $\beta\ge\xi_E$, given that the subgradient of $f_2(\beta)$ at given $\beta\ge \xi_E$ can be shown to be $P-\sum\limits_{i\in\mathcal{K_I}}{\mv w}_i^{\star H} {\mv w}^{\star}_i-q^{\star}=P-\sum\limits_{i\in\mathcal{K_I}}{\mv w}_i^{\star H} {\mv w}^{\star}_i\triangleq\upsilon(\beta)$. Therefore, the optimal solution of $\mathtt{(P2.2)}$ can be obtained as $\beta^*$. Then, the corresponding solution $\{{\mv w}_i^\star\}$ for problem (\ref{equation:13}) becomes the optimal solution for problem $\mathtt{(P2.1)}$, denoted by $\{{\mv w}_i^*\}$. It should be pointed out that the optimal solution of $q$ in $\mathtt{(P2.1)}$, denoted by $q^{*}$, cannot be directly obtained as $q^* = 0$ if $\beta^*=\xi_E$; but instead $q^{*}$ should be obtained from the complementary slackness condition \cite{BoydVandenberghe2004}  $\beta^*\left(\sum\limits_{i\in\mathcal{K_I}}{\mv w}_i^{*H}{\mv w}_i^*  + q^{*} - P\right) = 0$ as $
q^{*}=P - \sum\limits_{i\in\mathcal{K_I}}{\mv w}_i^{*H}{\mv w}_i^*.
$
Therefore, problem $\mathtt{(P2.1)}$ is solved.

\subsection{Solution Comparison with Type I versus Type II ID Receivers}

\begin{figure}
\centering
 \epsfxsize=1\linewidth
    \includegraphics[width=8cm]{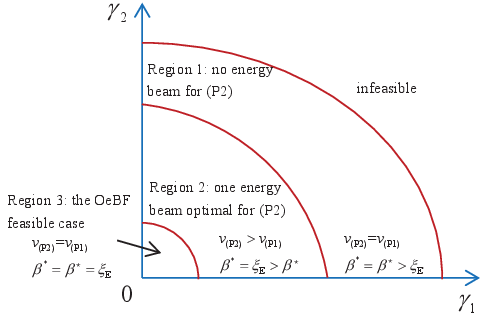}
\caption{Illustration of the optimal solution for $(\mathtt{P1})$ and $(\mathtt{P2})$.} \label{fig:ill2}
\end{figure}

Finally, we compare the optimal solutions for problems $(\mathtt{P1})$ or $(\mathtt{P1.1})$ with Type I ID receivers versus $(\mathtt{P2})$ or $(\mathtt{P2.1})$ with Type II ID receivers. We denote the optimal values of $(\mathtt{P1})$ and $(\mathtt{P2})$ as $v_{\mathtt{(P1)}}$ and $v_{\mathtt{(P2)}}$, respectively. From (\ref{equa:jnl:8}), (\ref{equation:13}), (\ref{equa:jnl:9}) and (\ref{equation:13:4.1}), and by noting that $h(\beta^*)=0$, it follows that $v_{\mathtt{(P1)}} = f_1(\beta^\star) = \beta^\star P-g(\beta^\star)$ and $v_{\mathtt{(P2)}}= f_2(\beta^*)= \beta^* P-g(\beta^*)$, where $\beta^\star$ and $\beta^*$ are the optimal dual solutions for $(\mathtt{P1})$ and $(\mathtt{P2})$, respectively. By observing that $\beta^*\ge \xi_E$ in $(\mathtt{P2})$ while both $\beta^\star\ge \xi_E$ and  $\beta^\star< \xi_E$ can occur in $(\mathtt{P1})$, we compare their optimal values based on $\beta^\star$ and $\beta^*$ over the region of all feasible SINR targets for three cases, where each case corresponds to one subregion as shown in Fig. \ref{fig:ill2} for the case of two ID receivers. In the first subregion with $\beta^{*}=\beta^\star>\xi_E$, it follows that $v_{\mathtt{(P1)}} = v_{\mathtt{(P2)}}$ and $q^{*} = 0$ in $(\mathtt{P2})$, indicated as Region 1 in Fig. \ref{fig:ill2}. In this case with sufficiently large SINR constraint values, the transmit power should be all used for information beams to ensure that the SINR constraints at ID receivers are all met, and no dedicated energy beam is needed for the optimal solutions of both $(\mathtt{P1})$ and $(\mathtt{P2})$. In the second subregion with $\beta^*=\xi_E >\beta^\star$, it follows that $v_{\mathtt{(P2)}} > v_{\mathtt{(P1)}}$ and $q^* > 0$ in $(\mathtt{P2})$, indicated as  Region 2 in Fig. \ref{fig:ill2}. In this case with moderate SINR constraint values, employing one energy beam is beneficial for Type II ID receivers as compared to no energy beam for Type I ID receivers. In the third subregion with $\beta^*=\beta^\star=\xi_E$, it follows that $v_{\mathtt{(P2)}}=v_{\mathtt{(P1)}}$ and $q^{*} \ge 0$ in $(\mathtt{P2})$, shown as Region 3 in Fig. \ref{fig:ill2}, which is the OeBF-feasible case given in Section \ref{sec:system} for sufficiently small SINR constraint values.

\section{Simulation Results}\label{sec:simulation}

In this section, we provide numerical examples to validate our results. We assume that the signal attenuation from the AP to all EH receivers is 30 dB corresponding to an equal distance of 1 meter, and that to all ID receivers is 70 dB at an equal distance of 20 meters. The channel vector $\mv{g}_j$'s and $\mv{h}_i$'s are randomly generated from i.i.d. Rayleigh fading {\color{black}(thus, satisfying Assumption \ref{assumption1})} with the  average channel powers set according to the above average attenuation values. We set $P=1$ Watt(W)  or 30 dBm, $\zeta=50\%$, $\sigma_i^2 = -50$ dBm, and $\gamma_i = \gamma, \forall i\in\mathcal{K_I}$. We also set $\alpha_j = \frac{1}{K_E}, \forall j\in\mathcal{K_E}$; thus the average harvested power of all EH receivers is considered.
\subsection{Performance Comparison of Type I versus Type II ID Receivers}

\begin{figure}
\centering
 \epsfxsize=1\linewidth
    \includegraphics[width=8cm]{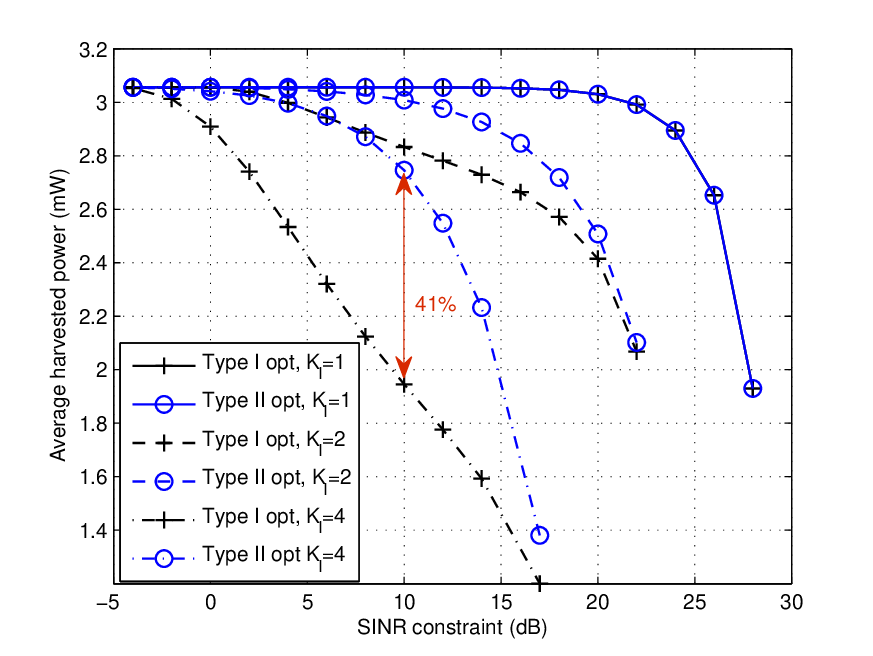}
\caption{Average harvested power versus SINR constraint with optimal beamforming designs.} \label{fig:5}
\end{figure}
Fig. \ref{fig:5} compares the average harvested power obtained by solving $\mathtt{(P1)}$ for Type I ID receivers  and that by $\mathtt{(P2)}$ for Type II ID receivers versus different SINR constraint values of $\gamma$ with fixed $M=4$ and $K_E=2$ and over 200 random channel realizations. It is observed that Type I and Type II ID receivers  have the same performance  when $K_I=1$, which is consistent with Proposition \ref{proposition:KI1}. With $K_I=2$ or $4$, it is observed that Type I and Type II ID receivers have similar performance when $\gamma$ is either large or small, while the latter  outperforms the former notably for moderate values of $\gamma$. The reasons can be explained by referring to Fig. \ref{fig:ill2} as follows. When $\gamma$ is sufficiently small, the OeBF-feasible case shown as Region 3 in Fig. \ref{fig:ill2} holds, where aligning all information beams in the direction of the OeBF is not only feasible but also optimal for both $\mathtt{(P1)}$ and $\mathtt{(P2)}$; thus, the same performance for both types of ID receivers is observed in Fig. \ref{fig:5}. On the other hand, when $\gamma$ is sufficiently large, this case corresponds to Region 1 in Fig. \ref{fig:ill2}, in which it is optimal to allocate all transmit power to information beams to ensure that the SINR constraints at ID receivers are all met; as a result, transmit power allocated to energy beams is zero for both types of ID receivers, and thus their performances are also identical. At last, for the case of moderate values of  $\gamma$ which corresponds to Region 2 in Fig. \ref{fig:ill2}, the considerable performance gain by Type II over Type I ID receivers is due to the use of one dedicated energy beam. For example, {\color{black}{under this particular channel setup,}} as shown in Fig. \ref{fig:5}, a 41\% average harvested power gain is achieved for EH receivers with Type II ID receivers as compared to Type I ID receivers when $\gamma=10$ dB and $K_I=4$, thanks to the cancellation of (known) energy signals at ID receivers.

\begin{figure}
\centering
 \epsfxsize=1\linewidth
    \includegraphics[width=8cm]{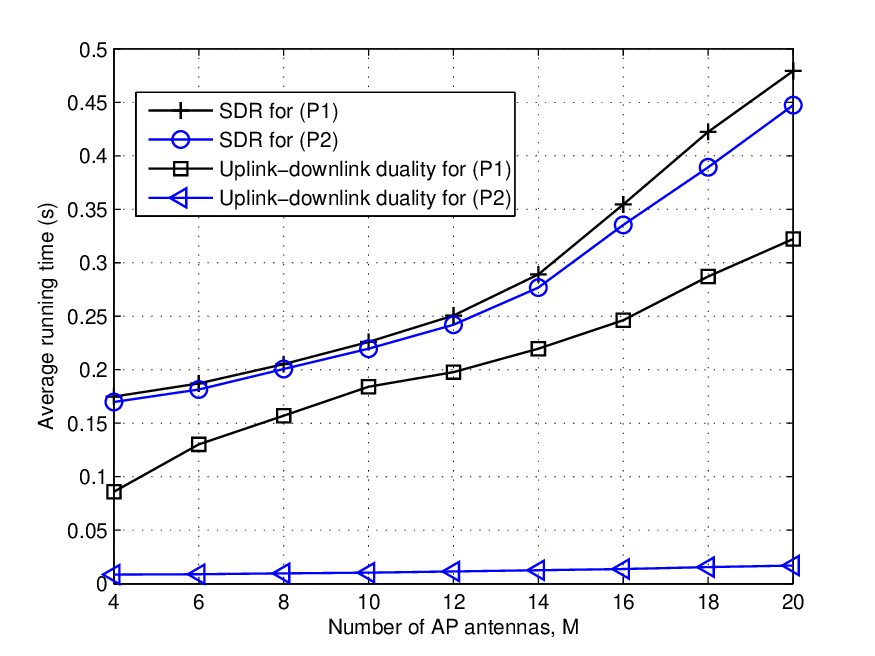}
\caption{Run-time comparison of SDR and uplink-downlink duality based algorithms.} \label{fig:Complexity}
\end{figure}

\subsection{Complexity Comparison of SDR and Uplink-Downlink Duality Based Algorithms}

Next, we compare the complexity of the SDR and uplink-downlink duality based algorithms for solving $\mathtt{(P1)}$ and $\mathtt{(P2)}$, by evaluating their average running times.{\footnote{It has been shown in \cite{HuangPalomar2} that the SDRs of $\mathtt{(P1)}$ and $\mathtt{(P2)}$ can be solved with a worst-case complexity of $\mathcal{O}\left( \left(K_I^3 M^{3.5} + K_I^4\right) \log(1/\varepsilon) \right)$, given a solution accuracy $\varepsilon > 0$. However, we cannot obtain the analytic complexity orders of the two uplink-downlink duality based algorithms, since they depend on both the inner fixed-point iteration (say, parameter $\epsilon$ in Algorithms 1 and 2) and the outer bisection iteration. Therefore, it is difficult to provide a rigorous analytic complexity comparison for the two approaches.}} We conduct the simulations by using Matlab on a computer equipped with an Intel Core i5-2500 @3.3GHz processor and 8GB of RAM memory.

Fig. \ref{fig:Complexity} shows the average running times of different algorithms versus the number of transmit antennas at the AP $M$ with fixed $K_I=4$, $K_E=2$, and $\gamma = 10$ dB. It is observed that for solving either $\mathtt{(P1)}$ or $\mathtt{(P2)}$, the SDR based algorithm has a longer running time than the uplink-downlink duality based algorithm for a given $M$. This is due to the fact that the SDR is performed over matrices with much higher number of unknowns than that of the uplink-downlink duality based algorithm involving beamforming vectors only. It is also observed that the uplink-downlink duality based algorithm for solving $\mathtt{(P1)}$ consumes much longer running time than that for $\mathtt{(P2)}$. This is because in the former case the algorithm needs to check the positive semidefiniteness of $\sum\limits_{k \neq i,k\in\mathcal{K_I}}\lambda_k^{(n-1)}\mv{h}_k^H\mv{h}_k + \beta\mv{I}-\mv{G}, \forall {i\in\mathcal{K_I}}$, in each iteration of $n$ when implementing Algorithm 2 (cf. step b-2) in Algorithm 2), which takes additional running time.

\subsection{Performance Comparison of Optimal versus Suboptimal Designs}

Finally, we compare the performances of our proposed optimal joint information/energy beamforming designs with two suboptimal designs for Type I and Type II ID receivers, respectively, which are described as follows.
\subsubsection{Separate information/energy beamforming design with Type I ID receivers} In this scheme, the information beams are first designed to minimize the required transmit sum-power for satisfying the SINR constraints at all ID receivers, while one energy beam is then added to maximize the weighted sum-power harvested by the EH receivers with the remaining power subject to the constraint of no interference to all ID receivers (since Type I ID receivers are considered here and thus any interference from energy signals cannot be cancelled at ID receivers). Notice that this scheme is applicable only for the case of $K_I \le M-1$. First, the information beams are obtained by solving the following problem:
\begin{align}
\{{\mv w}_i^{\rm{min}}\} =
\mathtt{arg}\mathop\mathtt{min}_{\{{\mv w}_i\}} ~& \sum \limits_{i\in\mathcal{K_I}}\|{{\mv w}_i}\|^2\nonumber \\
\mathtt{s.t.}~&  \mathtt{SINR}_i \ge {\gamma_i}, \forall i \in \mathcal{K_I},\label{eqn:30}
\end{align}
which can be solved by conventional methods such as the fixed-point iteration based on the uplink-downlink duality similar to Algorithm 1. After obtaining ${\mv w}_i = {\mv w}_i^{\rm{min}}, \forall i\in\mathcal{K_I}$, the energy beam ${\mv w}_E$ is then optimized over the null space of $\mv{H}=\left[\mv{h}_1^T~\cdots ~\mv{h}_{K_I}^T\right]^T$, which can be obtained by solving the following problem:
        \begin{align}
        \mathop\mathtt{max}_{{\mv w}_{E}} ~&  {\mv w}_E^H{\mv G}{\mv w}_E \nonumber \\
        {\mathtt{s.t.}}~& \mv{H}{\mv w}_E = \mv{0},\nonumber\\
        &\|{\mv w}_E\|^2 \le P-\sum\limits_{i\in\mathcal{K_I}}\|{\mv w}_i^{\rm{min}}\|^2.\label{equa:jnl:sep}
        \end{align}
Let the singular value decomposition (SVD) of $\mv{H}$ be given by $\mv{H} = \mv{U}\mv{\Lambda}\left[{\mv{V}}~\bar{\mv{V}}\right]^H$, where $\bar{\mv{V}}\in\mathbb{C}^{M\times(M-K_I)}$ consists of the vectors corresponding to zero singular values of $\mv{H}$ and thus spans the null space of $\mv{H}$. Then the optimal solution of (\ref{equa:jnl:sep}) can be obtained as ${\mv w}_E=\sqrt{P-\sum\limits_{i\in\mathcal{K_I}}\|{\mv w}_i^{\rm{min}}\|^2}\bar{\mv{V}}{\mv v}'_E$ with ${\mv v}'_E$ being the dominant eigenvector of $\bar{\mv{V}}^{H}\mv{G}\bar{\mv{V}}$.
\subsubsection{Separate information/energy beamforming design with Type II ID receivers} In this scheme, we first set the information beams to satisfy the SINR constraints at all ID receivers with the minimum  transmit sum-power as ${\mv w}_i = {\mv w}_i^{\rm{min}}, \forall i\in\mathcal{K_I}$, given in (\ref{eqn:30}). Then, we allocate the remaining power to the energy beam aligning to the OeBF to maximize the weighted sum-power transferred to EH receivers (since Type II ID receivers are considered in this case, which can cancel the interference due to energy signals), which is given by ${\mv w}_E=\sqrt{P-\sum\limits_{i\in\mathcal{K_I}}\|{\mv w}_i^{\rm{min}}\|^2}{\mv v}_E$.

\begin{figure}
\centering
 \epsfxsize=1\linewidth
    \includegraphics[width=8cm]{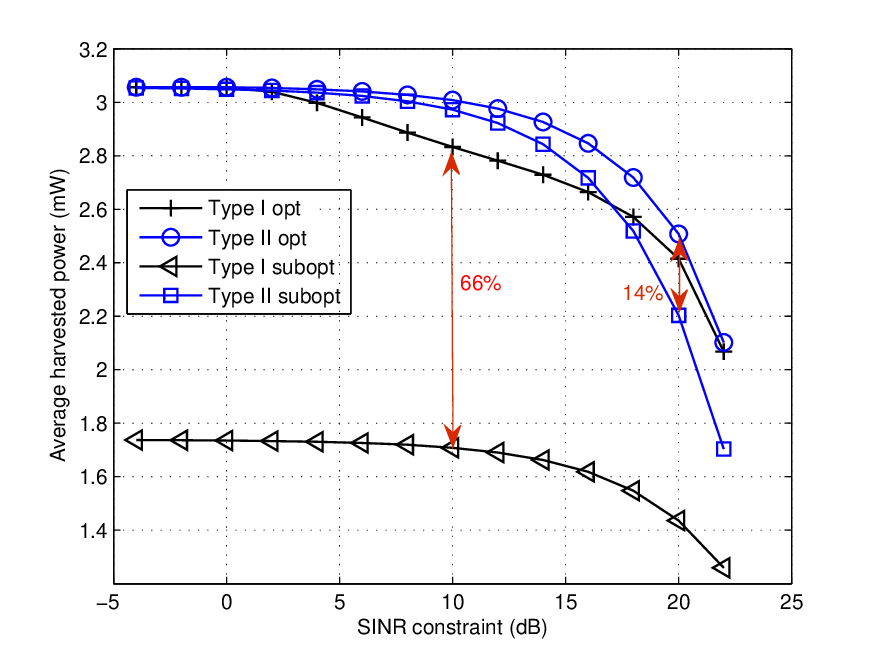}
\caption{Performance comparison of optimal versus suboptimal beamforming designs.} \label{fig:6}
\end{figure}

Fig. \ref{fig:6} compares the average harvested power over the SINR constraint for both optimal and suboptimal designs for the two types of ID receivers, where $M=4, K_E=2$  and $K_I=2$. With Type I ID receivers, it is observed that the separate information and energy beamforming design approach performs severely worse than the optimal joint design. In contrast, with Type II ID receivers, it is observed that the separate design obtains comparable performance to the joint design, especially when $\gamma$ is small. From this result, it is inferred that dedicated energy beamforming is indeed beneficial when ID receivers possess the capability of cancelling the interference from energy signals, even with suboptimal designs.

\section{Conclusion}\label{sec:conclusions}

This paper has studied the joint information and energy transmit beamforming design for a multiuser MISO broadcast system for simultaneous wireless information and power transfer (SWIPT). The  weighted sum-power harvested by EH receivers is maximized subject to individual SINR constraints at ID receivers. Considering two types of ID receivers without or with the interference cancellation capability, the design  problems are formulated as two non-convex QCQPs, which are  solved optimally by applying the techniques of SDR and uplink-downlink duality. The results of this paper provide useful guidelines for practically optimizing the performance of multi-antenna SWIPT systems with receiver-location-based information and energy  transmission.

\appendix

\subsection{Proof of Proposition \ref{proposition:3.1}}\label{appendix:1}

Note that $\mathtt{(SDR1)}$ is a SDP problem and thus is convex. It is easy to verify that this problem satisfies the Slater's condition \cite{BoydVandenberghe2004} and thus has a zero duality gap. Therefore, we consider the Lagrangian of $\mathtt{(SDR1)}$ expressed as
\begin{align*}
&\mathcal{L}_1\left(\{\mv{W}_i\},\mv{W}_E,\{\lambda_i\},\beta\right) \\\nonumber=
& \beta P - \sum\limits_{i\in\mathcal{K_I}}\lambda_i \sigma_i^2 + \sum\limits_{i\in\mathcal{K_I}}\mathtt{tr}\left(\mv{A}_i\mv{W}_i\right) + \mathtt{tr}\left(\mv{C}_1\mv{W}_E\right),
\end{align*}
where
\begin{align}
\mv{A}_i ~ = ~&\mv{G}+\frac{\lambda_i\mv{h}_i^H\mv{h}_i}{\gamma_i}-\sum\limits_{k \neq i,k\in\mathcal{K_I}}\lambda_k\mv{h}_k^H\mv{h}_k - \beta\mv{I},\ \forall i \in \mathcal{K_I},\label{equation:app:1}\\
\mv{C}_1 ~ = ~&\mv{G}-\sum\limits_{k\in\mathcal{K_I}}\lambda_k\mv{h}_k^H\mv{h}_k - \beta\mv{I},\label{equation:app:2}
\end{align}
and $\lambda_i\ge 0, i\in\mathcal{K_I}$ and $\beta\ge 0$ are the dual variables associated with the $i$th SINR constraint and the transmit sum-power constraint of $\mathtt{(SDR1)}$, respectively. As a result, the dual problem of $\mathtt{(SDR1)}$ is given by
\begin{align*}
\mathtt{(SDR1.D)}:~\mathop \mathtt{min}_{\{\lambda_i\ge 0\}, \beta\ge 0} ~& \beta P - \sum\limits_{i\in\mathcal{K_I}}\lambda_i \sigma_i^2\\
\mathtt{s.t.}~~~~& \mv{C}_1 \preceq \mv{0}, \mv{A}_i \preceq \mv{0}, \forall i \in \mathcal{K_I}.
\end{align*}
Suppose that the optimal solution of $\mathtt{(SDR1.D)}$ is $\{\lambda_i^{\star}\}, \beta^{\star}$ and the resulting $\{\mv{A}_i\}, \mv{C}_1$ are $\{\mv{A}_i^{\star}\}, \mv{C}_1^{\star}$. Then the optimal solutions of $\mathtt{(SDR1)}$ and $\mathtt{(SDR1.D)}$ should satisfy the following complementary slackness conditions:
\begin{align}
\mathtt{tr}\left(\mv{A}_i^{\star}\mv{W}_i^{\star}\right) & = 0,\ \forall i \in \mathcal{K_I},\label{equation:app:3} \\
\mathtt{tr}\left(\mv{C}_1^{\star}\mv{W}_E^{\star}\right) & = 0,\label{equation:app:4}
\end{align}
which are equivalent to $\mv{A}_i^{\star}\mv{W}_i^{\star} = \mv{0}, \forall i \in \mathcal{K_I},$ and $\mv{C}_1^{\star}\mv{W}_E^{\star} = 0$, respectively. Furthermore, it can be verified that in order to meet the SINR constraints, it must hold that $\mv{W}_i^{\star} \neq \mv{0}$ or equivalently $\mathtt{rank}(\mv{W}_i^{\star})\ge 1, \forall i\in\mathcal{K_I}$, then from (\ref{equation:app:3}), it follows that $\mathtt{rank}(\mv{A}_i^{\star})\le M- 1, \forall i\in\mathcal{K_I}$.

Next, we prove this proposition by considering the following two cases where $\lambda_i^{\star} = 0, \forall i\in\mathcal{K_I}$, and (without loss of generality) there exists at least one $\bar i\in \mathcal{K_I}$ with  $\lambda_{\bar i}^{\star} > 0$, respectively.

First, we consider the case of $\lambda_i^{\star} = 0, \forall i\in\mathcal{K_I}$. In this case, we have $\mv{A}_i^\star = \mv{C}_1^\star = \mv{G}- \beta^\star\mv{I},\forall i\in\mathcal{K_I}$. Since $\mathtt{rank}(\mv{A}_i^{\star})\le M-1$ and $\mv{A}_i^\star \preceq \mv{0}, \forall i\in\mathcal{K_I}$, it follows that $\beta^\star = \xi_E$, where $\xi_E$ is the dominant eigenvalue of ${\mv G}$ (cf. (\ref{equa:jnl:7})). As a result, it can be verified from (\ref{equation:app:3}) and (\ref{equation:app:4}) that $\mv{W}_i^\star,\forall i\in\mathcal{K_I}$ and $\mv{W}_E^\star$ should all lie in the subspace spanned by ${\mv v}_E$, which can be shown to correspond to the OeBF-feasible case. Therefore, the case of $\lambda_i^{\star} = 0, \forall i\in\mathcal{K_I}$ cannot occur here.

Second, we consider the case when there exists at least one $\bar i\in \mathcal{K_I}$ with  $\lambda_{\bar i}^{\star} > 0$. In this case, we first show that any $\mv{W}_E^\star\succeq \mv{0}$ satisfying (\ref{equation:app:4}) should be zero. Given any $\mv{W}_E^\star\succeq \mv{0}$ satisfying (\ref{equation:app:4}), it follows that
\begin{align*}
&\lambda_i^\star(1+\frac{1}{\gamma_i})\mathtt{tr}\left( \mv{h}_i^H\mv{h}_i{\mv{W}}_E^\star\right)\\
=& \mathtt{tr}\left(\left(\mv{C}_1^\star + \lambda_i^\star(1+\frac{1}{\gamma_i})\mv{h}_i^H\mv{h}_i\right){\mv{W}}_E^\star\right)\\
 \le & \mathop\mathtt{max}_{\mv{X}\succeq \mv{0}}~\mathtt{tr}\left(\left(\mv{C}_1^\star + \lambda_i^\star(1+\frac{1}{\gamma_i})\mv{h}_i^H\mv{h}_i\right)\mv{X}\right)\\
 = &\mathop\mathtt{max}_{\mv{X}\succeq \mv{0}}~\mathtt{tr}(\mv{A}_i^\star\mv{X}) = 0,\ \forall i\in \mathcal{K_I},
\end{align*}
where the first equality follows from (\ref{equation:app:4}), the second inequality uses $\mv{W}_E^\star\succeq \mv{0}$, the third equality holds due to $\mv{A}_i^\star=\mv{C}_1^\star + \lambda_i^\star(1+\frac{1}{\gamma_i})\mv{h}_i^H\mv{h}_i$, and the last equality is true from the facts of $\mv{A}_i^\star \preceq \mv{0}$ and $\mathtt{rank}(\mv{A}_i^{\star})\le M-1, \forall i\in\mathcal{K_I}$.
Thus, it must hold that $\lambda_i^\star\mathtt{tr}\left( \mv{h}_i^H\mv{h}_i{\mv{W}}_E^\star\right)=0,\ \forall i\in \mathcal{K_I}$ or equivalently
$
\lambda_i^\star \mv{h}_i^H\mv{h}_i{\mv{W}}_E^\star=\mv{0},\ \forall i\in \mathcal{K_I}.$ As a result, we have
\begin{align}
\left(\mv{G}-\beta^\star\mv{I}\right){\mv{W}}_E^\star =&\left(\mv{G}-\sum\limits_{i\in\mathcal{K_I}}\lambda_i^\star\mv{h}_i^H\mv{h}_i - \beta^\star\mv{I}\right){\mv{W}}_E^\star\nonumber\\
=&\mv{C}_1^\star{\mv{W}}_E^\star=\mv{0},\label{equation:app:9:new9}
\end{align}
where the last two equalities hold due to (\ref{equation:app:2}) and (\ref{equation:app:4}), respectively. Furthermore, since $\lambda_{\bar i}^{\star} > 0$, $\bar i\in \mathcal{K_I}$, it follows that $\mv{h}_{\bar i}^H\mv{h}_{\bar i}{\mv{W}}_E^\star=\mv{0}$. Together with (\ref{equation:app:9:new9}), ${\mv{W}}_E^\star$ should lie in the null spaces of both $\mv{G}-\beta^\star\mv{I}$ and $\mv{h}_{\bar i}^H\mv{h}_{\bar i}$ at the same time. However, since the channel $\mv{g}_j$'s and $\mv{h}_i$'s are independently distributed {\color{black}under Assumption \ref{assumption1}}, we have $\mathtt{rank}(\mv{G}-\beta^\star\mv{I})\ge M-1$, and thus the two matrices $\mv{G}-\beta^\star\mv{I}$ and $\mv{h}_{\bar i}^H\mv{h}_{\bar i}$ span the entire space {\color{black}with probability one}. As a result, it follows that ${\mv{W}}_E^\star = \mv{0}$. Therefore, for any $\mv{W}_E^\star\succeq \mv{0}$ satisfying (\ref{equation:app:4}), it must hold that $\mv{W}_E^\star= \mv{0}$.

Finally, it remains to prove that $\mathtt{rank}({\mv{W}}_i^\star) = 1, \forall i\in\mathcal{K_I}$. We prove this result by showing that $\mathtt{rank}({\mv{A}}_i^\star) = M-1, \forall i\in\mathcal{K_I}$. By using $\mv{C}_1^\star = \mv{A}_{i}^\star - \lambda_{i}^\star(1+\frac{1}{\gamma_{i}})\mv{h}_{i}^H\mv{h}_{i}, \forall i\in\mathcal{K_I}$ together with the fact that $\mathtt{rank}(\mv{X}+\mv{Y}) \le \mathtt{rank}(\mv{X})+\mathtt{rank}(\mv{Y})$ holds for any two matrices $\mv{X}$ and $\mv{Y}$ of same dimension, it follows that $\mathtt{rank}({\mv{C}}_1^\star) \le \mathtt{rank}(\mv{A}_{i}^\star)+\mathtt{rank}\left( - \lambda_{i}^\star(1+\frac{1}{\gamma_{i}})\mv{h}_{i}^H\mv{h}_{i}\right), \forall i\in\mathcal{K_I}$. Given that any $\mv{W}_E^\star\succeq \mv{0}$ satisfying (\ref{equation:app:4}) should be zero, it can be shown that $\mathtt{rank}({\mv{C}}_1^\star) =M$; and meanwhile, $\mathtt{rank}\left(- \lambda_{i}^\star(1+\frac{1}{\gamma_{i}})\mv{h}_{i}^H\mv{h}_{i}\right) \le1, \forall i\in\mathcal{K_I}$. Therefore, we have $\mathtt{rank}(\mv{A}_{i}^\star) \ge \mathtt{rank}({\mv{C}}_1^\star) - \mathtt{rank}\left( - \lambda_{i}^\star(1+\frac{1}{\gamma_{i}})\mv{h}_{i}^H\mv{h}_{i}\right) \ge M-1, \forall i\in\mathcal{K_I}$. Combining this argument with $\mathtt{rank}(\mv{A}_{i}^\star) \le M-1, \forall i\in\mathcal{K_I}$, it follows that $\mathtt{rank}(\mv{A}_{i}^\star) = M-1, \forall i\in\mathcal{K_I}$. Accordingly, from (\ref{equation:app:3}) we have $\mathtt{rank}({\mv{W}}_i^\star) = 1, \forall i\in\mathcal{K_I}$. Proposition \ref{proposition:3.1} is thus proved.
%
%
%

\subsection{Proof of Proposition \ref{proposition:3.2}}\label{appendix:2}

Note that $\mathtt{(SDR2)}$ is a SDP problem and thus is convex. It is easy to verify that this problem satisfies the Slater's condition \cite{BoydVandenberghe2004} and thus has a zero duality gap. Therefore, we consider the Lagrangian of $\mathtt{(SDR2)}$ given by
\begin{align*}
&\mathcal{L}_2(\{\mv{W}_i\},\mv{W}_E,\{\lambda_i\},\beta) \nonumber \\=& \sum\limits_{i\in\mathcal{K_I}}\mathtt{tr}(\mv{A}_i\mv{W}_i)+\mathtt{tr}(\mv{C}_2\mv{W}_E)-\sum\limits_{i\in\mathcal{K_I}}\lambda_i\sigma_i^2+\beta P,
\end{align*}
where $\mv{A}_i$ is given by (\ref{equation:app:1}), $\mv C_2=\mv{G}-\beta \mv{I}$, and $\lambda_i \ge 0, i\in\mathcal{K_I}$ and $\beta\ge 0$ are the dual variables associated with the $i$th SINR constraint and the transmit sum-power constraint of $\mathtt{(SDR2)}$, respectively. The dual problem of $\mathtt{(SDR2)}$ can be expressed as
\begin{align}
\mathtt{(SDR2.D)}:\mathop{\mathtt{max}}_{\{\lambda_i\ge 0\},\beta\ge0} & ~ \sum\limits_{i\in\mathcal{K_I}}\lambda_i\sigma_i^2-\beta P \nonumber \\ \mathtt {s.t.} ~~~& ~ \mv{C}_2\preceq 0, ~\mv{A}_i \preceq 0, \ \forall i \in \mathcal{K_I}.\nonumber
\end{align}
Suppose that the optimal solution of $\mathtt{(SDR2.D)}$ is $\{\lambda_i^*\},\beta^*$, and the resulting $\{\mv{A}_i\},\mv{C}_2$ are $\{\mv{A}_i^*\},\mv{C}_2^*$. Then the optimal solutions of $\mathtt{(SDR2)}$ and $\mathtt{(SDR2.D)}$ should satisfy the following complementary slackness conditions:
\begin{align}
\mathtt{tr}(\mv{A}_i^*\mv{W}_i^*) ~& = 0, \forall i \in \mathcal{K_I}, \label{equation:app:11}\\
\mathtt{tr}(\mv{C}_2^*\mv{W}_E^*) ~& = 0,\label{equation:app:12}
\end{align}
which are equivalent to $\mv{A}_i^{*}\mv{W}_i^{*} = \mv{0}, \forall i \in \mathcal{K_I},$ and $\mv{C}_2^{*}\mv{W}_E^{*} = 0$. Note that in order to meet the SINR constraints, it must hold that $\mv{W}_i^{*} \neq \mv{0}$ or equivalently $\mathtt{rank}(\mv{W}_i^{*})\ge 1, \forall i\in\mathcal{K_I}$, then from (\ref{equation:app:11}) it follows that
\begin{align}\label{eqn:add:rank:A:SDR2}
\mathtt{rank}(\mv{A}_i^{*})\le M- 1, \forall i\in\mathcal{K_I}.
\end{align}

Next, we prove this proposition by focusing on the case when there exists at least one $\bar i\in \mathcal{K_I}$ with $\lambda_{\bar i}^{*} > 0$.{\footnote{Note that similarly to the proof given in Appendix \ref{appendix:1}, the case with $\lambda_i^{*} = 0, \forall i\in\mathcal{K_I}$ can be shown to correspond to the OeBF-feasible case and thus is not considered here.}} In this case, we first show $\mv{W}_E^*=q^*{\mv v}_E{\mv v}_E^H$ as follows. Due to the fact that $\mv{C}_2^* = \mv{G}- \beta^*\mv{I}\preceq \mv{0}$, we have $\beta^* \ge \xi_E$. If $\beta^* = \xi_E$, then $\mathtt{rank}(\mv{C}_2^*) = M-1$; it thus follows from (\ref{equation:app:12}) that $\mv{W}_E^*=q^*{\mv v}_E{\mv v}_E^H$ with $0\le q^* \le P$. If $\beta^* > \xi_E$, then $\mathtt{rank}(\mv{C}_2^*) = M$; it thus follows from (\ref{equation:app:12}) that $\mv{W}_E^* = \mv{0}$ or equivalently $\mv{W}_E^*=q^*{\mv v}_E{\mv v}_E^H$ with $q^* = 0$. Therefore, $\mv{W}_E^*=q^*{\mv v}_E{\mv v}_E^H$ and accordingly $\mathtt{rank}(\mv{W}_E^*)\le 1$ follows.

Second, we prove $\mathtt{rank}(\mv{W}_i^{*}) = 1$ by showing $\mathtt{rank}(\mv{A}_i^*) = M-1, \forall i\in\mathcal{K_I}$. To this end, we first prove that $\lambda_i^{*}, \forall i\in\mathcal{K_I}$ are all strictly positive by contradiction. Suppose that there exists one $\tilde i \in\mathcal{K_I}, \tilde i \neq \bar i$ satisfying that $\lambda_{\tilde i}^{*} = 0$. In this case, since $\left(-\sum\limits_{k \neq {\tilde i},k\in\mathcal{K_I}}\lambda_k^*\mv{h}_k^H\mv{h}_k\right) \preceq \mv{0}$, $\mv{C}_2^*\preceq \mv{0}$, $\mathtt{rank}\left(-\sum\limits_{k \neq {\tilde i},k\in\mathcal{K_I}}\lambda_k^*\mv{h}_k^H\mv{h}_k\right) \geq 1$ (due to $\lambda_{\bar i}^* > 0, \bar i \neq \tilde i$), and $\mathtt{rank}(\mv{C}_2^*) \ge M-1$, it can be shown that $\mathtt{rank}(\mv{A}_{\tilde i}^*) = \mathtt{rank}\left(\mv{C}_2^* -\sum\limits_{k \neq {\tilde i},k\in\mathcal{K_I}}\lambda_k^*\mv{h}_k^H\mv{h}_k\right) = M$ {\color{black}with probability one}, provided that the channel $\mv{g}_j$'s and $\mv{h}_i$'s are independently distributed {\color{black}under Assumption \ref{assumption1}}. This induces a contradiction to (\ref{eqn:add:rank:A:SDR2}). Therefore, the presumption cannot be true and it follows that $\lambda_i^{*} > 0, \forall i\in\mathcal{K_I}$.

With $\lambda_i^{*} > 0, \forall i\in\mathcal{K_I}$, it can be shown that $\mathtt{rank}\left(\mv{C}_2^* -\sum\limits_{k \neq i,k\in\mathcal{K_I}}\lambda_k^*\mv{h}_k^H\mv{h}_k\right) = M, \forall i\in\mathcal{K_I}$. By noting that $\mv{C}_2^* -\sum\limits_{k \neq i,k\in\mathcal{K_I}}\lambda_k^*\mv{h}_k^H\mv{h}_k = \mv{A}_i^*  - \frac{\lambda_i^*\mv{h}_i^H\mv{h}_i}{\gamma_i}$, we have $\mathtt{rank}(\mv{A}_i^{*})\ge \mathtt{rank}\left(\mv{C}_2^* -\sum\limits_{k \neq i,k\in\mathcal{K_I}}\lambda_k^*\mv{h}_k^H\mv{h}_k\right) - \mathtt{rank}\left(- \frac{\lambda_i^*\mv{h}_i^H\mv{h}_i}{\gamma_i}\right) \ge M- 1, \forall i\in\mathcal{K_I}$. Together with (\ref{eqn:add:rank:A:SDR2}), it follows that $\mathtt{rank}(\mv{A}_i^*) = M-1,\forall i\in\mathcal{K_I}$; accordingly $\mathtt{rank}(\mv{W}_i^{*}) = 1, \forall i\in\mathcal{K_I}$, holds due to (\ref{equation:app:11}). This thus completes the proof of Proposition \ref{proposition:3.2}.

\subsection{Proof of Proposition \ref{proposition:KI1}}\label{appendix:8}

Given $K_I=1$, let the optimal dual solution for $\mathtt{(SDR2)}$ be denoted by $\lambda_1^*\ge 0$ and $\beta^*\ge 0$ (see $\mathtt{(SDR2.D)}$ in Appendix \ref{appendix:2}). It then follows from (\ref{equation:app:11}) and (\ref{equation:app:12}) that the optimal solution of $\mathtt{(SDR2)}$ should satisfy the two equations of $\left(\mv{G}+\frac{\lambda_1^*\mv{h}_i^H\mv{h}_i}{\gamma_i}- \beta^*\mv{I}\right)\mv{W}_i^*= \mv{0}$ and $(\mv{G}- \beta^*\mv{I})\mv{W}_E^*= \mv{0}$ at the same time, where $\mv{G}+\frac{\lambda_1^*\mv{h}_i^H\mv{h}_i}{\gamma_i}- \beta^*\mv{I}\preceq\mv{0}$ and $\mv{G}- \beta^*\mv{I}\preceq\mv{0}$. Furthermore, we have $\beta^* \ge \xi_E$ due to $\mv{G}- \beta^*\mv{I}\preceq\mv{0}$.

Next, we prove this proposition by considering the two cases of $\beta^* = \xi_E$ and $\beta^* > \xi_E$, respectively. If $\beta^* = \xi_E$, then it can be shown that in order to satisfy  $\mv{G}+\frac{\lambda_1^*\mv{h}_i^H\mv{h}_i}{\gamma_i}- \beta^*\mv{I}\preceq\mv{0}$ and $\mv{G}- \beta^*\mv{I}\preceq\mv{0}$ at the same time, it must hold that $\lambda_1^* = 0$, which corresponds to the OeBF-feasible case. As a result, $\beta^* = \xi_E$ cannot occur here. On the other hand, if $\beta^* > \xi_E$, then $\mv{G}- \beta^*\mv{I}\prec\mv{0}$ is of full rank; accordingly, from $(\mv{G}- \beta^*\mv{I})\mv{W}_E^*= \mv{0}$ it follows that $\mv{W}_E^*= \mv{0}$. Proposition \ref{proposition:KI1} is thus proved.

\subsection{Proof of Proposition \ref{proposition:4.1}}\label{appendix:new:D:July17}
Denote the optimal values of ${\mathtt{(P1.1)}}$ and ${\mathtt{(P1.2)}}$ as $v_{\mathtt{(P1.1)}}$ and $v_{\mathtt{(P1.2)}}$, respectively. Since ${\mathtt{(P1.2)}}$ is the dual problem of ${\mathtt{(P1.1)}}$, it immediately follows that $v_{\mathtt{(P1.2)}} \ge v_{\mathtt{(P1.1)}}$. Therefore, to complete the proof of this proposition, we only need to show that $v_{\mathtt{(P1.1)}} \ge v_{\mathtt{(P1.2)}}$.

First, we express the SDR of problem ${\mathtt{(P1.1)}}$ as
\begin{align*}
&\mathtt{(SDR1.1)}:\\
\mathop{\mathtt{max}}_{\{\mv{W}_i\}}
& \sum\limits_{i\in\mathcal{K_I}}\mathtt{tr}(\mv{G}\mv{W}_i) \\
\mathtt {s.t.} &  \frac{\mathtt{tr}(\mv{h}_i^H\mv{h}_i\mv{W}_i)}{\gamma_i}-\sum\limits_{k\neq i,k\in\mathcal{K_I}}\mathtt{tr}(\mv{h}_i^H\mv{h}_i\mv{W}_k)-\sigma_i^2 \geq 0,  \forall i \in \mathcal{K_I}\\ &
\sum\limits_{i\in\mathcal{K_I}}\mathtt{tr}(\mv{W}_i)\leq P\\
& {{\mv W}_i}\succeq {\mv 0}, \forall i\in \mathcal{K_I}.
\end{align*}
From Proposition \ref{proposition:3.1}, it follows that ${\mathtt{(SDR1.1)}}$ always has a rank-one solution. Hence, by denoting the optimal value achieved by ${\mathtt{(SDR1.1)}}$ as $v_{\mathtt{(SDR1.1)}}$, we have $v_{\mathtt{(P1.1)}}=v_{\mathtt{(SDR1.1)}}$.

Meanwhile, we can express the SDR of problem (\ref{equa:jnl:8}) as
\begin{align}
&f_{\mathtt{SDR},1}(\beta) \triangleq~\nonumber\\
\mathop{\mathtt{max}}_{\{\mv{W}_i\}}
& \sum\limits_{i\in\mathcal{K_I}}\mathtt{tr}(\mv{G}\mv{W}_i) - \beta\left(
\sum\limits_{i\in\mathcal{K_I}}\mathtt{tr}(\mv{W}_i)- P\right)\nonumber \\
\mathtt {s.t.}~~ &  \frac{\mathtt{tr}(\mv{h}_i^H\mv{h}_i\mv{W}_i)}{\gamma_i}-\sum\limits_{k\neq i,k\in\mathcal{K_I}}\mathtt{tr}(\mv{h}_i^H\mv{h}_i\mv{W}_k)-\sigma_i^2 \geq 0,  \forall i \in \mathcal{K_I}\nonumber\\
~& {{\mv W}_i}\succeq {\mv 0}, \forall i\in \mathcal{K_I},\label{eqn:f_sdr1}
\end{align}
and accordingly define a new problem as
\begin{align*}
{\mathtt{(SDR1.2)}}:~\mathop \mathtt{min}\limits_{\beta \ge 0}~f_{\mathtt{SDR},1}(\beta).
\end{align*}
Then it is observed that ${\mathtt{(SDR1.2)}}$ is also the dual problem of ${\mathtt{(SDR1.1)}}$. Since ${\mathtt{(SDR1.1)}}$ is convex and satisfies the Slater's condition \cite{BoydVandenberghe2004}, it can be verified that strong duality holds between ${\mathtt{(SDR1.1)}}$ and ${\mathtt{(SDR1.2)}}$, i.e., $v_{\mathtt{(SDR1.1)}}=v_{\mathtt{(SDR1.2)}}$ with $v_{\mathtt{(SDR1.2)}}$ being the optimal value of ${\mathtt{(SDR1.2)}}$.  Together with $v_{\mathtt{(P1.1)}}=v_{\mathtt{(SDR1.1)}}$, we thus have $v_{\mathtt{(P1.1)}}=v_{\mathtt{(SDR1.2)}}$.

Moreover, since problem (\ref{eqn:f_sdr1}) to obtain $f_{\mathtt{SDR},1}(\beta)$ is the SDR of
problem (\ref{equa:jnl:8}) to obtain $f_{1}(\beta)$, we have $f_{\mathtt{SDR},1}(\beta)\ge f_{1}(\beta)$ for any $\beta \ge 0$. As a consequence, it follows that $v_{\mathtt{(SDR1.2)}} \ge v_{\mathtt{(P1.2)}}$. Hence, it is verified that $v_{\mathtt{(P1.1)}}=v_{\mathtt{(SDR1.2)}}\ge v_{\mathtt{(P1.2)}}$. Proposition \ref{proposition:KI1} is thus proved.

\subsection{Proof of Proposition \ref{theorem:macbc}}\label{appendix:D:April10}

First, we express the SDR of problem (\ref{equation:13}) as
\begin{align}
\mathop\mathtt{min}_{\{{\mv W}_i\succeq \mv{0}\}} ~& \sum\limits_{i\in\mathcal{K_I}}{\mathtt{tr}}\left((\beta\mv{I}-{\mv G}){\mv W}_i\right)\nonumber \\
{\mathtt{s.t.}}~~~& \frac{{\mathtt{tr}}({\mv h}_i^H{\mv h}_i{\mv W}_i)}{\gamma_i} - \sum\limits_{k\neq i,k\in\mathcal{K_{I}}}{\mathtt{tr}}({\mv h}_i^H{\mv h}_i{\mv W}_k) \nonumber \\&~~~~~~~~~~~~~~~~~~~~~~- \sigma_i^2 \ge 0, \forall i \in \mathcal{K_I},\label{equa:theorem1}
\end{align}
which can be shown to achieve the same optimal value of problem (\ref{equation:13}), for which the proof is similar to that of Proposition \ref{proposition:3.1} and thus is omitted here for brevity. We then express the dual problem of (\ref{equa:theorem1}) as
\begin{align}
    \mathop\mathtt{max}_{\{\tilde\lambda_i\ge 0\}}~&\sum\limits_{i\in\mathcal{K_I}}\tilde\lambda_i \sigma_i^2 ,\nonumber\\
    \mathtt{s.t.}~&-\frac{\tilde\lambda_i\mv{h}_i^H\mv{h}_i}{\gamma_i}+\sum\limits_{k \neq i,k\in\mathcal{K_I}}\tilde\lambda_k\mv{h}_k^H\mv{h}_k \nonumber \\&~~~~~~~~~~~~~~~~~~~+ \beta\mv{I}-\mv{G}\succeq \mv{0},\ \forall i \in \mathcal{K_I},\label{equa:51}
\end{align}
where $\tilde\lambda_i\ge 0, i \in \mathcal{K_I}$ represents the dual variable associated with the $i$th constraint in (\ref{equa:theorem1}). Since problem (\ref{equa:theorem1}) is convex and satisfies the Slater's condition \cite{BoydVandenberghe2004}, strong duality holds between (\ref{equa:theorem1}) and (\ref{equa:51}). As a result, given $g(\beta)>-\infty$, (\ref{equa:51}) is always feasible, and problems (\ref{equation:13}), (\ref{equa:theorem1}) and (\ref{equa:51}) achieve the same optimal value.

We then present the following lemma, which is proved in \cite[Lemma 1]{YuLan2007}.
\begin{lemma}\label{lemma:a1}
Let $\mv{A}$ be an $n\times n$ symmetric positive semidefinite matrix and $\mv{b}$ be an $n\times 1$ vector. Then, $\mv{A}\succeq \mv{b}\mv{b}^H$ if and only $\mv{b}^H\mv{A}^\dagger \mv{b}^H\le 1$.
\end{lemma}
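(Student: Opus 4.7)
The plan is to reduce the lemma to the classical Schur complement characterization of positive semidefiniteness applied to the $(n{+}1)\times(n{+}1)$ bordered matrix $\mv{M} \triangleq \begin{pmatrix} \mv{A} & \mv{b} \\ \mv{b}^H & 1 \end{pmatrix}$, and to use the eigendecomposition of $\mv{A}$ to handle the pseudoinverse in the singular case.

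First, I would establish $\mv{A} \succeq \mv{b}\mv{b}^H \iff \mv{M} \succeq \mv{0}$. Since the bottom-right scalar block equals $1>0$, the standard Schur complement taken with respect to that block gives $\mv{M} \succeq \mv{0}$ iff $\mv{A} - \mv{b}\,(1)^{-1}\mv{b}^H \succeq \mv{0}$, i.e., $\mv{A} \succeq \mv{b}\mv{b}^H$. This direction requires no assumption on the rank of $\mv{A}$ and is immediate from writing out the quadratic form of $\mv{M}$.

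Second, I would establish $\mv{M} \succeq \mv{0} \iff \mv{b} \in \mathrm{range}(\mv{A})$ \emph{and} $\mv{b}^H \mv{A}^\dagger \mv{b} \le 1$, which completes the chain of equivalences. Using the spectral decomposition $\mv{A} = \mv{U}\,\mathrm{diag}(\lambda_1,\ldots,\lambda_r,0,\ldots,0)\,\mv{U}^H$ with $\lambda_1,\ldots,\lambda_r > 0$, I conjugate $\mv{M}$ by the unitary $\mathrm{diag}(\mv{U}^H, 1)$ and split $\tilde{\mv{b}} := \mv{U}^H \mv{b}$ into range and nullspace components $\tilde{\mv{b}} = (\tilde{\mv{b}}_R, \tilde{\mv{b}}_N)$. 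Positive semidefiniteness of the resulting bordered matrix forces $\tilde{\mv{b}}_N = \mv{0}$ (a test vector supported on a nonzero coordinate of $\tilde{\mv{b}}_N$ and the last coordinate produces an indefinite $2\times 2$ minor otherwise), which is exactly $\mv{b} \in \mathrm{range}(\mv{A})$; the remaining nontrivial block $\begin{pmatrix} \mathrm{diag}(\lambda_1,\ldots,\lambda_r) & \tilde{\mv{b}}_R \\ \tilde{\mv{b}}_R^H & 1 \end{pmatrix}$ has an invertible top-left corner, and its Schur complement yields $1 - \tilde{\mv{b}}_R^H \mathrm{diag}(\lambda_1,\ldots,\lambda_r)^{-1}\tilde{\mv{b}}_R = 1 - \mv{b}^H \mv{A}^\dagger \mv{b} \ge 0$.

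Chaining the two equivalences completes the proof of the lemma as it will be used in Appendix \ref{appendix:D:April10}. The range condition $\mv{b} \in \mathrm{range}(\mv{A})$ extracted in the second step is implicit in the hypothesis $\mv{A} \succeq \mv{b}\mv{b}^H$: any $\mv{x} \in \mathrm{null}(\mv{A})$ satisfies $0 = \mv{x}^H \mv{A}\mv{x} \ge |\mv{b}^H \mv{x}|^2$, forcing $\mv{b} \perp \mathrm{null}(\mv{A}) = \mathrm{range}(\mv{A})^\perp$, so the one-line equivalence $\mv{A}\succeq\mv{b}\mv{b}^H \Leftrightarrow \mv{b}^H\mv{A}^\dagger\mv{b} \le 1$ stated in the lemma is meaningful without separately writing out the range condition. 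The main obstacle is the singular case: for invertible $\mv{A}$ the classical Schur complement (with $\mv{A}^\dagger = \mv{A}^{-1}$) dispatches both directions at once, but when $\mv{A}$ has a nontrivial nullspace one must verify that $\mv{A}^\dagger$ still produces the correct Schur complement, which is handled cleanest through the eigendecomposition reduction described above.
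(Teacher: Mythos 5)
The paper never proves this lemma itself --- Appendix \ref{appendix:D:April10} simply imports it as \cite[Lemma 1]{YuLan2007} --- so there is no in-paper argument to compare yours against line by line. Your proof is correct and self-contained, and it is essentially the standard route for this fact: the equivalence $\mv{A}\succeq\mv{b}\mv{b}^H \Leftrightarrow \mv{M}\succeq\mv{0}$ via the Schur complement with respect to the scalar block $1$ (which can be checked directly by minimizing the quadratic form of $\mv{M}$ over the last coordinate), followed by the eigendecomposition reduction showing $\mv{M}\succeq\mv{0}$ iff $\mv{b}\in\mathrm{range}(\mv{A})$ and $\mv{b}^H\mv{A}^\dagger\mv{b}\le 1$, i.e., the generalized Schur complement test for a singular block. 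What your write-up buys over the bare citation is precisely the careful treatment of the singular case, which is where $\mv{A}^\dagger$ (rather than $\mv{A}^{-1}$) and the range condition enter.

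One point deserves more care than your closing paragraph gives it. Your own second equivalence shows that the lemma as transcribed in the paper (which also carries the typo $\mv{b}^H\mv{A}^\dagger\mv{b}^H$ for $\mv{b}^H\mv{A}^\dagger\mv{b}$) is false in the ``if'' direction without the range condition: take $\mv{A}=\mv{0}$ and $\mv{b}\neq\mv{0}$, so that $\mv{b}^H\mv{A}^\dagger\mv{b}=0\le 1$ while $\mv{A}\not\succeq\mv{b}\mv{b}^H$. Your remark that $\mv{b}\in\mathrm{range}(\mv{A})$ is ``implicit in the hypothesis $\mv{A}\succeq\mv{b}\mv{b}^H$'' rescues only the ``only if'' direction; for the ``if'' direction the range condition must appear as an explicit hypothesis (as it does in the original statement of \cite[Lemma 1]{YuLan2007}). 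This is not a defect of your proof --- you establish the correct two-sided statement --- but it is not a cosmetic issue either, because the proofs of Proposition \ref{theorem:macbc} (passing from (\ref{euqa:theorem:4}) back to the constraints of (\ref{equa:51})) and of Lemma \ref{lemma:a2} both invoke the lemma in the direction that goes from the pseudoinverse inequality to the semidefinite ordering, so the range condition has to be verified or carried along at those points.
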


From Lemma \ref{lemma:a1}, it then follows that by setting $\mv{A}=\sum\limits_{k\neq i,k\in\mathcal{K_I}}\tilde\lambda_k\mv{h}_k^H\mv{h}_k + \beta\mv{I}-\mv{G}$ and $\mv{b}=\sqrt{\frac{\tilde\lambda_i}{\gamma_i}}\mv{h}_i^H$ the SINR constraints of (\ref{equa:51}) are identical to
\begin{align}\label{euqa:theorem:4}
\frac{\tilde\lambda_i}{\gamma_i}\mv{h}_i\left(\sum\limits_{k\neq i,k\in\mathcal{K_I}}\tilde\lambda_k\mv{h}_k^H\mv{h}_k + \beta\mv{I}-\mv{G}\right)^{\dagger}\mv{h}_i^H \le 1, \ \forall i \in \mathcal{K_I}.\end{align}
Note that the inequalities in (\ref{euqa:theorem:4}) are equivalent to the following inequalities: \begin{align}\label{euqa:theorem:5}\mathop{\mathtt{max}}_{\{\tilde{\mv{w}}_i\}}{\mathtt{SINR}}_i^{\mathrm{MAC}}(\{\tilde{\mv{w}}_i,\tilde\lambda_i\}) \leq \gamma_i, \forall i \in \mathcal{K_I},\end{align}
which can be easily verified by substituting the optimal solution of the problem in the left-hand side (LHS) of (\ref{euqa:theorem:5}), i.e., $\tilde{{\mv{w}}}_i = \left(\sum_{k\neq i,k\in\mathcal{K_I}}\tilde\lambda_k\mv{h}_k^H\mv{h}_k + \beta\mv{I}-\mv{G}\right)^{\dagger}\mv{h}_i^H$, into (\ref{euqa:theorem:5}). As a result, problem (\ref{equa:51}) and hence problem (\ref{equation:13}) are equivalent to the following problem:
\begin{align}\mathop{\mathtt{max}}_{\{\tilde\lambda_i\ge 0\}}
& ~\sum\limits_{i\in\mathcal{K_I}}\tilde\lambda_i\sigma_i^2 \nonumber\\
\mathtt {s.t.} & ~ \mathop{\mathtt{max}}_{\{\tilde{\mv{w}}_i\}}{\mathtt{SINR}}_i^{\mathrm{MAC}}(\{\tilde{\mv{w}}_k,\tilde\lambda_k\}) \le \gamma_i, \ \forall i \in \mathcal{K_I}.\label{equa:90}
\end{align}

Next, problem (\ref{equa:gd}) can be equivalently re-expressed as
\begin{align}\mathop{\mathtt{min}}_{\{\lambda_i\ge 0\}}
& ~\sum\limits_{i\in\mathcal{K_I}}\lambda_i\sigma_i^2 \nonumber\\
\mathtt {s.t.} & ~ \mathop{\mathtt{max}}_{\{\tilde{\mv{w}}_i\}}{\mathtt{SINR}}_i^{\mathrm{MAC}}(\{\tilde{\mv{w}}_i,\lambda_i\}) \geq \gamma_i, \ \forall i \in \mathcal{K_I}.\label{equa:999}
\end{align}
By comparing problems (\ref{equa:90}) and (\ref{equa:999}), their difference lies in the reversed SINR constraints as well as the reversed objective (maximization in (\ref{equa:90}) versus minimization in (\ref{equa:999})). It is evident that the optimal solution of both problems (\ref{equa:90}) and (\ref{equa:999}) should be obtained with all the SINR constraints being tight, since otherwise the objective value in (\ref{equa:90}) (in (\ref{equa:999})) can be further increased (decreased) by increasing (decreasing) certain $\tilde\lambda_i$'s ($\lambda_i$'s) while satisfying all the SINR constraints, provided that both the numerator and denominator of the expression of ${\mathtt{SINR}}_i^{\mathrm{MAC}}(\{\tilde{\mv{w}}_i,\lambda_i\})$ given in (\ref{eqn:sinr:uplink}) are non-negative. We can thereby show that problems (\ref{equa:90}) and (\ref{equa:999}) achieve the same optimal value as follows.

Note that the SINR constraints in (\ref{equa:90}) can be re-expressed as
\begin{align}\label{eqn:proof:convex}
&\lambda_i-\gamma_i\left(\mathop\mathtt{min}\limits_{\|\tilde{\mv{w}}_i\|=1}\frac{\tilde{\mv{w}}_i^H\left(\sum\limits_{k\neq i,k\in\mathcal{K_I}}\lambda_k\mv{h}_k^H\mv{h}_k+\beta\mv{I}-\mv{G}\right)\tilde{\mv{w}}_i}{\tilde{\mv{w}}_i^H\mv{h}_i^H\mv{h}_i\tilde{\mv{w}}_i}\right)\nonumber\\&~~~~~~~~~~~~~~~~~~~~~~~~~~~~~~~~~~~~~~~~~~~~
\leq 0, \ \forall i \in \mathcal{K_I},
\end{align}
which specifies a convex set of $\{\lambda_i\}$ due to the fact that the minimum of a linear function is concave \cite{SchubertBoche2012}. In other words, problem (\ref{equa:90}) is recast as a convex problem. As a result, any local optimum point of (\ref{equa:90}) is globally optimal. Therefore, it can be shown that any $\{\tilde\lambda_i\}$ in (\ref{equa:90}) that makes all SINR constraints being tight is globally optimal and thus achieves the optimal value for (\ref{equa:90}). Hence, it follows that problems (\ref{equa:90}) and (\ref{equa:999}) achieve the same optimal value. Consequently, problems (\ref{equation:13}) and (\ref{equa:gd}) also achieve the same optimal value, which completes the proof of Proposition \ref{theorem:macbc}.

\begin{remark}\label{remakr:a:1}
It is worth pointing out that the proof of the uplink-downlink duality for the case of $\beta\mv{I}-\mv{G} \nsucceq \mv{0}$ and $g(\beta)>-\infty$ in Proposition \ref{theorem:macbc} differs from that for the case of $\beta\mv{I}-\mv{G} \succeq \mv{0}$ (see \cite[Theorem 1]{YuLan2007}) in two main aspects. First, to show the equivalent relationship between (\ref{equation:13}) and (\ref{equa:90}) in the case of $\beta\mv{I}-\mv{G} \nsucceq \mv{0}$ and $g(\beta)>-\infty$, we use the dual problem of the SDR problem in (\ref{equation:13}), i.e., problem (\ref{equa:theorem1}), given the fact that the SDR of (\ref{equation:13}) is tight; whereas in the case of $\beta\mv{I}-\mv{G} \succeq \mv{0}$, \cite{YuLan2007} uses the SOCP reformulation of (\ref{equation:13}) to show that strong duality holds for (\ref{equation:13}) and its dual problem. Second, to show that problems (\ref{equa:90}) and (\ref{equa:999}) achieve the same optimal value in the case of $\beta\mv{I}-\mv{G} \nsucceq \mv{0}$ and $g(\beta)>-\infty$, we use the fact that problem (\ref{equa:90}) can be recast as a convex problem; whereas in the case of $\beta\mv{I}-\mv{G} \succeq \mv{0}$, \cite{YuLan2007} uses the technique of standard interference function \cite{Yates1995}. Moreover, note that our proof for Proposition \ref{theorem:macbc} is also applicable to the case of $\beta\mv{I}-\mv{G} \succeq \mv{0}$; however, the proof in \cite{YuLan2007} cannot be applied for the case of $\beta\mv{I}-\mv{G} \nsucceq \mv{0}$ and $g(\beta)>-\infty$ to obtain Proposition \ref{theorem:macbc} here.
\end{remark}

\subsection{Proof of Proposition \ref{proposition:July19:2}} \label{appendix:4}

The first part of this proposition follows directly from the proof of Proposition \ref{theorem:macbc} in Appendix \ref{appendix:D:April10}. Thus, we only need to prove the second part of this proposition as follows.


From Appendix \ref{appendix:D:April10}, it follows that due to the equivalence between problems (\ref{equa:90}) and (\ref{equa:999}) the dual variables $\{\tilde\lambda_i\}$ in problem (\ref{equa:51}) and the uplink transmit power $\{\lambda_i\}$ in problem (\ref{equa:gd}) are identical. As a result, given that problem (\ref{equa:51}) is feasible in the case of $g(\beta) > -\infty$, the optimal uplink transmit power $\{\lambda_i^\star\}$ must be a feasible solution of (\ref{equa:51}) and thus satisfy the constraints in (\ref{equa:51}). Accordingly, the second part of the proposition follows. Therefore, Proposition \ref{proposition:July19:2} is proved.

\subsection{Proof of Proposition \ref{theorem:converge}}\label{appendix:7}
For the fixed point iteration with the proposed initial point, {\color{black}we have} shown that $\{\lambda_i^{(n)}\}$ is an element-wise monotonically decreasing sequence and at the same time lower bounded by $\{\lambda_i^{\star}\}$; as a result, the fixed point iteration will converge to a stationary point for problem (\ref{equa:gd}). Note that problem (\ref{equa:gd}) is identical to problem (\ref{equa:90}), which can be recast as a convex problem (cf. (\ref{eqn:proof:convex})). Therefore, any stationary point of (\ref{equa:90}) as well as (\ref{equa:gd}) should be globally optimal. Accordingly, the converged stationary point is the optimal solution $\{\lambda_i^{\star}\}$. Proposition \ref{theorem:converge} is thus proved.

{\color{black}
\begin{remark}
It is worth noting that the optimality of $\{\lambda_i^{\star}\}$ in Proposition \ref{theorem:converge} cannot be shown by using the techniques of standard \cite{Yates1995} or general \cite{SchubertBoche2012} interference functions that are normally utilized in existing uplink-downlink duality related literature (see e.g. \cite{YuLan2007}). This is because that the equivalent noise term ${\tilde{\mv{w}}_i^{\star H}\left(\beta\mv{I}-\mv{G}\right)\tilde{\mv{w}}_i^\star}$ in our case can be negative for some $i \in \mathcal{K_I}$ due to $\beta\mv{I}-\mv{G} \nsucceq \mv{0}$. The same challenge exists when we prove the uplink-downlink duality in Proposition \ref{theorem:macbc} (see Remark \ref{remakr:a:1}).
\end{remark}
}

\subsection{Proof of Proposition \ref{proposition:4.5}}\label{appendix:6:add}
\begin{lemma}\label{lemma:a2}
If there exists any one fixed point $\{\lambda_i^{\star}\}$ with $\lambda_i^{\star} \ge 0, \forall i\in\mathcal{K_I}$ satisfying $\lambda_i^{\star}=\mathtt{m}_i\left(\{\lambda_i^{\star}\}\right), \forall i \in \mathcal{K_I}$, then it follows that $g(\beta)>-\infty$.
\end{lemma}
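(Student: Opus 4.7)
The plan is to show $g(\beta)>-\infty$ by exhibiting a finite dual-feasible point for the SDR of problem (\ref{equation:13}), which then lower-bounds $g(\beta)$ via weak duality. Specifically, I would argue that the hypothesized fixed point $\{\lambda_i^\star\}$ is itself feasible for problem (\ref{equa:51}), the dual of the SDR (\ref{equa:theorem1}). Since (\ref{equa:theorem1}) is a relaxation of (\ref{equation:13}) and weak duality always holds, this will immediately deliver $g(\beta)\ge \sum_{i\in\mathcal{K_I}} \lambda_i^\star \sigma_i^2\ge 0$.

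The key step is to deduce from the fixed-point equation $\lambda_i^\star = \mathtt{m}_i(\{\lambda_i^\star\})$ together with $\lambda_i^\star\ge 0$ that
\[
\mv{A}_i^\star \triangleq \sum_{k\neq i,k\in\mathcal{K_I}} \lambda_k^\star \mv{h}_k^H \mv{h}_k + \beta\mv{I} - \mv{G}\ \succeq\ \frac{\lambda_i^\star}{\gamma_i}\mv{h}_i^H \mv{h}_i,\quad \forall i\in\mathcal{K_I},
\]
which is precisely the PSD constraint appearing in (\ref{equa:51}). For any unit-norm $\tilde{\mv{w}}_i$ with $\mv{h}_i \tilde{\mv{w}}_i \neq \mv{0}$, the inequality $\tilde{\mv{w}}_i^H(\mv{A}_i^\star - (\lambda_i^\star/\gamma_i)\mv{h}_i^H \mv{h}_i)\tilde{\mv{w}}_i \ge 0$ follows directly by rearranging the fact that $\lambda_i^\star/\gamma_i$ is a lower bound on the Rayleigh-type quotient whose minimum defines $\mathtt{m}_i/\gamma_i$ in (\ref{eqn:iterative:2}). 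For $\tilde{\mv{w}}_i$ in the null space of $\mv{h}_i$, I would argue by contradiction: if $\tilde{\mv{w}}_i^H \mv{A}_i^\star \tilde{\mv{w}}_i < 0$ along some such unit vector, then perturbing to $\tilde{\mv{w}}_i+\epsilon \mv{u}$ for any $\mv{u}$ with $\mv{h}_i\mv{u}\neq 0$ drives the Rayleigh quotient's denominator like $\epsilon^2$ while the numerator stays negative, so the infimum diverges to $-\infty$; this contradicts $\mathtt{m}_i(\{\lambda_i^\star\})=\lambda_i^\star\ge 0$ being a finite non-negative real number.

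With the PSD inequality in hand, $\{\lambda_i^\star\}$ is dual-feasible for (\ref{equa:51}) with objective value $\sum_{i\in\mathcal{K_I}} \lambda_i^\star \sigma_i^2\ge 0$. Weak duality then gives $\sum_i \lambda_i^\star \sigma_i^2 \le g_{\mathtt{SDR}}(\beta)$, where $g_{\mathtt{SDR}}(\beta)$ denotes the optimal value of (\ref{equa:theorem1}); and since (\ref{equa:theorem1}) is a relaxation of (\ref{equation:13}), $g_{\mathtt{SDR}}(\beta) \le g(\beta)$. Chaining, $g(\beta)\ge \sum_i \lambda_i^\star \sigma_i^2\ge 0>-\infty$, as claimed.

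The main obstacle I anticipate is the null-space sub-case above, specifically verifying that the perturbation can be carried out cleanly: one needs $\mv{h}_i\neq \mv{0}$ so that a $\mv{u}$ with $\mv{h}_i\mv{u}\neq 0$ exists (this is harmless under Assumption \ref{assumption1}), and one needs to track the renormalization $\|\tilde{\mv{w}}_i+\epsilon \mv{u}\|=1$ in the limit $\epsilon\to 0$. Once this sub-case is settled, the remaining steps — identifying the dual constraint, applying weak duality, and invoking the relaxation inequality — are routine.
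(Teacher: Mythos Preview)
Your proposal is correct and follows the same high-level strategy as the paper: show that the hypothesized fixed point $\{\lambda_i^\star\}$ is feasible for the dual problem (\ref{equa:51}), and then conclude $g(\beta)>-\infty$ via duality between (\ref{equa:theorem1}) and (\ref{equa:51}) together with the relaxation inequality $g_{\mathtt{SDR}}(\beta)\le g(\beta)$.

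The only substantive difference is in how the key positive-semidefiniteness constraint $\mv{A}_i^\star\succeq(\lambda_i^\star/\gamma_i)\mv{h}_i^H\mv{h}_i$ is verified. The paper substitutes the MMSE receiver (\ref{eqn:iterative:1}) into the fixed-point equation to obtain $\frac{\lambda_i^\star}{\gamma_i}\mv{h}_i(\mv{A}_i^\star)^\dagger\mv{h}_i^H=1$ and then invokes Lemma~\ref{lemma:a1}. You instead argue directly from the Rayleigh-quotient definition of $\mathtt{m}_i$ in (\ref{eqn:iterative:2}), handling the null space of $\mv{h}_i$ by a perturbation argument. Your route is slightly more self-contained (it does not rely on Lemma~\ref{lemma:a1}) and in fact sidesteps a subtle point in the paper's proof: Lemma~\ref{lemma:a1} assumes $\mv{A}$ is positive semidefinite, which is not immediately obvious for $\mv{A}_i^\star$ when $\beta\mv{I}-\mv{G}\nsucceq\mv{0}$, whereas your argument establishes $\mv{A}_i^\star\succeq\mv{0}$ as a by-product. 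The paper's route, on the other hand, is shorter once Lemma~\ref{lemma:a1} is available and avoids the null-space case analysis.
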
\begin{proof}
By substituting the optimal MMSE filter in (\ref{eqn:iterative:1}) into the equations $\lambda_i^{\star}=\mathtt{m}_i\left(\{\lambda_i^{\star}\}\right), \forall i \in \mathcal{K_I}$, it follows that $\frac{\lambda_i^{\star}}{\gamma_i}\mv{h}_i\left(\sum\limits_{k\neq i,k\in\mathcal{K_I}}\lambda_k^{\star}\mv{h}_k^H\mv{h}_k + \beta\mv{I}-\mv{G}\right)^{\dagger}\mv{h}_i^H = 1, \ \forall i \in \mathcal{K_I}$. Then based on Lemma \ref{lemma:a1}, we have $\frac{\lambda_i^{\star}\mv{h}_i^H\mv{h}_i}{\gamma_i}+\sum\limits_{k \neq i,k\in\mathcal{K_I}}\lambda_k^{\star}\mv{h}_k^H\mv{h}_k + \beta\mv{I}-\mv{G}\succeq \mv{0},\ \forall i \in \mathcal{K_I}$. As a result, the fixed point $\{\lambda_i^{\star} \}$ is a feasible solution of (\ref{equa:51}), which makes the optimal value achieved by (\ref{equa:51}) bounded from below. Together with the strong duality between (\ref{equa:theorem1}) and (\ref{equa:51}), it follows that $g(\beta)>-\infty$. Lemma \ref{lemma:a2} is thus proved.
\end{proof}

From Lemma \ref{lemma:a2}, it is evident that if $g(\beta)=-\infty$, then there does not exist any fixed point of $\lambda_i^{\star} \ge 0, \forall i\in\mathcal{K_I}$ satisfying $\lambda_i^{\star}=\mathtt{m}_i\left(\{\lambda_i^{\star}\}\right), \forall i \in \mathcal{K_I}$. In this case, since the fixed point iteration leads to an element-wise monotonically decreasing sequence of  $\{\lambda_i^{(n)}\}$, it will always result in a sufficiently small $\{\lambda_i^{(n)}\}$ with $\sum\limits_{k \neq i,k\in\mathcal{K_I}}\lambda_k^{(n)}\mv{h}_k^H\mv{h}_k + \beta\mv{I}-\mv{G}\nsucceq \mv{0}$ for some $i\in\mathcal{K_I}$, given that $\beta\mv{I}-\mv{G}\nsucceq \mv{0}$. Thus, Proposition \ref{proposition:4.5} is proved.

\end{document}